\newtheorem{theorem}{Theorem}[section]
\newtheorem{lemma}[theorem]{Lemma}
\newtheorem{corollary}[theorem]{Corollary}
\newtheorem{proposition}[theorem]{Proposition}
\newtheorem{definition}[theorem]{Definition}
\newtheorem{conjecture}[theorem]{Conjecture}
\newcommand{\bra}{\langle}
\newcommand{\ket}{\rangle}
\newcommand{\upchi}{\raise1pt\hbox{$\chi$}}
\newcommand{\R}{{\mathord{\mathbb R}}}
\newcommand{\Z}{{\mathord{\mathbb Z}}}
\begin{document}

\title[The Random Displacement Model]{Spectral Properties of the Discrete Random Displacement Model}

\author{Roger Nichols}

\address{Department of Mathematics, University of Alabama at Birmingham, Birmingham, AL~35294, USA, and, Department of Mathematics,
University of Missouri, Columbia, MO 65211, USA (starting September
1, 2010)}

\email{rnich02@math.uab.edu}

\author{G\"unter Stolz}

\address{Department of Mathematics, University of Alabama at Birmingham, Birmingham, AL~35294, USA}

\email{stolz@math.uab.edu}

\thanks{G.\ S.\ was supported in part by NSF grant DMS-0653374.}

\maketitle

\begin{abstract}

 We investigate spectral
properties of a discrete random displacement model, a Schr\"odinger
operator on $\ell^2(\Z^d)$ with potential generated by randomly
displacing finitely supported single-site terms from the points of a
sublattice of $\Z^d$. In particular, we characterize the upper and
lower edges of the almost sure spectrum. For a one-dimensional model
with Bernoulli distributed displacements, we can show that the
integrated density of states has a $1/\log^2$-singularity at external
as well as internal band edges.

\end{abstract}

\section{Introduction}

While the Anderson model is by far the most studied model of a
random Schr\"odinger operator and still poses open problems, recent
years have seen an increased interest in other types of random
operators such as the Poisson model, the random displacement model,
or Schr\"odinger operators involving random magnetic fields. Some recent references are \cite{GHK1, GHK2, BLS1, BLS2, Bourgain}. While there are good physical reasons to look
at these models, an equally strong mathematical motivation for their
investigation stems from challenges arising due to the lack of
monotonicity properties in these models. Such properties have been
heavily used in the theory of Anderson-type models. However, even in
the Anderson model, non-monotonicity issues arise if one considers
single-site terms which are not sign-definite, see, e.g.\, \cite{KN1, KN, Veselic}.

Absence of monotonicity requires new ideas, which, besides posing a
mathematical challenge, often require a better understanding of
physical mechanisms, typically in the form of a more subtle
interplay between kinetic and potential energy. In particular, this
has become apparent in the recent works \cite{BLS1, BLS2, KLNS} on
the continuum random displacement model (RDM), a random
Schr\"odinger operator of the form
\[ H_{\omega} = -\Delta + \sum_{n\in \Z^d} q(x-n-\omega_n)\]
in $L^2(\R^d)$ with single-site terms $q$ displaced randomly by
vectors $\omega = (\omega_n)_{n\in \Z^d}$ from the sites $n$ of the
lattice $\Z^d$. It was found that the lack of monotonicity can be
widely remedied by symmetry considerations as long as one assumes
corresponding symmetry properties of the single-site potential. This
has led to key insights for the RDM, such as a characterization of
the spectral minimum, properties of the integrated density of states
and a Wegner estimate at low energy, ultimately leading to a proof of localization
near the bottom of the spectrum for the continuum RDM in
\cite{KLNS}.

Our main goal here is to provide analogues of the results in
\cite{BLS1} and \cite{BLS2} for a {\it discrete} version of the RDM.
While we largely succeed in this attempt, in some instances we fall short of
carrying over results from the continuum, which is
mostly due to a well known problem for lattice operators, the lack
of unique continuation properties which are frequently used in
\cite{BLS1, BLS2}. In particular, this is the reason why a proof of
localization for models like the one considered here, just as for the discrete Anderson model with singularly distributed coupling constants, is still out of
reach, see the remarks in Section~\ref{sec:conclusion}.

However, this shortcoming does not affect the 1D case, where we
recover all the results from the continuum. In fact, for $d=1$ we
find new phenomena, not encountered in the continuum. In particular,
for a one-dimensional RDM with Bernoulli distributed random
displacements we find a gap in the almost sure spectrum and,
in the case of symmetric distribution, are able to investigate the
behavior of the integrated density of states (IDS) at the spectral minimum and maximum as well as at the edges of this gap. At all these edges the IDS has a $1/\log^2$-singularity and, in particular, is not H\"older continuous.

We mention that localization at all energies for the one-dimensional discrete RDM has been proven in \cite{DSS}, where the more general setting of {\it random word models} was considered. This is based on showing that the Lyapunov exponent is positive at all but an at most finite set of critical energies, which may give rise to quantum transport for wave packets with energy support close to the critical energies, while it does not inhibit spectral localization and also leads to dynamical localization away from the critical energies.

The remaining sections of this paper are structured as follows: In Section~\ref{sec:modelresults} we introduce the discrete displacement model and state all our results. Here we also formulate a discrete version of the fact that ``bubbles tend to the corners'', a result originally proven in \cite{BLS1} for the continuum RDM which provides a central tool for our work. Sections~\ref{sec:multiDresults}, \ref{sec:nonunique} and \ref{sec:Bernoulli} contain all proofs, with Section~\ref{sec:multiDresults} giving results which hold in arbitrary dimension and Sections~\ref{sec:nonunique} and \ref{sec:Bernoulli} presenting proofs of results for the one-dimensional model. In particular, we discuss the one-dimensional Bernoulli RDM in Section~\ref{sec:Bernoulli}. Section~\ref{sec:conclusion} contains concluding remarks including several open conjectures, partly based on numerical observations presented there.

\section{Model and Results} \label{sec:modelresults}

\subsection{Basics}

We will construct a random potential on $\Z^d$, $d\ge 1$, by randomly placing single-site terms supported in a rectangular box into translates of a larger box. The two basic boxes are
    \begin{displaymath}
      \Lambda:=\prod_{i=1}^d[1,M_i]\subset \mathbb{Z}^d \qquad \textrm{and} \qquad B:=\prod_{i=1}^d[1,b_i] \subset \mathbb{Z}^d
    \end{displaymath}
    where $1\leq b_i\leq M_i$ and $b_i,M_i\in \mathbb{N}$ are fixed for each $i\in \{1,\ldots, d\}$.  As single site potential we choose a function $q:\Z^d \to \R$ which is supported in $B$. We shall always assume the hypothesis:

\begin{quote}
($\mathbf{H1}$) The single-site potential $q$ is reflection symmetric in each variable in the sense that
\begin{displaymath}
q(x_1,\ldots,x_{i-1},x_i,x_{i+1},\ldots,x_d)=q(x_1,\ldots,x_{i-1},b_i-x_i+1,x_{i+1},\ldots,x_d)
\end{displaymath}
for each $x=(x_j)_{j=1}^d$ and all $i\in \{1,\ldots,d\}$.
\end{quote}

We denote the translate of $q$ by $a\in \Z^d$ as $q_a$, that is $q_a(n)=q(n-a)$, $n\in \Z^d$. We will generally require that the support of $q_a$ remains in $\Lambda$, meaning
\[ a \in \Delta := \prod_{i=1}^d [0,M_i-b_i].\]

As kinetic energy operator we choose the (negative) discrete Laplacian $h_0$ on $\ell^2(\mathbb{Z}^d)$, that is, for $u\in \ell^2(\mathbb{Z}^d)$,
\begin{displaymath}
(h_0u)(n)=-\sum_{\substack{m\in \mathbb{Z}^d \\ |m-n|=1}}u(m),
\end{displaymath}
where $|k|$ denotes the $1$-norm of a vector $k\in \mathbb{Z}^d$.

\subsection{Bubbles Tend to the Corners}

As in \cite{BLS1}, a key ingredient into our investigations of the RDM will be given by a property of the single-site operators $h_{0,\Lambda}^N+q_a$ on $\ell^2(\Lambda)$, where $h_{0,\Lambda}^N$ denotes the (discrete) Neumann Laplacian on $\Lambda$ (see Section~\ref{sec:NeuLap} below for a precise definition of the Neumann Laplacian). Define
\begin{equation}
E_0(a):=\min \sigma(h_{0,\Lambda}^N+q_a), \ a\in \Delta.
\end{equation}
Hypothesis $(\mathbf{H1})$ implies that $E_0(\cdot)$ is reflection symmetric on $\Delta$, i.e.
\begin{equation}
E_0(a_1,\ldots,a_{i-1},a_i,a_{i+1},\ldots, a_d)=E_0(a_1,\ldots,a_{i-1},M_i-b_i-a_i,a_{i+1},\ldots,a_d)
\end{equation}
for all $i\in \{1,\ldots,d\}$ and $a=(a_i)_{i=1}^d\in \Delta$.  Thus, $E_0(\cdot)$ is determined by $a\in \prod_{i=1}^d[r_i,M_i-b_i]$, where $r_i$ is the least integer greater than or equal to $(M_i-b_i)/2$.

\begin{theorem}\label{thm: bubbles}
Fix $i\in \{1,\ldots,d\}$ and $a_j\in [r_j,M_j-b_j]$, $j\in \{1,\ldots,d\}\setminus \{i\}$ and suppose $(\mathbf{H1})$ holds.  If either
\begin{enumerate}
\item[(i)] $q\neq 0$ is sign-definite, or
\item[(ii)] $d=1$ and $E_0(a)\neq-2$ for at least one $a\in \Delta$,
\end{enumerate}
then $E_0(a)$ is strictly decreasing as a function of $a_i$ on $[r_i,M_i-b_i]$.
\end{theorem}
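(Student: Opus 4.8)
The plan is to prove monotonicity of $E_0$ in the coordinate $a_i$ by a reflection/comparison argument à la "bubbles tend to the corners." Fix all other coordinates $a_j$, $j\neq i$, and think of $E_0$ as a function of the single variable $t:=a_i$ ranging over $[r_i,M_i-b_i]$. The key is to compare the ground state energy of $h_{0,\Lambda}^N+q_a$ at displacement $t$ with that at displacement $t+1$ (or, more flexibly, at $t$ and at a reflected position $M_i-b_i-t$). I would set up the Neumann Laplacian on $\Lambda$ in product form, so that $h_{0,\Lambda}^N$ separates as a sum over the $d$ coordinate directions; this lets me isolate the $i$-th direction as a one-dimensional Neumann Laplacian on $[1,M_i]$ and treat the remaining directions as spectators. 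Since $q$ is supported in a translate of $B$, only finitely many sites in the $i$-th slab carry potential.

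The main tool I expect to use is the following type of variational estimate: let $\psi_t\geq 0$ be the (Perron–Frobenius) ground state at displacement $t$, normalized, and use a reflected or shifted test function $\widetilde{\psi}$ built from $\psi_t$ as a trial state for the operator at displacement $t+1$. One wants to show $\langle \widetilde{\psi}, (h_{0,\Lambda}^N+q_{a'})\widetilde{\psi}\rangle < E_0(t)$, which forces $E_0(t+1)<E_0(t)$. The ground state $\psi_t$ can be taken strictly positive on $\Lambda$ (the operator $-h_{0,\Lambda}^N$ is a nonnegative-off-diagonal/irreducible matrix after a sign flip, so Perron–Frobenius applies), and strict positivity is what upgrades a non-strict inequality obtained by a reflection symmetry of the quadratic form into the strict inequality claimed. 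Concretely, I would fold $\psi_t$ across the hyperplane $x_i = $ (appropriate half-integer) to produce a symmetric competitor; the Neumann boundary conditions are exactly what make such folding/reflection not increase the Dirichlet form, and the potential term changes by $\sum_n (q_{a'}(n)-q_a(n))|\widetilde\psi(n)|^2$, which one estimates using $(\mathbf{H1})$ together with sign-definiteness of $q$ (case (i)). In case (i), sign-definiteness guarantees that moving the bubble toward the corner strictly lowers the weighted potential integral against a strictly positive state, giving strict monotonicity directly.

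Case (ii), where $d=1$ and $q$ need not be sign-definite but $E_0(a)\neq -2$ for some $a$, is where I expect the real obstacle. Here the crude sign argument fails, and one must use the one-dimensional structure more seriously: transfer matrices or an explicit analysis of the Neumann eigenvalue equation on the interval $[1,M_1]$, where the recursion $-u(n-1)-u(n+1)+q_a(n)u(n)=E u(n)$ holds with Neumann conditions at the ends. The condition $E_0(a)\neq -2$ is precisely the condition that the ground energy is not at the bottom of the free spectrum $\sigma(h_0)=[-2,2]$, i.e. that the bubble actually binds a genuinely localized state; this should be what drives the ground state to "feel" its position relative to the boundary and hence vary strictly with $t$. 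I would analyze how the Neumann ground eigenvalue depends on the location of the finitely-supported perturbation within the interval, likely via a Birman–Schwinger-type or Feynman–Hellmann-type identity: differentiate (discretely) $E_0$ in the displacement, express the increment as a combination of products $u(n)u(n+1)$ of consecutive ground-state values weighted by the differences of $q$, and show it cannot vanish unless $E_0=-2$. Establishing that this discrete derivative has a definite sign on all of $[r_1,M_1-b_1]$, and ruling out plateaus, will be the technical heart of the argument; the Neumann (rather than Dirichlet) boundary condition and the reflection symmetry of $q$ from $(\mathbf{H1})$ are the structural inputs I would lean on to pin down the sign.
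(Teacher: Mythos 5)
Your plan does not close the central gap, and in fact it locates the difficulty in the wrong term. The assertion carrying all the weight in case (i) --- that ``sign-definiteness guarantees that moving the bubble toward the corner strictly lowers the weighted potential integral against a strictly positive state'' --- is exactly the non-monotonicity problem the theorem is about, and it is not justified: for a fixed trial state you do not know a priori on which side of the fold the ground state is larger, so the sign of $\sum_n (q_{a'}(n)-q_a(n))|\widetilde\psi(n)|^2$ is not determined by the sign of $q$. The actual mechanism is quite different. One reflects $u_{a_1}$ and $\widehat q_{a_1}$ symmetrically onto the doubled box $\Lambda^*=[1,2M_1]\times\prod_{i\ge 2}[1,M_i]$, extends periodically, and applies a unit shift $R$ in the first coordinate. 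Because the doubled potential is reflection symmetric and vanishes on the slabs $x_1=1$ and $x_1=2M_1$, the potential contribution to the form difference cancels \emph{exactly}; what remains is the purely kinetic quantity $-\sum_{(x_2,\dots,x_d)}\bigl(u_{a_1}(1,x_2,\dots,x_d)-u_{a_1}(2,x_2,\dots,x_d)\bigr)^2\le 0$. Hypotheses (i) and (ii) play one and the same role: each rules out $E_0=-2d$, which is the only way this kinetic term can vanish (if it vanishes, the restriction of $u_{a_1}$ to the slab $x_1=1$ is the free Neumann ground state). So your proposed heavy machinery for case (ii) (transfer matrices, a Feynman--Hellmann identity, ruling out plateaus) is not needed; in $d=1$ one only checks that $E_0(a)=-2$ for one $a$ forces it for all $a$, since the ground state is then constant off $\supp q_a$ and can be translated with the potential.

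A second missing ingredient: even after the reflection-and-shift step one does \emph{not} get $E_0(a_1+1)<E_0(a_1)$ directly. Neumann bracketing of $h^N_{0,\Lambda^*}+\widehat q_{a_1,R}$ into the two halves only yields
\begin{equation*}
E_0(a_1,\dots)>\min\{E_0(a_1-1,\dots),\,E_0(a_1+1,\dots)\},
\end{equation*}
a three-point inequality. Turning this into strict monotonicity requires an induction on $a_1$ anchored at the midpoint $r_1$, where the reflection symmetry $E_0(a_1)=E_0(M_1-b_1-a_1)$ identifies (or relates) $E_0(r_1-1)$ with $E_0(r_1+1)$ or $E_0(r_1)$ and thereby resolves the minimum. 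Your sketch has no analogue of this step, and without it the comparison argument cannot exclude that $E_0$ first increases and then decreases. Finally, note that $h^N_{0,\Lambda}+q_a$ does not separate into one-dimensional pieces once $q$ is added (the single-site potential is not a product), so the reduction to a one-dimensional problem in the $i$-th coordinate that you invoke at the outset is not available; the paper works with the full $d$-dimensional operator and reflects only in the distinguished coordinate.
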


This holds in each variable, meaning, in particular, that $E_0(\cdot)$ attains strict minima in the $2^d$ corners of $\Delta$. The number $-2$ appears in (ii) as the spectral minimum of $h_{0,\Lambda}^N$ in $d=1$. See Section~\ref{sec:discussion} for a discussion of the relevance of assumptions (i), (ii) as well as for a comparison with the corresponding result in the continuum proven in \cite{BLS1}.

\subsection{The Discrete Displacement Model}

We now construct potentials on $\Z^d$ by tiling $\Z^d$ with translates of $\Lambda$ and placing one copy of $q$ into each tile. More precisely, if
\[ \Omega := \Delta^{\Z^d}\]
is the set of all possible displacement configurations and $\omega = (\omega_k)_{k\in \Z^d} \subset \Omega$, then $V_{\omega}:\mathbb{Z}^d\rightarrow \mathbb{R}$ is defined by
\begin{equation}\label{eqn: random potential}
V_{\omega}:=\sum_{k\in \mathbb{Z}^d}q_{kM+\omega_k},
\end{equation}
where $kM:=(k_iM_i)_{i=1}^d$ for $k=(k_i)_{i=1}^d$ and $M=(M_i)_{i=1}^d$.  For each $\omega \in \Omega$,
\begin{equation}\label{eqn: SO}
h_{\omega}:=h_0+V_{\omega}
\end{equation}
defines a bounded self-adjoint operator on $\ell^2(\mathbb{Z}^d)$.

The main challenge in understanding the displacement model lies in its non-monotonicity (in form sense) in the displacement parameters $\omega_n$. For example, it is not immediately clear which configurations minimize (or maximize) the spectrum of $h_{\omega}$. It is our first goal to answer this question.

The family $h_{\omega}$ is uniformly bounded in $\omega \in \Omega$, $\|h_{\omega}\|\leq 2d + \sup_{n\in B}|q(n)|$. Therefore,
\[
E_{min}:=\inf_{\omega\in \Omega} \min \sigma(h_{\omega})>-\infty, \quad
E_{max}:=\sup_{\omega\in \Omega} \max \sigma(h_{\omega})<\infty.
\]

We are concerned with the existence of displacement configurations $\omega$ which are spectrally minimizing in the sense that
\[ \min \sigma(h_{\omega}) = E_{min}\]
or spectrally maximizing,
\[ \max \sigma(h_{\omega}) = E_{max}.\]

Under suitable additional assumptions, the answer to both questions will be given by the configuration $\omega^* =(\omega^*_k)_{k\in \mathbb{Z}^d}\in \Omega$ defined by
\begin{displaymath}
(\omega^{\ast}_k)_i=\left\{ \begin{array}{cc}
0 & \textrm{if} \ k_i \ \textrm{is even}\\
M_i-b_i & \textrm{if} \ k_i \ \textrm{is odd}
\end{array}\right.
\end{displaymath}
for $i\in \{1,\ldots,d\}$. This is the periodic configuration in which clusters of $2^d$ single-site terms are placed into adjacent corners of their supporting tiles, see Figure~\ref{clusteringpic}.

\begin{theorem}\label{thm: omega star}
Suppose $(\mathbf{H1})$ holds.

(a) If either,
\begin{itemize}
\item[(i)] $d\geq2$ and $q$ is sign-definite, or
\item[(ii)] $d=1$ and $E_0(a)\not= -2$ for at least one $a\in \Delta$,
\end{itemize}
 then the displacement configuration $\omega^*$ is spectrally minimizing.

(b) Let $\tilde{E}_0(a) := \min \sigma(h_{0,\Lambda}^N-q_a)$. If either (i) $d\geq2$ and $q$ is sign-definite, or (ii) $d=1$ and $\tilde{E}_0(a)\not= -2$ for at least one $a\in \Delta$, then $\omega^*$ is spectrally maximizing.

\end{theorem}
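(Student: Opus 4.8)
The plan is to reduce both parts to a single-site analysis via a Neumann bracketing (decoupling) argument, exploiting the periodic tiling of $\Z^d$ by translates of $\Lambda$. First I would observe that for part (a) the lower bound $\min\sigma(h_\omega)\ge E_{min}$ is trivial, so the content is showing $\min\sigma(h_{\omega^*})\le\min\sigma(h_\omega)$ for every $\omega$, i.e. that $\min\sigma(h_{\omega^*})=E_{min}$. Imposing Neumann boundary conditions on the boundaries of the tiles $kM+\Lambda$, $k\in\Z^d$, can only decrease the infimum of the spectrum: $h_\omega\ge \bigoplus_{k\in\Z^d}(h^N_{0,\Lambda}+q_{\omega_k})$ in form sense (here I would want the precise comparison $h_0\ge \bigoplus_k h^N_{0,\Lambda}$ for the discrete Laplacian, which is where a careful definition of the Neumann Laplacian from Section~\ref{sec:NeuLap} is needed — dropping the ``hopping'' terms across tile boundaries is a positive perturbation on the quadratic form). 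Hence $\min\sigma(h_\omega)\ge \inf_{k}\min\sigma(h^N_{0,\Lambda}+q_{\omega_k})=\inf_{a\in\Delta}E_0(a)$.

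Next I would show that this lower bound $\inf_{a\in\Delta}E_0(a)$ is in fact attained by $h_{\omega^*}$, i.e. $\min\sigma(h_{\omega^*})=\inf_{a\in\Delta}E_0(a)=E_0(a_{\mathrm{corner}})$ where $a_{\mathrm{corner}}$ is any corner of $\Delta$. This is the step where the symmetry hypothesis ($\mathbf{H1}$) and ``bubbles tend to the corners'' (Theorem~\ref{thm: bubbles}) enter: under assumption (i) ($d\ge 2$, $q$ sign-definite) or (ii) ($d=1$, $E_0\not\equiv -2$ on $\Delta$), Theorem~\ref{thm: bubbles} gives that $E_0(\cdot)$ is strictly decreasing in each variable on $\prod_i[r_i,M_i-b_i]$, so its minimum over $\Delta$ is attained precisely at the corners, all of which give the same value by the reflection symmetry \eqref{...} of $E_0$. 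For the periodic configuration $\omega^*$ the operator $h_{\omega^*}$ commutes with the translations by $2M_ie_i$, so it has a direct-integral (Bloch–Floquet) decomposition over a torus, and one shows $\min\sigma(h_{\omega^*})$ is computed by a fibered operator on a fundamental cell consisting of $2^d$ adjacent tiles. The point of the clustering in the definition of $\omega^*$ is that within this $2^d$-tile cell the $2^d$ single-site bumps sit in the shared corner, so by the reflection symmetry the ground state of $h^N_{0,\Lambda}+q_{a_{\mathrm{corner}}}$ (extended by the reflection-symmetric Neumann ground states) glues together across the internal tile boundaries with matching (zero-flux) boundary data to produce a genuine eigenfunction, or at least an approximate one, of the fibered operator at the bottom fiber; this matches the Neumann lower bound and forces equality $\min\sigma(h_{\omega^*})=E_0(a_{\mathrm{corner}})=\inf_{a\in\Delta}E_0(a)$. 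Combining with the previous paragraph yields $\min\sigma(h_{\omega^*})\le\min\sigma(h_\omega)$ for all $\omega$, proving (a).

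For part (b), I would note that replacing $q$ by $-q$ turns spectral maxima into spectral minima after an additional reflection of the spectrum. Concretely, $-h_0$ is unitarily equivalent to $h_0$ (the standard bipartite/chessboard unitary $u(n)\mapsto(-1)^{|n|}u(n)$ on $\Z^d$ maps $h_0$ to $-h_0$), so $h_\omega = h_0+V_\omega$ is unitarily equivalent to $-(h_0 - V_\omega)= -(h_0 + (-q)$-potential$)$, whence $\max\sigma(h_\omega)=-\min\sigma(h_0 + \sum_k(-q)_{kM+\omega_k})$. Applying part (a) to the single-site potential $-q$ — which still satisfies ($\mathbf{H1}$), and for which $E_0$ is replaced by $\tilde E_0(a)=\min\sigma(h^N_{0,\Lambda}-q_a)$ — the hypotheses in (b) are exactly the hypotheses of (a) for $-q$, and the minimizing configuration for $-q$ is the same $\omega^*$ (the clustering construction does not see the sign of $q$). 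Therefore $\omega^*$ maximizes $\max\sigma(h_\omega)$, giving (b).

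I expect the main obstacle to be the second paragraph: pinning down that the Neumann lower bound $\inf_{a\in\Delta}E_0(a)$ is genuinely achieved by the periodic clustered configuration $\omega^*$ rather than merely bounded by it. The subtlety is that the Neumann decoupling used for the lower bound is lossy in general, and one must verify that, precisely for the corner-clustered configuration, the Neumann ground states on adjacent tiles paste together — using the reflection symmetry across each shared face forced by ($\mathbf{H1}$) — to give a test function whose energy for $h_{\omega^*}$ equals the decoupled bound. In $d=1$ this gluing is transparent (two tiles, one shared endpoint, symmetric Neumann eigenfunctions reflect into each other); in $d\ge 2$ it requires checking the matching along all interior faces of the $2^d$-cell simultaneously, which is the place where sign-definiteness of $q$ is actually used to rule out the exceptional behavior left open by Theorem~\ref{thm: bubbles}(ii).
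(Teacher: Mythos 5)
Your proposal follows essentially the same route as the paper: Neumann bracketing for the lower bound $\inf\sigma(h_\omega)\ge\min_{a\in\Delta} E_0(a)$, Theorem~\ref{thm: bubbles} to place that minimum at a corner $a_{min}$, reflection-symmetric gluing of the corner Neumann ground state across the $2^d$-tile cell to show $\omega^*$ attains it, and the chessboard involution $(U\phi)(n)=(-1)^{\sum_i|n_i|}\phi(n)$ for part (b). The one place you diverge is the final conversion step: where you invoke a Bloch--Floquet decomposition and hedge with ``a genuine eigenfunction, or at least an approximate one,'' the paper avoids this entirely by extending the glued function $2M_i$-periodically to a strictly positive bounded solution of $h_{\omega^*}\psi=E_0(a_{min})\psi$ on all of $\Z^d$ and applying the discrete Allegretto--Piepenbrink and Schnol theorems, which give $\inf\sigma(h_{\omega^*})=E_0(a_{min})$ exactly with no approximation argument; your Floquet route would also work (via the periodic restriction to the $2^d$-cell and positivity of its ground state) but must be carried out exactly, not approximately. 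One small correction to your closing remark: sign-definiteness of $q$ (resp.\ $E_0\not\equiv-2$ in $d=1$) enters only through Theorem~\ref{thm: bubbles}, to guarantee $E_0(a)\neq-2d$ and hence strict decrease toward the corners; the gluing across the interior faces uses only the reflection symmetry $(\mathbf{H1})$ and works for any such $q$.
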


    \begin{figure}
      \centering
      \includegraphics[width=.55\textwidth]{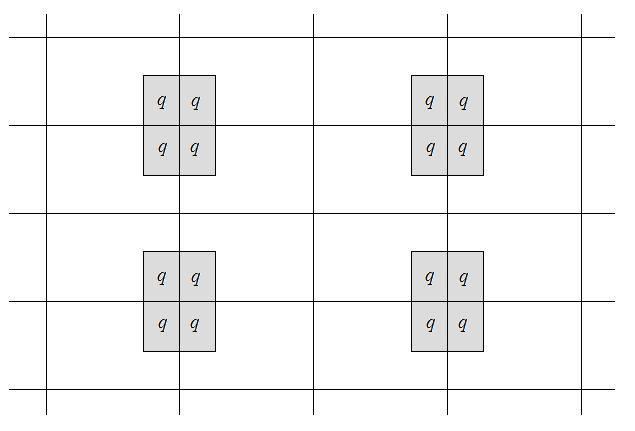}
      \caption[Extremal Potential, $V_{\omega^{\ast}}$, $d=2$]{The potential $V_{\omega^{\ast}}$ corresponding to the extremal configuration $\omega^{\ast}$ in $d=2$ depicting the clustering of neighboring single-site potentials.  Shaded areas represent the support of copies of the single-site potential $q$.}
      \label{clusteringpic}
    \end{figure}

\subsection{The Random Displacement Model}

The displacement model is called \textit{random} if $\omega=(\omega_k)_{k\in \mathbb{Z}^d}\in \Omega$ is a collection of i.i.d. $\Delta$-valued random variables.  That is, the $\omega_k$, $k\in \Z^d$, are independent, and
\begin{displaymath}
\mathbb{P}(\omega_k=n)=\mu(\{n\}),
\end{displaymath}
for a given (fixed) distribution $\mu$ on $\Delta$, meaning $\mu(\{n\})\geq 0$ for all $n\in \Delta$ and $\sum_{n\in \Delta}\mu(\{n\})=1$. $\mathbb{P}$ is realized as the infinite product measure, indexed by $k\in \Z^d$, of the measures $\mu$ on $\Delta$. By $\mathbb{E}$ we denote the expectation with respect to $\mathbb{P}$.

In this case, the random displacement model is ergodic with respect to shifts $(T_j \omega)_n = \omega_{n+j}$ on $\Omega$ and therefore, in particular, has deterministic spectrum.  Thus, there exists a closed set $\Sigma\subset \mathbb{R}$ such that $\sigma(h_{\omega})=\Sigma$ almost surely.  In fact, we can characterize $\Sigma$ in terms of the following ``periodic support theorem''. This is essentially a special case of results presented in \cite{KirschSO}, needing only slight adaptations due to the fact that our model is ergodic with respect to a sublattice of $\Z^d$.

\begin{theorem}\label{thm: deterministic spectrum}
For the random displacement model, one may take
\begin{equation}\label{eqn: deterministic spectrum}
\Sigma=\overline{\bigcup_{\omega \in \mathcal{C}_{per}}\sigma(h_{\omega})},
\end{equation}
where
\begin{displaymath}
\mathcal{C}_{per}=\{\omega \in \Omega: \omega \ \textrm{is periodic}, \ \mu(\omega_k)>0 \ \forall k\in \mathbb{Z}^d\}.
\end{displaymath}
\end{theorem}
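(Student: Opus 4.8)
The statement asserts an equality of sets, so the plan is to prove the two inclusions $\overline{\bigcup_{\omega\in\mathcal{C}_{per}}\sigma(h_\omega)}\subseteq\Sigma$ and $\Sigma\subseteq\overline{\bigcup_{\omega\in\mathcal{C}_{per}}\sigma(h_\omega)}$ separately, in both cases by Weyl-sequence arguments adapted to the lattice. Two preliminary facts are needed. First, since $h_0$ has range $1$ and $V_\omega$ is a bounded multiplication operator, $\lambda\in\sigma(h_\omega)$ if and only if there is a \emph{finitely supported} normalized sequence $(\psi_n)$ with $\|(h_\omega-\lambda)\psi_n\|\to0$, and one may take $\supp\psi_n$ to be a union of tiles $kM+\Lambda$: given any Weyl sequence, truncate $\psi_n$ to a box $\Lambda_L$ which is a union of tiles and is large enough that both the $\ell^2$-tail of $\psi_n$ outside $\Lambda_L$ and the commutator term $[h_0,\chi_{\Lambda_L}]\psi_n$ are as small as desired. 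Second, the purely combinatorial observation: if configurations $\omega,\omega'$ and a vector $j\in\Z^d$ satisfy $\omega'_{k+j}=\omega_k$ for every tile index $k$ whose tile $kM+\Lambda$ meets a finite region $S$, then, since each $q_a$ with $a\in\Delta$ is supported inside a single tile, $V_{\omega'}$ coincides on $S$ with the $jM$-translate of $V_\omega$; as $h_0$ commutes with translations, $h_{\omega'}-U_{jM}h_\omega U_{jM}^{*}$ is a multiplication operator vanishing on $S$, hence annihilates any function supported in $S$.

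For $\overline{\bigcup_{\omega\in\mathcal{C}_{per}}\sigma(h_\omega)}\subseteq\Sigma$, fix $\omega\in\mathcal{C}_{per}$ and $\lambda\in\sigma(h_\omega)$; it suffices to show $\lambda\in\sigma(h_{\omega'})$ for $\mathbb{P}$-a.e.\ $\omega'$, since intersecting this event with the full-measure set on which $\sigma(h_{\omega'})=\Sigma$ then forces $\lambda\in\Sigma$. Pick a finitely supported Weyl sequence $(\psi_n)$ for $h_\omega$ at $\lambda$ and let $K_n\subset\Z^d$ be the finite set of tile indices $k$ with $\supp\psi_n\cap(kM+\Lambda)\neq\emptyset$. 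For each $n$ consider the events $A_{j,n}:=\{\omega':\omega'_{k+j}=\omega_k\ \text{for all}\ k\in K_n\}$, $j\in\Z^d$. Since $\omega\in\mathcal{C}_{per}$, $\mu(\{\omega_k\})>0$ for every $k$, so $\mathbb{P}(A_{j,n})=\prod_{k\in K_n}\mu(\{\omega_k\})$ is a positive constant independent of $j$; restricting $j$ to a sufficiently coarse sublattice makes the translates $K_n+j$ pairwise disjoint and the $A_{j,n}$ independent, so the second Borel--Cantelli lemma gives a full-measure set of $\omega'$ lying in $A_{j,n}$ for infinitely many $j$. Intersecting over $n\in\N$, for a.e.\ $\omega'$ there is for each $n$ some $j_n$ with $\omega'_{k+j_n}=\omega_k$ on $K_n$, and then the bookkeeping observation yields $\phi_n:=U_{j_nM}\psi_n$ with $\|(h_{\omega'}-\lambda)\phi_n\|=\|(h_\omega-\lambda)\psi_n\|\to0$, so $\lambda\in\sigma(h_{\omega'})$. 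As $\Sigma$ is closed, letting $\omega$ and $\lambda$ vary completes this inclusion.

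For $\Sigma\subseteq\overline{\bigcup_{\omega\in\mathcal{C}_{per}}\sigma(h_\omega)}$, fix one configuration $\omega_0$ lying in the full-measure set on which simultaneously $\sigma(h_{\omega_0})=\Sigma$ and $\omega_{0,k}\in\supp\mu$ for all $k$ (here $\supp\mu=\{n\in\Delta:\mu(\{n\})>0\}$ since $\Delta$ is finite). Given $\lambda\in\Sigma=\sigma(h_{\omega_0})$ and $\varepsilon>0$, choose a finitely supported normalized $\psi$ with $\supp\psi$ a union of tiles and $\|(h_{\omega_0}-\lambda)\psi\|<\varepsilon$, and define $\omega^{(\varepsilon)}\in\Omega$ by setting $\omega^{(\varepsilon)}_k=\omega_{0,k}$ for every tile index $k$ meeting $\supp\psi$ and extending periodically. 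Then $\omega^{(\varepsilon)}\in\mathcal{C}_{per}$, because each of its coordinate values equals some $\omega_{0,k}\in\supp\mu$ and hence has positive $\mu$-mass; and $V_{\omega^{(\varepsilon)}}$ agrees with $V_{\omega_0}$ on $\supp\psi$, so $\|(h_{\omega^{(\varepsilon)}}-\lambda)\psi\|=\|(h_{\omega_0}-\lambda)\psi\|<\varepsilon$, whence $\mathrm{dist}\big(\lambda,\sigma(h_{\omega^{(\varepsilon)}})\big)<\varepsilon$. Letting $\varepsilon\downarrow0$ gives $\lambda\in\overline{\bigcup_{\omega\in\mathcal{C}_{per}}\sigma(h_\omega)}$.

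I do not expect a genuine obstacle; this is a routine adaptation of the ``periodic support'' principle for ergodic Schr\"odinger operators from \cite{KirschSO}. The only step requiring care is the first inclusion, where the i.i.d.\ product structure of $\mathbb{P}$ and the positivity condition $\mu(\omega_k)>0$ built into $\mathcal{C}_{per}$ enter exactly through the second Borel--Cantelli lemma, and where one must track finite-range effects of $h_0$ together with the fact that each $q_a$ sits inside a single tile so that the translated trial functions $\phi_n$ really are approximate eigenfunctions of $h_{\omega'}$. This is also the only place where ergodicity with respect to the sublattice $M\Z^d$ rather than $\Z^d$ is visible, namely in that the relevant translations run over $M\Z^d$.
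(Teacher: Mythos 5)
Your proposal is correct; the paper itself gives no proof of this theorem, deferring to the standard ``periodic support'' argument of \cite{KirschSO}, and what you have written is a correct, self-contained rendering of exactly that argument (finitely supported Weyl sequences, second Borel--Cantelli for the inclusion $\overline{\bigcup_{\omega\in\mathcal{C}_{per}}\sigma(h_\omega)}\subseteq\Sigma$, and periodization of a $\mathbb{P}$-typical configuration for the reverse inclusion), with the only adaptation the paper flags --- ergodicity with respect to the sublattice $M\Z^d$ and the fact that each $q_a$ sits inside a single tile --- handled correctly. No gaps.
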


Here we call a displacement configuration $\omega$ {\it periodic} if $\omega_{n+jK} = \omega_n$ for all $n, j \in \Z^d$ and a fixed vector $K=(K_i)_{i=1}^d$ with positive integer components $K_i$.

A vector $n\in \Delta$ is called a corner of $\Delta$ if $n_i\in \{0,M_i-b_i\}$ for all $i\in \{1,\ldots,d\}$.  Throughout this paper, we make the following assumption on the distribution $\mu$:

\begin{quote}
($\mathbf{H2}$) The distribution $\mu$ satisfies $\mu(n)>0$ for all corners $n\in \Delta$.
\end{quote}

Applying (\ref{eqn: deterministic spectrum}), an easy consequence of Theorem~\ref{thm: omega star} is the following characterization of $\inf \Sigma$ and $\sup \Sigma$:

\begin{corollary}\label{cor: upper and lower edges}
If $(\mathbf{H2})$ and the assumptions of Theorem \ref{thm: omega star} (a) and (b), respectively, hold, then the upper and lower edges of the almost-sure spectrum $\Sigma$ are characterized in terms of $\omega^{\ast}$ by
\begin{eqnarray}
\inf \Sigma&=&\min \sigma(h_{\omega^{\ast}})\nonumber\\
\sup \Sigma&=&\max \sigma(h_{\omega^{\ast}}).\nonumber
\end{eqnarray}
\end{corollary}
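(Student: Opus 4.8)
The plan is to deduce Corollary~\ref{cor: upper and lower edges} from Theorem~\ref{thm: deterministic spectrum} together with Theorem~\ref{thm: omega star}. I will argue the statement for $\inf \Sigma$; the claim for $\sup \Sigma$ is entirely analogous, replacing $q$ by $-q$ and using part (b) of Theorem~\ref{thm: omega star} in place of part (a). The starting point is the identity $\Sigma = \overline{\bigcup_{\omega \in \mathcal{C}_{per}} \sigma(h_\omega)}$ from Theorem~\ref{thm: deterministic spectrum}, which gives $\inf \Sigma = \inf_{\omega \in \mathcal{C}_{per}} \min \sigma(h_\omega)$ (the closure does not change the infimum).

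The lower bound $\inf \Sigma \geq \min \sigma(h_{\omega^\ast})$ is the substantive direction. First I would check that $\omega^\ast \in \mathcal{C}_{per}$: it is periodic (with period vector $K = (2,\ldots,2)$), and since each component of $\omega^\ast_k$ is either $0$ or $M_i - b_i$, every $\omega^\ast_k$ is a corner of $\Delta$, so $\mu(\omega^\ast_k) > 0$ by $(\mathbf{H2})$. Hence $\min \sigma(h_{\omega^\ast}) \geq \inf_{\omega \in \mathcal{C}_{per}} \min \sigma(h_\omega) = \inf \Sigma$, giving the reverse inequality $\inf \Sigma \leq \min \sigma(h_{\omega^\ast})$ for free. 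For the bound $\inf \Sigma \geq \min \sigma(h_{\omega^\ast})$: by Theorem~\ref{thm: omega star}(a), $\omega^\ast$ is spectrally minimizing, i.e. $\min \sigma(h_{\omega^\ast}) = E_{min} = \inf_{\omega \in \Omega} \min \sigma(h_\omega)$. Since $\mathcal{C}_{per} \subset \Omega$, we have $\inf_{\omega \in \mathcal{C}_{per}} \min \sigma(h_\omega) \geq \inf_{\omega \in \Omega} \min \sigma(h_\omega) = \min \sigma(h_{\omega^\ast})$, and thus $\inf \Sigma = \inf_{\omega \in \mathcal{C}_{per}} \min \sigma(h_\omega) \geq \min \sigma(h_{\omega^\ast})$. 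Combining both directions yields $\inf \Sigma = \min \sigma(h_{\omega^\ast})$.

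For $\sup \Sigma$, one runs the same chain with the order reversed. Note that the hypotheses of Theorem~\ref{thm: omega star}(b) (sign-definiteness of $q$ when $d \geq 2$, or $\tilde E_0(a) \neq -2$ for some $a$ when $d=1$) are exactly what is assumed, so $\omega^\ast$ is spectrally maximizing: $\max \sigma(h_{\omega^\ast}) = E_{max} = \sup_{\omega \in \Omega} \max \sigma(h_\omega) \geq \sup_{\omega \in \mathcal{C}_{per}} \max \sigma(h_\omega) = \sup \Sigma$, while $\omega^\ast \in \mathcal{C}_{per}$ gives $\sup \Sigma \geq \max \sigma(h_{\omega^\ast})$.

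There is essentially no serious obstacle here: the corollary is a formal consequence of the two cited theorems once one observes that $\omega^\ast$ lies in $\mathcal{C}_{per}$, which is precisely the role of $(\mathbf{H2})$. The only point requiring a line of care is that $\min \sigma$ (resp. $\max \sigma$) of the deterministic spectrum $\Sigma$ is unaffected by the closure operation in \eqref{eqn: deterministic spectrum} and coincides with the infimum (resp. supremum) of $\min \sigma(h_\omega)$ (resp. $\max \sigma(h_\omega)$) over $\omega \in \mathcal{C}_{per}$; this is immediate since $\Sigma$ is bounded and the closure of a set has the same infimum and supremum.
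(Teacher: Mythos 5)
Your proposal is correct and follows essentially the same route as the paper: the lower bound on $\inf\Sigma$ comes from $\omega^\ast$ being spectrally minimizing over all of $\Omega$ (Theorem~\ref{thm: omega star}), and the upper bound from $\omega^\ast\in\mathcal{C}_{per}$ (via $(\mathbf{H2})$) together with Theorem~\ref{thm: deterministic spectrum}, with the maximizing case handled symmetrically. Your explicit verification that $\omega^\ast$ is $(2,\ldots,2)$-periodic with corner values, and the remark that the closure in \eqref{eqn: deterministic spectrum} does not affect the infimum or supremum, are details the paper leaves implicit but add nothing beyond its argument.
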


\subsection{A Uniqueness Result}

While $\omega^*$ is clearly not the unique spectral minimizer within {\it all} $\omega\in \Omega$ (in fact, we have $\inf \sigma(h_{\omega}) = \inf \Sigma$ for {\it almost every} $\omega \in \Omega$), it makes sense to ask if $\omega^*$ is the unique {\it periodic} minimizer. In dimension one this has a negative answer, but we are able to characterize {\it all periodic} minimizers. This will be used in the proof of our results on the integrated density of states described in the next subsection.

Let $d=1$ and $L\in \mathbb{N}$,  and $S_L$ denote the set of all $L$-periodic displacement configurations, i.e.
\begin{displaymath}
S_L:=\{\omega \in [0,M_1-b_1]^{\mathbb{Z}}:\omega_{k+L}=\omega_k \ \forall k\in \mathbb{Z}\}.
\end{displaymath}
For a displacement configuration $\omega \in [0,M_1-b_1]^{\mathbb{Z}}$ define the numbers
\begin{eqnarray}
n^0(\omega)&=&\# \{k: k\in \{1,\ldots,L\} \ \textrm{and} \ \omega_k=0\}\nonumber\\
n^1(\omega)&=&\# \{k: k\in \{1,\ldots,L\} \ \textrm{and} \ \omega_k=M_1-b_1\}.\nonumber
\end{eqnarray}

\begin{theorem}\label{thm: all 1d minimizers}
Assume that $\min \sigma(h_{0,\Lambda}^N+q_a)\neq -2$ for some $a\in [0,M_1-b_1]$.  Then $\omega \in S_L$ is spectrally minimizing if and only if $L$ is even and $n^0(\omega)=n^1(\omega)=L/2$.

If, on the other hand, $\min \sigma(h_{0,\Lambda}^N-q_a)\neq -2$ for some $a\in [0,M_1-b_1]$, then the same set of configurations characterizes all periodic spectral maximizers.

\end{theorem}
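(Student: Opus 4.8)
The plan is to reduce the statement about maximizers to the one about minimizers, and then prove the minimizer characterization. For the reduction, conjugate by the unitary $U$ on $\ell^2(\mathbb{Z})$, $(Uu)(n)=(-1)^n u(n)$: since $Uh_0U=-h_0$ and $UV_\omega U=V_\omega$, we get $\max\sigma(h_\omega)=-\min\sigma(-Uh_\omega U)=-\min\sigma(h_0-V_\omega)$, and $h_0-V_\omega$ is the displacement operator built from the single site $-q$, which again obeys $(\mathbf{H1})$. Hence $\omega$ is spectrally maximizing for $q$ iff it is spectrally minimizing for $-q$, with $\min\sigma(h_{0,\Lambda}^N-q_a)$ taking the role of $E_0(a)$ — exactly the maximizer hypothesis. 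So it suffices to characterize minimizers, and we may assume $q\neq0$ (otherwise $E_0\equiv-2$, contradicting the hypothesis) and $b_1<M_1$ (otherwise the model is trivial), so in particular each tile has at least two sites.

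Neumann bracketing at all tile boundaries yields $h_\omega\ge\bigoplus_{k\in\mathbb{Z}}(h_{0,\Lambda_k}^N+q_{\omega_k})$, each summand being unitarily equivalent to $h_{0,\Lambda}^N+q_{\omega_k}$, so $\min\sigma(h_\omega)\ge\inf_k E_0(\omega_k)\ge\min_{a\in\Delta}E_0(a)$. By the reflection symmetry of $E_0(\cdot)$ together with Theorem~\ref{thm: bubbles}(ii) (this is where the hypothesis $E_0(a)\neq-2$ for some $a$ is used), $E_0$ is strictly decreasing toward each corner of $\Delta$, whence $\min_{a\in\Delta}E_0(a)=E_0(0)=E_0(M_1-b_1)$, attained exactly at the two corners; call this number $E_{min}$. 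Thus $\min\sigma(h_\omega)\ge E_{min}$ for every $\omega\in\Omega$, so ``spectrally minimizing'' means precisely $\min\sigma(h_\omega)=E_{min}$, and we will see that $E_{min}$ is attained (thereby also recovering the one-dimensional case of Theorem~\ref{thm: omega star}(a)).

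Fix $\omega\in S_L$. As $h_\omega$ is $LM_1$-periodic, $\min\sigma(h_\omega)=\min\sigma(h_\omega^{\mathrm{per}})$, where $h_\omega^{\mathrm{per}}$ is $h_\omega$ with periodic boundary conditions on the period cell $[1,LM_1]$: the bottom of the spectrum of a periodic discrete Schr\"odinger operator is attained at quasimomentum $0$ (for the Bloch fibers $h_\omega(\theta)$ one has $\langle u,(C-h_\omega(\theta))u\rangle\le\langle|u|,(C-h_\omega(0))|u|\rangle$ for large $C$, so $\min\sigma(h_\omega(\theta))\ge\min\sigma(h_\omega(0))$). The matrix $C-h_\omega^{\mathrm{per}}$ is nonnegative and irreducible, so its Perron eigenvector $\psi>0$ is the ground state of $h_\omega^{\mathrm{per}}$. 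Suppose $\omega$ is minimizing, i.e.\ $h_\omega^{\mathrm{per}}\psi=E_{min}\psi$. Writing $h_\omega^{\mathrm{per}}=\bigoplus_{k=1}^L(h_{0,\Lambda_k}^N+q_{\omega_k})+T$, where $T\ge0$ has quadratic form $\sum_{\{p,p+1\}}|u(p)-u(p+1)|^2$ over the $L$ tile seams, and using $h_{0,\Lambda_k}^N+q_{\omega_k}\ge E_{min}$ for each $k$, the chain $E_{min}\|\psi\|^2=\langle\psi,h_\omega^{\mathrm{per}}\psi\rangle\ge\langle\psi,\bigoplus(\cdots)\psi\rangle\ge E_{min}\|\psi\|^2$ forces $T\psi=0$ and $(\bigoplus(\cdots)-E_{min})\psi=0$. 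So $\psi(p)=\psi(p+1)$ at every seam, and $\psi|_{\Lambda_k}$ lies in the kernel of $h_{0,\Lambda_k}^N+q_{\omega_k}-E_{min}$; since $\psi>0$ this forces $E_0(\omega_k)=E_{min}$ — hence $\omega_k\in\{0,M_1-b_1\}$ by the previous paragraph — and $\psi|_{\Lambda_k}=c_k\phi_{\ell_k}$ with $c_k>0$, where $\ell_k\in\{0,1\}$ records the occupied corner, $\phi_0>0$ is the ground state of $h_{0,\Lambda}^N+q_0$, and $\phi_1:=\phi_0(M_1+1-\cdot)$ is the ground state of $h_{0,\Lambda}^N+q_{M_1-b_1}$ by $(\mathbf{H1})$.

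Set $\alpha:=\phi_0(1)>0$, $\beta:=\phi_0(M_1)>0$. The condition $\psi(p)=\psi(p+1)$ across the seam between tiles $k$ and $k+1$ gives $c_{k+1}/c_k=\beta/\alpha$ if both these tiles have their bubble at the left corner, $\alpha/\beta$ if both at the right corner, and $1$ otherwise; multiplying around the cycle, $1=\prod_{k=1}^L c_{k+1}/c_k=(\beta/\alpha)^{\#LL-\#RR}$, with $\#LL$, $\#RR$ the numbers of consecutive tile-pairs both at the left, respectively right, corner. Now $\alpha\neq\beta$: otherwise $\phi_0=\phi_1$ would be the ground state of both $h_{0,\Lambda}^N+q_0$ and $h_{0,\Lambda}^N+q_{M_1-b_1}$, so $(q_0-q_{M_1-b_1})\phi_0\equiv0$ and hence $q_0=q_{M_1-b_1}$ (as $\phi_0>0$), i.e.\ $q$ would be invariant under the nonzero shift by $M_1-b_1$ — impossible for a nonzero finitely supported function. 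Since $\beta/\alpha$ is a positive real $\neq1$, we conclude $\#LL=\#RR$; a standard count for cyclic binary words gives $\#LL-\#RR=n^0(\omega)-n^1(\omega)$, which with $n^0(\omega)+n^1(\omega)=L$ yields $n^0(\omega)=n^1(\omega)=L/2$ and $L$ even. Conversely, if $L$ is even and $n^0(\omega)=n^1(\omega)=L/2$ then $\#LL=\#RR$, the matching relations are consistent and admit a positive solution $c_1,\dots,c_L$, and the corresponding $\psi$ (with $\psi|_{\Lambda_k}=c_k\phi_{\ell_k}$) satisfies $(\bigoplus(\cdots)-E_{min})\psi=0$ and $T\psi=0$, hence $h_\omega^{\mathrm{per}}\psi=E_{min}\psi$; being positive it is the ground state, so $\min\sigma(h_\omega)=E_{min}$. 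I expect the main difficulty to lie in the equality-case analysis of the Neumann bracketing — simultaneously extracting ``all bubbles at corners'' and the precise seam-continuity and gluing of the ground state — together with the non-degeneracy input $\alpha\neq\beta$, without which the gluing constraints would not pin down the counts and the characterization would fail.
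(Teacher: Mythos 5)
Your overall architecture matches the paper's: the sign-flip unitary reduces maximizers to minimizers; the equality case of Neumann bracketing on a period cell forces every $\omega_k$ into a corner of $\Delta$ and glues the tile ground states across the seams (this is the content of the paper's Lemma~\ref{lemma: minimizers sit in corners}); and the product of the seam ratios around the cycle, together with the correct combinatorial identity $\#LL-\#RR=n^0(\omega)-n^1(\omega)$, reduces the whole theorem to the single non-degeneracy statement $\alpha=\phi_0(1)\neq\phi_0(M_1)=\beta$. All of that is sound and is essentially the paper's route.

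The gap is in your justification of that non-degeneracy. You assert that $\alpha=\beta$ would force $\phi_0=\phi_1$. It would not: $\phi_0$ and $\phi_1$ are ground states of \emph{different} operators, $h_{0,\Lambda}^N+q_0$ and $h_{0,\Lambda}^N+q_{M_1-b_1}$, and $\alpha=\beta$ only says they agree at the two endpoints of $\Lambda$; in the interior they satisfy different difference equations, so nothing forces them to coincide (the Neumann condition at $n=1$ plus the three-term recursion determines each eigenfunction from its value at $1$, but the two recursions differ wherever $q_0\neq q_{M_1-b_1}$). A telling symptom is that your argument for $\alpha\neq\beta$ uses only $q\neq 0$ and never the hypothesis $E_0\neq -2$; yet when $E_0=-2$ the ground state is constant outside $\supp q_a$ and the conclusion can fail, so the hypothesis must enter exactly at this step. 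This step is the paper's Lemma~\ref{lemma: unequal at endpoints}, and its proof is genuinely nontrivial: assuming $\psi(1)=\psi(M_1)$, one passes to the periodic operator on $[1,M_1]$ via Proposition~\ref{prop: properties of per}(iv), extends periodically to $\Z$, identifies the energy as $\inf\sigma(h_0+Q)$ by Allegretto--Piepenbrink and Schnol, re-reads the periodic eigenvalue problem on a shifted period cell on which $Q$ is a translate of the reflected potential, and derives a contradiction with the strict monotonicity of the ground state outside $\supp q_a$ --- which is precisely where $E\neq -2$ is used. Without an argument of this kind your proof does not establish the necessity of $n^0(\omega)=n^1(\omega)$ (the sufficiency direction, which only needs $\alpha,\beta>0$, is fine as you wrote it).
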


We discuss our expectation for uniqueness results in $d\ge 2$ in Section~\ref{sec:discussion}.

\subsection{The Bernoulli Displacement Model}

We conclude with a more detailed investigation of a one-dimensional special case of the displacement model, which exhibits some unexpected phenomena. Here we divide $\Z$ into neighboring pairs and, for each pair, randomly place a single site term into one of the two points of the pair. With our notation from above this corresponds to $d=1$, $M_1=2$, $b_1=1$, $q=\lambda \delta_1$, where $\lambda \in \mathbb{R}\setminus \{0\}$ is a fixed coupling constant, $\mathbb{P}(\omega_k=0)=p\in (0,1)$, and $\mathbb{P}(\omega_k=1)=1-p$. We will refer to this as the Bernoulli displacement model (BDM) and denote it by $h_{\omega,\lambda}$, keeping track of the dependence on the coupling constant.

\subsubsection{Almost Sure Spectrum}

We denote by $\Sigma_{\lambda}$ the almost-sure spectrum of $h_{\omega,\lambda}$. Theorem~\ref{thm: omega star} applies and thus the upper and lower edges $E_{\pm}(\lambda)$ of $\Sigma_{\lambda}$ are given by $\max \sigma(h_{\omega^*,\lambda})$ and $\min \sigma(h_{\omega^*,\lambda})$, respectively, where $\omega^*$ corresponds to the $4$-periodic potential with values $(0,\lambda,\lambda,0)$ in each period. By Floquet theory
\[ \sigma(h_{\omega^*,\lambda}) = \{E\in \R: \,|D(E)|\le 2\}, \]
where the discriminant $D(E)$, i.e.\ the trace of the monodromy operator, may be explicitly calculated as
\[ D(E)= E^4 - 2\lambda E^3 +(\lambda^2-4) E^2 +4\lambda E +2-\lambda^2.\]
The observation that $\sigma(h_{\omega^*,\lambda})$ is symmetric to $E=\lambda/2$ suggests to substitute $E=x+\lambda/2$, after which $D(\cdot)$ becomes bi-quadratic in $x$, which allows to explicitly determine the four bands of $\sigma(h_{\omega^*,\lambda})$. In particular, we find that
\[ E_{\pm}(\lambda) = \frac{\lambda}{2} \pm \sqrt{ 2+\frac{\lambda^2}{4} + \sqrt{4+\lambda^2}}. \]
Moreover, the central gap $(G_-(\lambda), G_+(\lambda))$ of $\sigma(h_{\omega^*,\lambda})$ is given by
\[ G_{\pm}(\lambda) = \frac{\lambda}{2} \pm \sqrt{2+\frac{\lambda^2}{4}-\sqrt{4+\lambda^2}}.\]

A deeper fact is that $(G_-(\lambda), G_+(\lambda))$ is a gap of $\sigma(h_{\omega,\lambda})$ for {\it every} configuration $\omega$, and thus also of the almost sure spectrum $\Sigma_{\lambda}$.

\begin{theorem}\label{thm: spectral gap}
For every $\lambda \in \mathbb{R}\setminus \{0\}$ and every $\omega\in \Omega$,
\begin{displaymath}
(G_-(\lambda),G_+(\lambda)) \cap \sigma(h_{\omega,\lambda}) = \emptyset.
\end{displaymath}
\end{theorem}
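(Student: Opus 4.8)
The plan is to show directly that no $\ell^2$ solution—and in fact no polynomially bounded generalized eigenfunction—of $h_{\omega,\lambda}u = Eu$ exists for $E\in(G_-(\lambda),G_+(\lambda))$, by exploiting the one-dimensional transfer-matrix structure together with the fact that $(G_-,G_+)$ is already a gap of the \emph{periodic} operator $h_{\omega^*,\lambda}$. The key structural observation is that, since $d=1$, $M_1=2$, $b_1=1$, $q=\lambda\delta_1$, the potential $V_\omega$ takes the value $\lambda$ at exactly one of the two sites $\{2k+1,2k+2\}$ in each pair. Thus over each pair the one-step transfer matrices compose into one of only \emph{two} possible $2\times 2$ blocks, call them $A(E)$ (displacement $0$: potential at the left site of the pair) and $B(E)$ (displacement $1$: potential at the right site). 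Both have determinant $1$. So the solution at energy $E$ along $\omega$ is governed by an infinite product of matrices drawn from $\{A(E),B(E)\}$, in the order dictated by $\omega$.

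The heart of the argument is then a \emph{uniform hyperbolicity} statement: there is a common cone in $\R\mathrm{P}^1$ (or a common invariant family of cones) that is strictly contracted by both $A(E)$ and $B(E)$ for every $E$ in the open gap. First I would record that $A(E)=T_0(E)$ is precisely the period-two transfer matrix of the periodic configuration $\omega^*$ read over a length-$2$ window where the pair pattern is $(\lambda,0)$, and $B(E)$ over $(0,\lambda)$; the product $A(E)B(E)$ (and $B(E)A(E)$) is a period-$4$ monodromy of $h_{\omega^*,\lambda}$. Since $E\in(G_-,G_+)$ lies in a gap of $\sigma(h_{\omega^*,\lambda})$, Floquet theory gives $|\operatorname{tr}(A(E)B(E))|>2$, so $A(E)B(E)$ is hyperbolic with eigenvalues $\rho^{\pm1}$, $|\rho|>1$. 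The same holds for $B(E)A(E)$. I would use the classical criterion (e.g.\ Johnson, or the uniform-hyperbolicity characterization of Avila–Bochi–Yoccoz / Yoccoz) that a two-matrix cocycle in $SL(2,\R)$ is uniformly hyperbolic iff \emph{some} common multicone is contracted, and reduce it to a finite check: exhibit an explicit pair of cones $\mathcal{C}_A,\mathcal{C}_B\subset\R\mathrm{P}^1$ with $A(E)\mathcal{C}_A\subset\operatorname{int}\mathcal{C}_B$, $A(E)\mathcal{C}_B\subset\operatorname{int}\mathcal{C}_B$, $B(E)\mathcal{C}_B\subset\operatorname{int}\mathcal{C}_A$, $B(E)\mathcal{C}_A\subset\operatorname{int}\mathcal{C}_A$, for all $E\in(G_-,G_+)$. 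Concretely, because $A$ and $B$ are conjugates of each other under the reflection symmetry $E\mapsto\lambda-E$ combined with swapping the two sites of a pair—this is exactly the $E=\lambda/2$ symmetry noted before the theorem—the cone data for $B$ is the reflection of that for $A$, and the whole thing collapses to one explicit inequality in $E$ and $\lambda$ that can be verified by elementary calculus on the finite interval $(G_-(\lambda),G_+(\lambda))$ where the square-root expressions are all real.

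Once uniform hyperbolicity is in hand, the conclusion is standard: any solution of $h_{\omega,\lambda}u=Eu$ that is bounded (let alone $\ell^2$) at $+\infty$ must, at each pair boundary to the right, lie in the one-dimensional contracting direction of the appropriate partial product; similarly for $-\infty$. But uniform hyperbolicity forces every nontrivial solution to grow exponentially either at $+\infty$ or at $-\infty$ (the contracting directions from the two ends are transverse, as the invariant cones at the two ends are disjoint). Hence there is no generalized eigenfunction, so by the Shnol/Berezanskii criterion $E\notin\sigma(h_{\omega,\lambda})$; since this holds for every $\omega\in\Omega$, we get $(G_-(\lambda),G_+(\lambda))\cap\sigma(h_{\omega,\lambda})=\emptyset$.

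The main obstacle will be the explicit cone construction—verifying a common contracted multicone for the pair $\{A(E),B(E)\}$ uniformly over the gap. This is where the non-monotonicity of the displacement model bites: one cannot argue band by band as for a single periodic operator, but must control \emph{all} interlacings of the two blocks simultaneously. I expect the $E\leftrightarrow\lambda-E$ symmetry and the sign of $\lambda$ (treating $\lambda>0$ and $\lambda<0$ by the same reflection) to reduce it to a single-variable estimate that is ultimately elementary, but pinning down the cones so that the inclusions are strict at the gap edges $G_\pm(\lambda)$—where $A(E)B(E)$ degenerates to a parabolic matrix—requires care; there one should work on a slightly smaller interval and then pass to the closure, or equivalently note that $G_\pm(\lambda)$ themselves are handled by continuity since the open gap condition is what is asserted.
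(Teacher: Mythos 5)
Your transfer-matrix strategy is a legitimate route in principle, but as written it has a genuine gap: the entire content of the theorem is concentrated in the step you defer, namely exhibiting a common invariant multicone for the pair $\{A(E),B(E)\}$ and verifying the strict inclusions for all $E$ in the gap. By Johnson's theorem, uniform hyperbolicity of the two-generator cocycle over the full shift is \emph{equivalent} to the assertion $(G_-(\lambda),G_+(\lambda))\cap\sigma(h_{\omega,\lambda})=\emptyset$ for all $\omega$, so nothing has been proved until that finite check is actually carried out. The only input you do verify is $|\operatorname{tr}(A(E)B(E))|>2$ from the Floquet gap of $h_{\omega^*,\lambda}$, and hyperbolicity of the single products $AB$ and $BA$ does not imply uniform hyperbolicity of the joint cocycle: one needs at least that \emph{every} periodic word in $A,B$ is hyperbolic, which already requires $|\operatorname{tr}A(E)|>2$ and $|\operatorname{tr}B(E)|>2$ on the gap, i.e.\ that $(G_-,G_+)$ also avoids the spectra of the constant configurations $\omega\equiv 0$ and $\omega\equiv 1$ --- a fact you have not established and which is itself a special case of the theorem. (For $|\lambda|\le 2$ it follows only a posteriori from Corollary~\ref{cor: lambda less than 2}, which depends on this theorem.) A minor additional point: the reflection $E\mapsto\lambda-E$ relates $A(E)$ to $B(\lambda-E)$, not to $B(E)$ at the same energy, so the claimed collapse to a single one-variable inequality needs more care.

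For comparison, the paper avoids transfer matrices entirely: by spectral mapping it suffices to bound $H_{\omega,\lambda}=(h_{\omega,\lambda}-\lambda/2)^2-(2+\lambda^2/4)$ below by $-\sqrt{4+\lambda^2}$. The squared operator is five-diagonal, and because $V_\omega$ takes the value $\lambda$ at exactly one site of each pair, the odd off-diagonal entries vanish; a blockwise conjugation by $\frac{1}{\sqrt 2}\bigl(\begin{smallmatrix}1&1\\1&-1\end{smallmatrix}\bigr)$ then decouples $H_{\omega,\lambda}$ into $h_0\pm q_\omega$ with $q_\omega(k)=\lambda(\omega_{k+1}-\omega_k)$, and the lower bound follows from the Allegretto--Piepenbrink theorem applied to an \emph{explicit} positive generalized eigenfunction $\psi_\omega(k)=z_+(\lambda)^{\sum_j(2\omega_j-1)}$ at energy $-\sqrt{4+\lambda^2}$. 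That single explicit formula is what replaces your unexecuted cone construction; if you want to salvage your approach, the analogous concrete work would be to write down the cones and verify the four inclusions, including the trace conditions for $A$ and $B$ individually.
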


The $4$-periodic operator $h_{\omega^*,\lambda}$ has two additional non-trivial gaps, located symmetrically to the left and right of $(G_-(\lambda), G_+(\lambda))$. However, at least for $|\lambda|\le 2$, these gaps are filled in entirely by spectra from other configurations. In fact, consider $\sigma(h_{\omega^1,\lambda})$ for the constant configuration $\omega^1 := (\ldots, 1, 1, \ldots)$ giving the $2$-periodic potential $(\dots, 0, \lambda, 0, \lambda, \ldots)$. As it turns out, for details see \cite{Nichols}, for $|\lambda|\le 2$ the two bands of $\sigma(h_{\omega^1,\lambda})$ fully cover the left and right gaps of $\sigma(h_{\omega^*,\lambda})$. Thus, combining Theorems~\ref{thm: deterministic spectrum}, \ref{thm: spectral gap} and Corollary \ref{cor: upper and lower edges}, we get

\begin{corollary}\label{cor: lambda less than 2}
If $|\lambda|\leq 2$, then
\begin{displaymath}
\Sigma_{\lambda}=[E_-(\lambda),E_+(\lambda)]\setminus (G_-(\lambda), G_+(\lambda)).
\end{displaymath}
\end{corollary}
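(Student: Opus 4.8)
The plan is to prove Corollary~\ref{cor: lambda less than 2} by combining the three ingredients that have already been assembled in the excerpt: the periodic support characterization of $\Sigma_\lambda$ (Theorem~\ref{thm: deterministic spectrum}), the universal central gap (Theorem~\ref{thm: spectral gap}), and the explicit knowledge of the band structure of the two distinguished periodic operators $h_{\omega^*,\lambda}$ and $h_{\omega^1,\lambda}$. First I would note the ``$\supseteq$'' direction is essentially trivial for the outer edges: by Corollary~\ref{cor: upper and lower edges} (which applies since $q=\lambda\delta_1$ with $\lambda\neq 0$ is sign-definite, so the hypotheses of Theorem~\ref{thm: omega star}(a) and (b) hold, noting $E_0(a)\neq -2$ for some $a$ as well), we have $\inf\Sigma_\lambda = E_-(\lambda)$ and $\sup\Sigma_\lambda = E_+(\lambda)$; since $\Sigma_\lambda$ is closed and, by Theorem~\ref{thm: deterministic spectrum}, contains $\sigma(h_{\omega^*,\lambda})$, it contains the four bands of $\sigma(h_{\omega^*,\lambda})$, whose outermost endpoints are $E_\pm(\lambda)$. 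To fill in the remaining two (left and right) gaps of $\sigma(h_{\omega^*,\lambda})$, I would invoke the fact quoted from \cite{Nichols}: for $|\lambda|\le 2$ the two bands of the $2$-periodic operator $\sigma(h_{\omega^1,\lambda})$ cover exactly these two gaps. Since $\omega^1\in\mathcal{C}_{per}$ (it is periodic and, under (H2), $\mu(\omega^1_k)=\mu(1)=1-p>0$), Theorem~\ref{thm: deterministic spectrum} gives $\sigma(h_{\omega^1,\lambda})\subseteq\Sigma_\lambda$. Combining, $\Sigma_\lambda\supseteq \sigma(h_{\omega^*,\lambda})\cup\sigma(h_{\omega^1,\lambda}) = [E_-(\lambda),E_+(\lambda)]\setminus(G_-(\lambda),G_+(\lambda))$.

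For the reverse inclusion ``$\subseteq$'' I would argue as follows. By Theorem~\ref{thm: deterministic spectrum}, $\Sigma_\lambda = \overline{\bigcup_{\omega\in\mathcal C_{per}}\sigma(h_{\omega,\lambda})}$, so it suffices to show each periodic $\sigma(h_{\omega,\lambda})$ is contained in $[E_-(\lambda),E_+(\lambda)]\setminus(G_-(\lambda),G_+(\lambda))$ and then take closure (the right-hand set is already closed). Containment in the closed interval $[E_-(\lambda),E_+(\lambda)]$ follows from $E_\pm(\lambda)$ being the extremal edges of $\Sigma_\lambda$ via Corollary~\ref{cor: upper and lower edges}, since any $\sigma(h_{\omega,\lambda})\subseteq\Sigma_\lambda$. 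Disjointness from the open interval $(G_-(\lambda),G_+(\lambda))$ is precisely the content of Theorem~\ref{thm: spectral gap}, which holds for \emph{every} $\omega\in\Omega$, in particular for all periodic configurations. Hence every periodic spectrum, and therefore the closure of their union, lies in the stated set.

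I expect the main obstacle — or rather, the only non-self-contained point — to be the citation-dependent claim that for $|\lambda|\le 2$ the two bands of $\sigma(h_{\omega^1,\lambda})$ exactly fill the left and right gaps of $\sigma(h_{\omega^*,\lambda})$; this is an explicit but somewhat delicate Floquet computation (comparing the discriminant of the $2$-periodic potential $(\dots,0,\lambda,0,\lambda,\dots)$ against the $4$-periodic one) that the excerpt defers to \cite{Nichols}, and it is genuinely where the restriction $|\lambda|\le 2$ enters — for larger $|\lambda|$ the covering fails and additional gaps survive. Everything else is bookkeeping: one only has to be careful that the hypotheses (H1), (H2), and sign-definiteness of $q=\lambda\delta_1$ are all in force so that Theorems~\ref{thm: omega star}, \ref{thm: deterministic spectrum}, \ref{thm: spectral gap} and Corollary~\ref{cor: upper and lower edges} all apply simultaneously, and that both $\omega^*$ and $\omega^1$ lie in $\mathcal C_{per}$.
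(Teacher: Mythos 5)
Your proposal is correct and follows essentially the same route as the paper, which obtains the corollary by combining Theorem~\ref{thm: deterministic spectrum}, Theorem~\ref{thm: spectral gap}, Corollary~\ref{cor: upper and lower edges}, and the fact (deferred to \cite{Nichols}) that for $|\lambda|\le 2$ the two bands of $\sigma(h_{\omega^1,\lambda})$ cover the outer gaps of $\sigma(h_{\omega^*,\lambda})$. Your write-up merely makes explicit the bookkeeping (closedness of the target set, membership of $\omega^*$ and $\omega^1$ in $\mathcal{C}_{per}$) that the paper leaves implicit.
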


If $|\lambda|>2$, then the bands of $\sigma(h_{\omega^1,\lambda})$ cover the left and right gaps of $\sigma(h_{\omega^*,\lambda})$ only partially. In Section~\ref{sec:discussion} we state a conjecture on the structure of $\Sigma_{\lambda}$ for $|\lambda|>2$.

\subsubsection{Integrated Density of States}

For the one-dimensional {\it symmetric} BDM (i.e.\ the case $p=1/2$), the integrated density of states (IDS) shows surprising behavior near band edges. A similar result, meaning in particular that the IDS is not H\"older continuous at certain energies, was first shown in an analogous setting for the continuum displacement model at the bottom of the spectrum in \cite{BLS2}. For the discrete case considered here we get that the same phenomenon appears not only at the lower and upper edges of the almost sure spectrum, but also at the edges of the central gap identified above.

To define the IDS, let $L\in \mathbb{N}$. For $k\in \mathbb{Z}$, we set $\Lambda_k=[2k-1,2k]$ and define $\Lambda(L)=\cup_{i=1}^L \Lambda_i = [1,2L]$. Let $h_{\omega,\lambda}^L$ denote a restriction of $h_{\omega,\lambda}$ to $\Lambda(L)$ with arbitrary boundary condition (i.e.\ choice of the diagonal matrix elements at $-L$ and $L$). Set
\begin{displaymath}
E_1(h_{\omega,\lambda}^L)\leq E_2(h_{\omega,\lambda}^L)\leq \ldots,
\end{displaymath}
the eigenvalues of $h_{\omega,\lambda}^L$ counted with multiplicity.  The IDS of $h_{\omega,\lambda}$ at $E\in \mathbb{R}$ is
\begin{displaymath}
N_{\lambda}(E)=\lim_{L\rightarrow \infty}\frac{1}{|\Lambda(L)|}\mathbb{E}(\# \{k\in \mathbb{N}:E_k(h_{\omega,\lambda}^L)\leq E\}),
\end{displaymath}
which exists due to ergodicity of our model, e.g.\ \cite{Kirsch}, and is independent of the boundary condition.

First, we note the following symmetry property which simplifies matters.

\begin{theorem}\label{thm: IDS symmetry}
 If $E_{\pm}(\lambda)$ denote the upper and lower edges of the almost-sure spectrum $\Sigma_{\lambda}$ of the symmetric BDM (i.e. $p=1/2$) and $N_{\lambda}$ denotes the IDS, then
 \begin{equation}\label{eq: symmetry of IDS}
 N_{\lambda}(E_-(\lambda)+t)=1-N_{\lambda}(E_+(\lambda)-t)
 \end{equation}
 for every $t\in \mathbb{R}$.
\end{theorem}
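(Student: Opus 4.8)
The plan is to derive the IDS symmetry \eqref{eq: symmetry of IDS} from a single unitary conjugation that relates $h_{\omega,\lambda}$ to $-h_{\sigma\omega,\lambda}$ shifted by a constant, together with the symmetry of the almost sure spectrum. The starting point is the fact that the BDM potential takes only the two values $0$ and $\lambda$ at the odd/even sites of each pair, so that $V_\omega(n) + V_{\hat\omega}(n) = \lambda$ for the configuration $\hat\omega$ obtained by flipping every $\omega_k$ (i.e.\ replacing $\omega_k=0$ by $\omega_k=1$ and vice versa). Since $p=1/2$, the map $\omega \mapsto \hat\omega$ is measure preserving on $\Omega$, so $\hat h_{\omega,\lambda} := h_0 + V_{\hat\omega}$ has the same distribution, hence the same IDS, as $h_{\omega,\lambda}$. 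This gives $h_{\omega,\lambda} + \hat h_{\omega,\lambda} = 2h_0 + \lambda$ as operators on $\ell^2(\Z)$.

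Next I would use the standard chiral (bipartite) unitary: let $(Uu)(n) = (-1)^n u(n)$, which satisfies $U h_0 U = -h_0$ since $h_0$ only connects nearest neighbors, i.e.\ sites of opposite parity. Because $U$ is diagonal it commutes with every multiplication operator, in particular with $V_{\hat\omega}$. Therefore
\[
U \hat h_{\omega,\lambda} U = -h_0 + V_{\hat\omega} = -(h_0 - V_{\hat\omega}) = -(2h_0 + \lambda - h_{\omega,\lambda} - 2V_{\hat\omega})
\]
is not quite what I want directly; cleaner is to combine the two identities as follows. From $\hat h_{\omega,\lambda} = 2h_0 + \lambda - h_{\omega,\lambda}$ and $U h_0 U = -h_0$ one gets
\[
U h_{\omega,\lambda} U = U(2h_0 + \lambda - \hat h_{\omega,\lambda})U = -2h_0 + \lambda - U\hat h_{\omega,\lambda}U = -2h_0 + \lambda - (-h_0 + V_{\hat\omega}),
\]
and using $V_{\hat\omega} = h_0 + \lambda - h_0 - \hat h_{\omega,\lambda} + \dots$ — this is getting circular, so the honest route is: apply $U(\cdot)U$ to the operator identity $h_{\omega,\lambda} + \hat h_{\omega,\lambda} = 2h_0+\lambda$ to obtain $U h_{\omega,\lambda}U + U\hat h_{\omega,\lambda}U = -2h_0 + \lambda$. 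Adding this to the original identity yields $U h_{\omega,\lambda}U + h_{\omega,\lambda} + U\hat h_{\omega,\lambda}U + \hat h_{\omega,\lambda} = 2\lambda$. Since $U$ commutes with multiplications, $Uh_{\omega,\lambda}U + h_{\omega,\lambda}$ acts only through the Laplacian part and equals $2V_\omega$... which recovers $V_\omega + V_{\hat\omega} = \lambda$ after dividing by $2$. The clean conclusion, then, is the single operator identity
\[
U h_{\omega,\lambda} U = -\hat h_{\omega,\lambda} + \lambda,
\]
obtained from $U h_0 U = -h_0$, $U V_\omega U = V_\omega$, and $\lambda - V_{\hat\omega} = V_\omega$. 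Equivalently, $h_{\omega,\lambda}$ and $\lambda - \hat h_{\omega,\lambda}$ are unitarily equivalent via $U$.

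Now I transfer this to finite volume and to the IDS. Restricting to $\Lambda(L)=[1,2L]$ with any boundary condition and conjugating by the (finite) diagonal unitary $(-1)^n$, the identity $U h_{\omega,\lambda}^L U = \lambda - \hat h_{\omega,\lambda}^L$ holds with the boundary condition on the right-hand side being the negative of that on the left (still an arbitrary boundary condition, and the IDS is boundary-condition independent). Hence the eigenvalues satisfy $E_k(h_{\omega,\lambda}^L) = \lambda - E_{2L+1-k}(\hat h_{\omega,\lambda}^L)$, so the eigenvalue-counting functions obey $\#\{k : E_k(h_{\omega,\lambda}^L)\le E\} = \#\{k : E_k(\hat h_{\omega,\lambda}^L) \ge \lambda - E\} = 2L - \#\{k: E_k(\hat h_{\omega,\lambda}^L) < \lambda - E\}$. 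Taking expectations (using that $\hat h_{\omega,\lambda}$ and $h_{\omega,\lambda}$ are equidistributed when $p=1/2$), dividing by $|\Lambda(L)| = 2L$, and passing to the limit $L\to\infty$ gives $N_\lambda(E) = 1 - N_\lambda((\lambda-E)^-)$, where $(\lambda-E)^-$ denotes the left limit. Finally, since the symmetry of $\sigma(h_{\omega^*,\lambda})$ about $E=\lambda/2$ — noted in the excerpt just before Theorem~\ref{thm: spectral gap} — gives $E_+(\lambda) + E_-(\lambda) = \lambda$, one has $\lambda - E = E_+(\lambda) - t$ when $E = E_-(\lambda) + t$, and continuity of $N_\lambda$ (it is continuous since $h_{\omega,\lambda}$ has no atoms in its density of states — alternatively one argues directly at the level of left limits and notes the claimed identity only involves $N_\lambda$ values) yields \eqref{eq: symmetry of IDS}.

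The main obstacle I anticipate is the bookkeeping around the measure-preservation of $\omega\mapsto\hat\omega$ and the careful handling of boundary conditions and left/right limits so that the expectations genuinely match; the algebraic core ($U h_0 U = -h_0$ plus $V_\omega + V_{\hat\omega} \equiv \lambda$) is immediate, but one must be slightly careful that the flipped configuration on the finite box $[1,2L]$ is again a legitimate configuration and that flipping interacts correctly with the pairing $\Lambda_k = [2k-1,2k]$. A secondary point requiring a line of justification is that continuity of $N_\lambda$ (or the pointwise statement about left limits) is what lets us replace $(\lambda - E)^-$ by $\lambda - E$; this is standard for the one-dimensional IDS but should be invoked explicitly.
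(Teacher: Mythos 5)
Your proposal is correct and takes essentially the same route as the paper: conjugation by the alternating-sign unitary to send $h_0\mapsto -h_0$, the configuration flip $\omega\mapsto\hat\omega$ (measure preserving precisely because $p=1/2$) to identify the resulting operator with $\lambda - h_{\hat\omega,\lambda}$, and the relation $E_+(\lambda)+E_-(\lambda)=\lambda$. The paper handles the strict-versus-nonstrict counting and the left-limit issue by pairing Dirichlet and Neumann finite-volume restrictions and invoking boundary-condition independence of the IDS, exactly the bookkeeping you flag at the end.
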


We want to describe the asymptotics of the IDS near the four band edges $E_{\pm}(\lambda)$ and $G_{\pm}(\lambda)$. Due to symmetry we only need to consider the lower band edges $E_-(\lambda)$ and $G_+(\lambda)$.

\begin{theorem}\label{thm: DOS blowup}
Fix $\lambda>0$ and let $E_0=E_-(\lambda)$ or $E_0=G_+(\lambda)$.  Then there exist constants $C,\epsilon>0$ such that
\begin{displaymath}
N_{\lambda}(E)-N_{\lambda}(E_0)\geq\frac{C}{\log^2(E-E_0)} \quad \textrm{for all} \ E\in (E_0,E_0+\epsilon).
\end{displaymath}
  A corresponding result, for energies $E$ to the left of $E_0$,  holds at the upper edges $E_+(\lambda)$ and $G_-(\lambda)$.
\end{theorem}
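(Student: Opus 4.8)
The plan is to combine a standard Dirichlet‑bracketing reduction to a finite box with a probabilistic estimate showing that, with non‑negligible probability, the box operator has an eigenvalue only \emph{stretched‑exponentially} far above $E_0$, and then to optimize the box size against $E-E_0$. Fix $\lambda>0$ and $E_0\in\{E_-(\lambda),G_+(\lambda)\}$; by Corollary~\ref{cor: upper and lower edges} and Theorem~\ref{thm: spectral gap}, $E_0$ is a lower band edge of $\Sigma_\lambda$ with no spectrum immediately to its left. For $m\in\N$ let $h^m_\omega$ be the restriction of $h_{\omega,\lambda}$ to a box $B_m$ of $m$ consecutive pairs with Dirichlet boundary conditions and let $W_m$ be the potential equal to $1$ at the two endpoints of $B_m$. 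Partitioning $\Lambda(L)$ into $\lfloor L/m\rfloor$ disjoint translates of $B_m$ and bounding each inter‑block coupling by $-2u_nu_{n+1}\le u_n^2+u_{n+1}^2$ gives $h_{\omega,\lambda}|_{\Lambda(L)}\le\bigoplus_j(h^m_\omega+W_m)$; dividing by $|\Lambda(L)|$, letting $L\to\infty$, and using ergodicity and boundary‑condition independence of the IDS yields the classical lower bound
\[
N_\lambda(E)\ \ge\ \frac{1}{2m}\,\mathbb{E}\bigl[\#\{k:E_k(h^m_\omega+W_m)\le E\}\bigr]\ \ge\ \frac{1}{2m}\,\mathbb{P}\bigl(E_1(h^m_\omega+W_m)\le E\bigr),
\]
with the natural modification (counting the first eigenvalue above the gap) when $E_0=G_+(\lambda)$, for which one uses Theorem~\ref{thm: spectral gap} and a suitable choice of boundary operator keeping Dirichlet gap states away from $G_+(\lambda)$. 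Since $N_\lambda(E_0)$ equals the density of spectrum below this window, the theorem reduces to a lower bound on the displayed probability.

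The core estimate is the claim that there are $c_0>0$ and $\varphi(m)=C_0e^{-c_0\sqrt m}$ such that $\mathbb{P}(E_1(h^m_\omega+W_m)\le E_0+\varphi(m))\ge c_0$ for all large $m$ (and analogously at the internal edge). The lower bound $E_1(h^m_\omega+W_m)\ge E_0$ is immediate from $E_0=E_{min}$ and the characterization via $\omega^\ast$ (Corollary~\ref{cor: upper and lower edges}; Theorem~\ref{thm: spectral gap} at the internal edge). For the upper bound I would build a trial state adapted to a \emph{typical} displacement pattern on $B_m$: reading $(\omega_k)$ as a $\pm1$ walk $S_j=n^0-n^1$, with probability bounded below one can find inside $B_m$ a short sub‑block on which the pattern is \emph{balanced}, flanked on both sides by sub‑blocks of length $\asymp m$ on which the imbalance density is of order $1/\sqrt m$. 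On the short balanced sub‑block the associated periodic operator has $E_0$ as an eigenvalue with a positive Bloch eigenfunction $\psi$ (Theorem~\ref{thm: all 1d minimizers}), and reflection symmetry $(\mathbf{H1})$ forces $\psi$ to have vanishing discrete normal derivative at the midpoints of the single‑site‑free pairs; cutting there and continuing $\psi$ as the solution at energy $E_0$ of the flanking imbalanced operators -- which decays, since $E_0$ lies strictly below the bottom of their spectrum (resp.\ strictly inside their central gap) by Theorem~\ref{thm: spectral gap} -- produces a trial state on $B_m$ that vanishes at the endpoints of $B_m$, so that $W_m$ contributes nothing. Its Rayleigh quotient exceeds $E_0$ only by the squared amplitude of the matched solution at the far ends of the flanks; as the decay rate of that solution is controlled from below by a constant times the local imbalance, and the flanks have length $\asymp m$ while the typical imbalance over them is $\asymp\sqrt m$, this amplitude is at most $\varphi(m)=C_0e^{-c_0\sqrt m}$.

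Granting these two steps, the theorem follows by optimizing the scale: given $E\in(E_0,E_0+\epsilon)$ with $\delta:=E-E_0$, choose $m=m(\delta):=\lceil c_0^{-2}\log^2(C_0/\delta)\rceil$, so that $\varphi(m)\le\delta$ and $m\asymp\log^2(E-E_0)$; then
\[
N_\lambda(E)-N_\lambda(E_0)\ \ge\ \frac{c_0}{2m}\ \ge\ \frac{C}{\log^2(E-E_0)}
\]
for a suitable $C>0$. The corresponding statement at the upper edges $E_+(\lambda)$ and $G_-(\lambda)$ follows either by rerunning the argument with $q$ replaced by $-q$ (using part~(b) of Theorem~\ref{thm: omega star}) or directly from the IDS symmetry of Theorem~\ref{thm: IDS symmetry}, which carries a right neighborhood of $E_-(\lambda)$ (resp.\ $G_+(\lambda)$) onto a left neighborhood of $E_+(\lambda)$ (resp.\ $G_-(\lambda)$) while preserving increments of $N_\lambda$.

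The whole difficulty sits in the core estimate, and the decisive point is to obtain a \emph{super‑polynomially} small excess $E_1(h^m_\omega+W_m)-E_0$ for a set of configurations of \emph{non‑vanishing} probability. A naive trial state -- the positive band‑edge Bloch function of a minimizing periodic operator confined to $B_m$ by a linear cutoff -- has excess only of order $m^{-2}$, which, fed into the scale optimization, yields merely a H\"older‑type (power‑law) lower bound on $N_\lambda$, strictly weaker than $\log^{-2}$. The improvement requires \emph{not} paying the $m^{-2}$ confinement cost: one cuts the Bloch function exactly where $(\mathbf{H1})$ makes its boundary data match a genuinely \emph{decaying} solution in the surrounding \emph{weakly imbalanced} region, the imbalance being of the order ($\sqrt m$, not the exponentially rare order $m$) that a typical configuration supplies. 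The key quantitative input is then a lower bound on the decay rate of the matched solution in terms of the local imbalance, which must be extracted from the uniform spectral gap (Theorem~\ref{thm: spectral gap}) together with careful control of the fluctuations of the walk $(S_j)$ along the flanks -- and in the internal‑edge case one additionally has to control Dirichlet boundary states lying in the gap $(G_-(\lambda),G_+(\lambda))$. Reconciling the tension between strong decay (favoring large, strongly imbalanced flanks) and non‑negligible probability (favoring typical, weakly imbalanced flanks) is the crux of the whole argument.
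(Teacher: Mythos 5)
Your overall scheme -- the a priori bound $N_\lambda(E)\ge\frac{1}{|\Lambda(L)|}\mathbb{P}(E_1<E)$, a trial function whose excess over $E_0$ is only $e^{-c\sqrt{L}}$ on an event of probability bounded below (coming from the typical $\sqrt{L}$ fluctuation of the $\pm1$ walk of displacements), and the optimization $L\asymp\log^2(E-E_0)$ -- is exactly the paper's, as is the use of Theorem~\ref{thm: IDS symmetry} for the upper edges. But the two steps you yourself identify as the crux are left as claims, and in both cases the paper's route is genuinely simpler than the one you sketch. For the core estimate at $E_-(\lambda)$ you propose a ``balanced core matched to decaying solutions in weakly imbalanced flanks,'' with the decay rate to be extracted from a Combes--Thomas-type argument off the spectral gap; this would require proving that the flanking operator's spectrum lies a definite distance above $E_0$ in terms of the imbalance density, which is nontrivial and nowhere established. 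The paper needs none of this: concatenating rescaled copies of the two Neumann ground states $\phi_{\pm1}$ cell by cell produces, for \emph{every} configuration, an exact eigenfunction of the Neumann restriction at energy $E_-$ with the explicit multiplicative form $\Psi_\omega(2j)=r^{S_j}$, so the only error is the Dirichlet--Neumann discrepancy $2(1+\Psi_\omega(2L)^2)/\|\Psi_\omega\|^2$ at the two endpoints, and the reflection principle for the walk (max $\ge\sqrt{L}$, endpoint $\le 0$) immediately gives $\|\Psi_\omega\|^2\ge e^{2\sqrt{L}\log r}$ with $\Psi_\omega(2L)^2\le1$. Your statement that the matched trial state ``vanishes at the endpoints of $B_m$, so that $W_m$ contributes nothing'' is not right as stated -- a positive solution does not vanish, and truncating it costs exactly the boundary term the paper keeps track of.

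The more serious gap is at the internal edge $G_+(\lambda)$. Your ``natural modification'' of the a priori bound does not go through: the bound $N_\lambda(E)\ge\frac{1}{2m}\mathbb{E}[\#\{k:E_k\le E\}]$ from bracketing has the wrong direction for subtracting $N_\lambda(G_+)$ at an interior point of the spectrum, and Theorem~\ref{thm: spectral gap} applies to whole-line operators, so it does not by itself exclude Dirichlet boundary eigenvalues of $h^m_\omega+W_m$ inside $(G_-(\lambda),G_+(\lambda))$ -- you flag this but do not resolve it. The paper circumvents both problems at once: it passes to the \emph{periodic} restriction (whose spectrum avoids the gap by Theorems~\ref{thm: deterministic spectrum} and \ref{thm: spectral gap}), applies the substitution $H=(h-\lambda/2)^2-(2+\lambda^2/4)$ and the block-diagonalization from the proof of Theorem~\ref{thm: spectral gap} to identify $N_\lambda(G_+(\lambda)+\epsilon)-\tfrac12$ with $\tfrac12 N_-(-\sqrt{4+\lambda^2}+\eta(\epsilon))$, i.e.\ it converts the internal edge into the \emph{bottom} of the spectrum of $h_0-q_\omega$, where the a priori bound applies and the explicit positive generalized eigenfunction $\psi_\omega(k)=z_+(\lambda)^{S_k}$ of (\ref{eq:defpsi}) serves as the trial function with the same random-walk analysis. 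Without this (or an equivalent) reduction, your argument at $G_+(\lambda)$ does not close.
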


For more discussion, including a conjecture on the asymptotics of the IDS at possible additional band edges for the case $|\lambda| >2$, see Section~\ref{sec:discussion}.


\section{Bubbles Tend to the Corners and Consequences} \label{sec:multiDresults}

Our first goal in this section is to prove Theorem~\ref{thm: bubbles}, i.e.\ that ``bubbles tend to the corners''. This will be done in Section~\ref{sec:bubblesproof} after Section~\ref{sec:NeuLap} will introduce the discrete Neumann Laplacian and list its relevant properties. The characterization of the spectral minimum of the displacement model, i.e.\ Theorem~\ref{thm: omega star} and Corollary~\ref{cor: upper and lower edges} are consequences of Theorem~\ref{thm: bubbles} and will be proven in Section~\ref{sec:consequences}.

\subsection{The Neumann Laplacian and Basic Properties} \label{sec:NeuLap}

Let $\Lambda\subset \mathbb{Z}^d$.  The \textit{truncation operator} $h_{0,\Lambda}$ on $\Lambda$ is the operator on $\ell^2(\Lambda)$ with matrix elements
\begin{displaymath}
h_{0,\Lambda}(i,j)=\left\{\begin{array}{ll}
-1, & i,j\in \Lambda, \ \|i-j\|_{1}=1\\
0, & \textrm{otherwise}
\end{array}.\right.
\end{displaymath}
The \textit{edge counting function} on $\Lambda$ is the function $n_{\Lambda}:\Lambda \rightarrow \mathbb{N}\cup\{0\}$ given by
\begin{displaymath}
n_{\Lambda}(i)=\# \{j\in \mathbb{Z}^d\setminus \Lambda: (i,j)\in \partial \Lambda\}.
\end{displaymath}
Here $\partial \Lambda$ is the boundary of $\Lambda$, i.e.
\begin{displaymath}
\partial \Lambda =\{(i,j)\in \mathbb{Z}^d \times \mathbb{Z}^d: i\in \Lambda, \ j\notin \Lambda, \ \|i-j\|_{1}=1 \ \textrm{or} \ i\notin \Lambda, \ j\in \Lambda, \ \|i-j\|_{1}=1\}.
\end{displaymath}

Associated to the edge counting function is the edge counting operator $N_{\Lambda}$ specified by the matrix elements
\begin{equation}\label{eq: edge counting operator}
N_{\Lambda}(i,j)=\left\{\begin{array}{ll}
n_{\Lambda}(i), & i=j, \ i\in \Lambda\\
0,& \textrm{otherwise}
\end{array}.\right.
\end{equation}

\begin{definition}\label{def: Neumann Laplacian}
If $\Lambda\subset \mathbb{Z}^d$, then the $d$-dimensional discrete Neumann Laplacian on $\Lambda$ is the operator $h_{0,\Lambda}^N$ defined on $\ell^2(\Lambda)$ and given by the operator sum
\begin{displaymath}
h_{0,\Lambda}^N=h_{0,\Lambda}-N_{\Lambda}.
\end{displaymath}
The matrix elements of $h_{0,\Lambda}^N$ are given by summing the corresponding matrix elements of $h_{0,\Lambda}$ and $N_{\Lambda}$.
\end{definition}
We now summarize some basic properties of the Neumann Laplacian. The
above definition and most of these properties (in fact, probably
all) can be found in various references, e.g.\ \cite{Simon,
KirschMuller,Kirsch}. Detailed proofs of all of them can be found in
\cite{Nichols}.

The first property is a ``reflection principle'' for the Neumann Laplacian, meaning that a solution $u$ to the equation $(h_{0,\Lambda}^N+q)u=Eu$ can, by reflection, be used to construct a solution on a larger set $\Lambda^{\prime}$.  More precisely, let $\Lambda=\prod_{i=1}^d[a_i,b_i]$ and $q,u: \Lambda \rightarrow \mathbb{R}$ with
  \begin{equation}\label{reflectionEVeq}
  (h_{0,\Lambda}^N+q)u=Eu
  \end{equation}
  for some $E\in \mathbb{R}$.  Let $k\in \{1,\ldots,d\}$ be fixed and $u_{k,ref}$ (respectively, $q_{k,ref}$) be the extension of $u$ (respectively, $q$) to the set
  \begin{displaymath}
  \Lambda^{\prime}=\prod_{i=1}^{k-1}[a_i,b_i]\times [a_k,2b_k-a_k+1]\times \prod_{i=k+1}^d[a_i,b_i]
  \end{displaymath}
  obtained by reflecting $u$ (respectively, $q$) about $b_k+\frac{1}{2}$ in the $k$th component.  It is not too hard to show that $(h_{0,\Lambda^{\prime}}+q_{k,ref})u_{k,ref}=Eu_{k,ref}$.  We summarize this result, along with three others, in

  \begin{proposition}[\textbf{Properties of the Neumann Laplacian}]\label{prop: properties} \ \ \ \ \ \ \ \
  \begin{itemize}
  \item[(i)] \textbf{(Reflection Property)} If $u$ satisfies (\ref{reflectionEVeq}), then
  \begin{displaymath}
  (h_{0,\Lambda^{\prime}}+q_{k,ref})u_{k,ref}=Eu_{k,ref}.
  \end{displaymath}
  \item[(ii)] \textbf{(Neumann Splitting Formula)} If $\Lambda_1\subset \Lambda\subset \mathbb{Z}^d$, then
  \begin{displaymath}
  h_{0,\Lambda}^N\geq h_{0,\Lambda_1}^N\oplus h_{0,\Lambda\setminus \Lambda_1}^N,
  \end{displaymath}
  in the quadratic form sense.
  \item[(iii)] \textbf{(Simplicity and Positivity of Ground State)} If $\Lambda\subset \mathbb{Z}^d$ is connected and $q:\Lambda \rightarrow \mathbb{R}$, then the ground state eigenvalue of $h_{0,\Lambda}^N+q$ is simple and the corresponding eigenfunction may be taken strictly positive.
  \item[(iv)] \textbf{(Ground State Energy of $h_{0,\Lambda}^N$)}  If $\Lambda\subset \mathbb{Z}^d$, then $\min \sigma(h_{0,\Lambda}^N)=-2d$.
  \end{itemize}
  \end{proposition}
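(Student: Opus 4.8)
The plan is to prove the four items essentially independently, since each rests on a different elementary mechanism. The engine behind (ii) and (iv) is the edge-sum formula
\[ \langle u,(h_{0,\Lambda}^N+2d)\,u\rangle = \sum_{\substack{\{i,j\}:\ i,j\in\Lambda \\ \|i-j\|_1=1}}|u(i)-u(j)|^2 , \qquad u\in\ell^2(\Lambda). \]
One obtains it by expanding the right-hand side and collecting the diagonal part $\sum_{i\in\Lambda}\deg_\Lambda(i)|u(i)|^2$, where $\deg_\Lambda(i)=\#\{j\in\Lambda:\|i-j\|_1=1\}$, and then using $\deg_\Lambda(i)+n_\Lambda(i)=2d$ (each of the $2d$ neighbors of $i$ in $\mathbb{Z}^d$ lies in $\Lambda$ or not) together with the definitions of $h_{0,\Lambda}$ and $N_\Lambda$. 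Item (iv) is then immediate: the right-hand side is $\ge 0$, whence $\min\sigma(h_{0,\Lambda}^N)\ge -2d$; conversely, the constant function $u\equiv 1$ satisfies $(h_{0,\Lambda}^N u)(i)=-(\deg_\Lambda(i)+n_\Lambda(i))=-2d$ for all $i\in\Lambda$, so it is an eigenfunction with eigenvalue exactly $-2d$ (for disconnected $\Lambda$ one may instead use the indicator of a single connected component). For item (ii), write $u=u_1\oplus u_2$ with $u_1=u|_{\Lambda_1}$ and $u_2=u|_{\Lambda\setminus\Lambda_1}$, and split the edge sum into edges with both endpoints in $\Lambda_1$, edges with both endpoints in $\Lambda\setminus\Lambda_1$, and edges joining the two sets; the first two groups reproduce $\langle u_1,(h_{0,\Lambda_1}^N+2d)u_1\rangle$ and $\langle u_2,(h_{0,\Lambda\setminus\Lambda_1}^N+2d)u_2\rangle$, the third group is nonnegative and is simply discarded, and $\|u\|^2=\|u_1\|^2+\|u_2\|^2$; subtracting $2d\|u\|^2$ yields exactly the asserted form inequality.

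For (iii), fix $c>0$ large enough that $A:=cI-(h_{0,\Lambda}^N+q)$ has all entries $\ge 0$, which is possible since the off-diagonal entries of $h_{0,\Lambda}^N$ are $0$ or $-1$. Moreover $A$ is irreducible precisely because $\Lambda$ is connected: its positive off-diagonal entries sit on the nearest-neighbor edges within $\Lambda$, so for any $i,j\in\Lambda$ some power of $A$ has strictly positive $(i,j)$-entry. By the Perron--Frobenius theorem the largest eigenvalue of $A$ is algebraically simple with a strictly positive eigenvector, and since $A$ is symmetric this eigenvalue equals $c-\min\sigma(h_{0,\Lambda}^N+q)$; hence $\min\sigma(h_{0,\Lambda}^N+q)$ is a simple eigenvalue with a strictly positive eigenfunction. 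Alternatively, a variational argument works: $u\mapsto|u|$ does not increase the quadratic form of $h_{0,\Lambda}^N+2d+q$, so $|\psi|$ is again a ground state whenever $\psi$ is, and the eigenvalue equation at a putative zero of a nonnegative ground state forces it to vanish at all neighbors, hence, by connectedness, identically---a contradiction that gives both positivity and simplicity.

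Finally, for (i) we extend $u$ and $q$ to $\Lambda'$ by even reflection in the $k$-th coordinate across the hyperplane $\{n_k=b_k+\tfrac12\}$, i.e.\ $u_{k,ref}(n)=u(n_1,\dots,n_{k-1},2b_k+1-n_k,n_{k+1},\dots,n_d)$ when $n_k>b_k$, and likewise for $q_{k,ref}$, and we verify the eigenvalue equation for $h_{0,\Lambda'}^N+q_{k,ref}$ pointwise on $\Lambda'$. At a site $n$ not adjacent to the reflection hyperplane (i.e.\ $n_k\ne b_k$ and $n_k\ne b_k+1$) the equation is the original one (if $n_k<b_k$) or its mirror image (if $n_k>b_k+1$), because the relevant edge counts of $N_\Lambda$ and $N_{\Lambda'}$ agree there; in particular the Neumann diagonal terms at the old boundary $\{n_k=a_k\}$ and at the new far boundary $\{n_k=2b_k-a_k+1\}$ are in exact correspondence under the reflection. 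The only subtle point is a site with $n_k=b_k$ (and, symmetrically, $n_k=b_k+1$): in $h_{0,\Lambda}^N$ the edge in direction $k$ toward $(n_1,\dots,b_k+1,\dots,n_d)$ contributed $+1$ to the diagonal edge-count $n_\Lambda(n)$, whereas in $\Lambda'$ this site is interior in direction $k$, so that $+1$ is replaced by a genuine hopping term $-\,u_{k,ref}(n_1,\dots,b_k+1,\dots,n_d)$; since the mirror neighbor carries exactly the value $u(n)$, the net contribution $-n_\Lambda(n)u(n)$ is unchanged and the equation persists. This bookkeeping---and in particular the use of $b_k+\tfrac12$, rather than $b_k$, as the reflection axis, which is precisely what makes the mirror neighbor equal to $u(n)$---is the one genuinely delicate point; the remaining three items are routine.
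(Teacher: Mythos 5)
Your proofs are correct and complete. Note that the paper itself does not prove Proposition~\ref{prop: properties}: it defers to the cited references and to the thesis \cite{Nichols}, so there is no in-paper argument to compare against; the routes you take (the edge-sum identity $\langle u,(h_{0,\Lambda}^N+2d)u\rangle=\sum_{\{i,j\}}|u(i)-u(j)|^2$ for (ii) and (iv), Perron--Frobenius or the $|u|$-trick for (iii), and pointwise verification across the reflection hyperplane for (i)) are the standard ones and exactly what is needed. Two small remarks. First, in (iv) the constant-function argument requires $\Lambda$ to be finite (for infinite $\Lambda$ the constant is not in $\ell^2$ and one only gets $\inf\sigma=-2d$ via a Weyl sequence); since the paper only ever applies the proposition to finite boxes, this is harmless, and your use of $\deg_\Lambda(i)+n_\Lambda(i)=2d$ is the right observation. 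Second, in (i) the proposition as printed writes $h_{0,\Lambda'}$ without the Neumann superscript, whereas you prove the statement for the Neumann Laplacian $h_{0,\Lambda'}^N$ on the doubled box; your reading is the correct one, as it is the version actually invoked later (e.g.\ in (\ref{eq: star}) and (\ref{eq: big u})), and your bookkeeping at the sites $n_k=b_k$ and $n_k=b_k+1$ --- trading the unit of edge count $n_\Lambda$ for the hopping term to the mirror neighbor, which carries the value $u(n)$ precisely because the reflection axis is $b_k+\tfrac12$ --- is the one genuinely nontrivial verification and is carried out correctly.
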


  These properties will be used frequently in the proofs of our results.

  \subsection{Proof of Theorem \ref{thm: bubbles}} \label{sec:bubblesproof}

  Theorem \ref{thm: bubbles} is a statement about $E_0(a_1,\ldots,a_d)$ as a function of the $i$th component $a_i$ with all other components held fixed.  Without loss of generality, we may assume $i=1$.  Then since $a_2,\ldots,a_d$ are held fixed, to simplify notation, we write
  \begin{eqnarray}
  \widehat{q}_{a_1}&:=&q_{(a_1,\ldots,a_d)}\nonumber\\
  H(a_1)&:=&h_{0,\Lambda}^N+\widehat{q}_{a_1}.\nonumber
  \end{eqnarray}
   For $a_1\in [r_1,M_1-b_1]\cap \mathbb{Z}$, let $u_{a_1}$ denote the (up to a constant multiple) unique positive ground state corresponding to $E_0(a_1,a_2,\ldots,a_d)$ known to exist by Proposition \ref{prop: properties},
    \begin{equation}\label{eq: gs eq for a1}
    H(a_1)u_{a_1}=E_0(a_1,a_2,\ldots,a_d)u_{a_1}.
    \end{equation}
      Let $u_{a_1}^{\ast}$ and $\widehat{q}_{a_1}^{\ast}$ denote the extensions of $u_{a_1}$ and $\widehat{q}_{a_1}$ to
    \begin{displaymath}
    \Lambda^{\ast}:=[1,2M_1]\times [1,M_2]\times \cdots \times [1,M_d]
    \end{displaymath}
    which are symmetric about the axis $x_1=M_1+\frac{1}{2}$.   Let $u_{a_1,ext}$ (respectively, $\widehat{q}_{a_1,ext}$) denote the extension of $u_{a_1}^{\ast}$ (respectively, $\widehat{q}_{a_1}^{\ast}$) to the strip
    \begin{displaymath}
    \mathbb{Z}\times [1,M_2]\times \cdots [1,M_d]
    \end{displaymath}
    which is $2M_1$-periodic in the first component.

    That $u_{a_1}$ satisfies the ground state equation (\ref{eq: gs eq for a1})  implies
    \begin{equation}\label{eq: star}
    (h_{0,\Lambda^{\ast}}^N+\widehat{q}_{a_1}^{\ast}|_{\Lambda^{\ast}})u_{a_1}^{\ast}=E_0(a_1,\ldots,a_d)u_{a_1}^{\ast}.
    \end{equation}

    Now we turn to the proof of the theorem.  Our goal is to show
    \begin{displaymath}
    E_0(a_1+1,a_2,\ldots,a_d)<E_0(a_1,a_2,\ldots,a_d)
    \end{displaymath}
    for all $a_1\in [r_1,M_1-b_1-1]$.  Let $a_1\in [r_1,M_1-b_1-1]$, note that the former implies that $\widehat{q}_{a_1}(1,x_2,\ldots,x_d)=0$ for all $x_i\in [1,M_i]$ and $i\in \{2,\ldots,d\}$ which guarantees that
    \begin{equation}\label{eq: 0 at ends}
    0=\widehat{q}_{a_1}^{\ast}(1,x_2,\ldots,x_d)=\widehat{q}_{a_1}^{\ast}(2M_1,x_2,\ldots,x_d).
    \end{equation}
    If $R$ is the right shift in the first coordinate, $(R\psi)(n_1,n_2,\ldots,n_d)=\psi(n_1-1,n_2,\ldots,n_d)$, and
    \begin{displaymath}
    f_{a_1,R}:=Rf_{a_1,ext}|_{\Lambda^{\ast}},
    \end{displaymath}
    for $f\in \{u,\widehat{q}\}$, then
    \begin{eqnarray}
   \lefteqn{ \langle (h_{0,\Lambda^{\ast}}^N+\widehat{q}_{a_1,R})u_{a_1,R}, u_{a_1,R}\rangle - \langle (h_{0,\Lambda^{\ast}}^N+\widehat{q}_{a_1}^{\ast}|_{\Lambda^{\ast}})u_{a_1}^{\ast},u_{a_1}^{\ast}\rangle} \nonumber \\
    &=& -\sum_{(x_2,\ldots,x_d)\in \widehat{\Lambda}}(u_{a_1}^{\ast}(1,x_2,\ldots,x_d)-u_{a_1}^{\ast}(2,x_2,\ldots,x_d))^2\nonumber\\
    &=&-\sum_{(x_2,\ldots,x_d)\in \widehat{\Lambda}}(u_{a_1}(1,x_2,\ldots,x_d)-u_{a_1}(2,x_2,\ldots,x_d))^2,\label{eq: form difference}
    \end{eqnarray}
    where
    \begin{displaymath}
    \widehat{\Lambda}:=[1,M_2]\times \ldots \times [1,M_d].
    \end{displaymath}
    We encourage the reader to check the above calculation of (\ref{eq: form difference}) (at least in the simplest case $d=1$) using the definition of the Neumann Laplacian, reflection symmetry of $\widehat{q}_{a_1}^{\ast}|_{\Lambda^{\ast}}$ in the first coordinate, and (\ref{eq: 0 at ends}).

    Obviously, the quantity (\ref{eq: form difference}) is non-positive.  If (\ref{eq: form difference}) vanishes, then the restriction of $u_{a_1}$ to the ``slab'' $K:=\{1\} \times [1,M_2] \times \ldots \times [1,M_d]$ turns out to be the ground state of the Neumann Laplacian on $K$, for details see \cite{Nichols}. Thus, by Proposition~\ref{prop: properties}, $E_0(a_1,\ldots,a_d)=-2d$. In other words,  if $E_0(a_1,a_2,\ldots,a_d)\neq -2d$, then the quantity in (\ref{eq: form difference}) is strictly negative.

    The assumptions of Theorem \ref{thm: bubbles}, in either case (i) or (ii), are enough to guarantee that $E_0(a_1,a_2,\ldots,a_d)\neq-2d$.  In case (i), this can be seen by noting that the ground state eigenvalue of $h_{0,\Lambda}^N$ is $-2d$, and that a sign-definite perturbation strictly increases ($q>0$) or decreases ($q<0$) the ground state eigenvalue.  In case (ii), we make use of the following fact:

 If $\min \sigma(h_{0,\Lambda}^N+q_a)=-2$ for some $a\in [0,M_1-b_1]$, then $\min \sigma(h_{0,\Lambda}^N+q_a)=-2$ for all $a\in [0,M_1-b_1]$. To see this, note that $\min \sigma(h_{0,\Lambda}^N+q_a) =-2$ implies that the ground state is constant outside the support of $q_a$. Thus the ground state can be shifted together with the potential to produce the ground state for other values of $a$, leaving the ground state energy unaffected.

Thus, under the assumptions of Theorem \ref{thm: bubbles}, the quantity (\ref{eq: form difference}) is strictly negative, meaning that

  \begin{equation}\label{eq: form comparison}
  \langle (h_{0,\Lambda^{\ast}}^N+\widehat{q}_{a_1,R})u_{a_1,R},u_{a_1,R}\rangle<\langle (h_{0,\Lambda^{\ast}}^N+\widehat{q}_{a_1}^{\ast}|_{\Lambda^{\ast}})u_{a_1}^{\ast},u_{a_1}^{\ast}\rangle=E_0(a_1,a_2,\ldots,a_d)\|u_{a_1}^{\ast}\|^2,
  \end{equation}
  where the equality follows from (\ref{eq: star}).

    By definition, $\|u_{a_1}^{\ast}\|=\|u_{a_1,R}\|$; applying the variational principle and using the splitting formula from Proposition \ref{prop: properties} along with (\ref{eq: form comparison}) gives
    \begin{eqnarray}
    E_0(a_1,a_2,\ldots,a_d)&>&\inf \sigma(h_{0,\Lambda^{\ast}}^N+\widehat{q}_{a_1,R})\label{eq: strict}\\
    &\geq&\inf \sigma(H(a_1-1)\oplus H(a_1+1))\nonumber\\
    &=&\min \{E_0(a_1-1,a_2,\ldots,a_d),E_0(a_1+1,a_2,\ldots,a_d)\}.\label{eq: take min}
    \end{eqnarray}

    We prove the theorem by induction on $a_1$ using (\ref{eq: strict}) and (\ref{eq: take min}).  The first step is to show
    \begin{displaymath}
    E_0(r_1,a_2,\ldots,a_d)>E_0(r_1+1,a_2,\ldots,a_d).
    \end{displaymath}
    We make use of (\ref{eq: take min}) above with $a_1=r_1$.  If $M_1-b_1$ is even then by reflection symmetry of $E_0(\cdot)$ in the first coordinate on $\Delta$,
    \begin{displaymath}
    E_0(r_1-1,a_2,\ldots,a_d)=E_0(r_1+1,a_2,\ldots,a_d),
    \end{displaymath}
    thus the minimum in (\ref{eq: take min}) is certainly $E_0(r_1+1,a_2,\ldots,a_d)$.  If $M_1-b_1$ is odd, then
    \begin{displaymath}
    E_0(r_1-1,a_2,\ldots,a_d)=E_0(r_1,a_2,\ldots,a_d),
    \end{displaymath}
    and because we have the strict inequality (\ref{eq: strict}), the minimum in (\ref{eq: take min}) is $E_0(r_1+1,a_2,\ldots,a_d)$.

    For the induction step, suppose $a_1-1,a_1,a_1+1\in [r_1,M_1-b_1]$ and
    \begin{equation} \label{eq: induction hypothesis}
    E_0(a_1-1,a_2,\ldots,a_d) \ge E_0(a_1,a_2,\ldots,a_d).
    \end{equation}
    Using (\ref{eq: take min}), we have
    \begin{equation}\label{eq: induction step}
    E_0(a_1,a_2,\ldots,a_d)>\min\{E_0(a_1-1,a_2,\ldots,a_d),E_0(a_1+1,a_2,\ldots,a_d)\}.
    \end{equation}
    With the induction hypothesis (\ref{eq: induction hypothesis}), the minimum in (\ref{eq: induction step}) must be $E_0(a_1+1,a_2,\ldots,a_d)$.

\subsection{Proof of Theorem \ref{thm: omega star} and Corollary~\ref{cor: upper and lower edges}}
\label{sec:consequences}

In addition to Theorem \ref{thm: bubbles}, the proof of Theorem \ref{thm: omega star} relies on the following two facts, which are discrete versions of results known as Allegretto-Piepenbrink Theorem and Shnol's Theorem in the continuum, e.g.\ \cite{SSG}.

\begin{theorem}[Allegretto-Piepenbrink]\label{thm: Allegretto-Piepenbrink}
Let $E\in \mathbb{R}$ and $V:\mathbb{Z}^d\rightarrow \mathbb{R}$ be bounded.  If there exists a non-trivial $u\geq 0$ satisfying the difference equation $(h_0+V)u=Eu$, then $E\leq \inf \sigma(h_0+V)$.
\end{theorem}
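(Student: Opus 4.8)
The plan is to show that $H:=h_0+V$ satisfies $H-E\ge 0$ in the quadratic form sense, which for a bounded self-adjoint operator is equivalent to $E\le\inf\sigma(H)$. Since the finitely supported sequences form a form core for $H$, it suffices to prove $\langle (H-E)\phi,\phi\rangle\ge 0$ for every real finitely supported $\phi$ on $\Z^d$. First I would upgrade the hypothesis $u\ge 0$, $u\not\equiv 0$, to strict positivity $u(n)>0$ for all $n$: this is a discrete maximum principle, because if $u(n_0)=0$ then evaluating $(h_0+V)u=Eu$ at $n_0$ gives $\sum_{|m-n_0|=1}u(m)=0$, and nonnegativity of $u$ forces $u$ to vanish at every neighbor of $n_0$; connectedness of $\Z^d$ then yields $u\equiv 0$, a contradiction. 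It is also convenient to record that $h_0+2d$ equals the graph Laplacian $\Delta$, with $\langle \Delta\phi,\phi\rangle=\sum_{\{n,m\}:|n-m|=1}(\phi(n)-\phi(m))^2$ (sum over unoriented edges), so that $H-E=\Delta+W$ with $W:=V-E-2d$, and the solution satisfies $\Delta u=-Wu$.

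The core step is a discrete ground-state representation (a Picone-type identity): for finitely supported $\phi$, put $\psi:=\phi/u$, which is well defined since $u>0$, and claim
\[
\langle (H-E)\phi,\phi\rangle \;=\; \sum_{\{n,m\}:\,|n-m|=1} u(n)u(m)\bigl(\psi(n)-\psi(m)\bigr)^2 .
\]
To establish this I would expand $\sum_{\{n,m\}}(\phi(n)-\phi(m))^2$ with $\phi=u\psi$, subtract the claimed edge sum term by term, and identify the remainder as $\sum_{\{n,m\}}(u(n)-u(m))(g(n)-g(m))$ where $g:=\phi^2/u$; a discrete summation by parts rewrites this as $\langle \Delta u, g\rangle=-\langle Wu, g\rangle=-\langle W\phi,\phi\rangle$, which cancels the $W$-contribution on the left exactly. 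All the sums involved are finite because $\phi$, and hence $g$, has finite support, so no convergence questions arise. Since each coefficient $u(n)u(m)$ is strictly positive, the right-hand side is nonnegative, giving $\langle(H-E)\phi,\phi\rangle\ge 0$; passing to the form closure yields $E\le\inf\sigma(H)$.

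The hard part will be purely the bookkeeping in the Picone identity, in particular keeping the edge sums consistent and handling the $-2d$ shift relating $h_0$ to the graph Laplacian $\Delta$ and the shifted potential $W$. Conceptually there is no obstruction: strict positivity of $u$ is immediate from connectedness, the algebraic identity is elementary once the substitution $\psi=\phi/u$ is made, and boundedness of $h_0$ and $V$ makes the final density argument routine. An alternative route would be a finite-volume approximation combined with a Combes--Thomas exponential-decay estimate for $u$, but the ground-state transform above is cleaner and self-contained.
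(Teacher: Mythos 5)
Your proposal is correct, and there is nothing substantive to compare it against: the paper does not prove Theorem~\ref{thm: Allegretto-Piepenbrink} at all, stating only that proofs of it and of Schnol's theorem ``are provided in \cite{Nichols}.'' Your argument is the standard discrete ground-state transform (Picone/Agmon identity) and it goes through exactly as you outline. The strict positivity step is right: evaluating $-\sum_{|m-n_0|=1}u(m)+V(n_0)u(n_0)=Eu(n_0)$ at a zero $n_0$ of $u$ forces all neighbors to vanish, and connectedness of $\Z^d$ kills $u$. The algebraic identity
\[
\bigl(u(n)\psi(n)-u(m)\psi(m)\bigr)^2=u(n)u(m)\bigl(\psi(n)-\psi(m)\bigr)^2+\bigl(u(n)-u(m)\bigr)\bigl(g(n)-g(m)\bigr),\quad g=\phi^2/u,
\]
checks out by direct expansion, and the summation by parts $\sum_{\{n,m\}}(u(n)-u(m))(g(n)-g(m))=\langle\Delta u,g\rangle=-\langle W\phi,\phi\rangle$ is legitimate even though $u$ need not be bounded or square-summable, because $g$ and $\psi$ inherit the finite support of $\phi$, so every sum involved runs over finitely many edges. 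Nonnegativity on real finitely supported $\phi$ extends to all of $\ell^2(\Z^d)$ since $h_0+V$ is a bounded real symmetric operator, giving $E\le\inf\sigma(h_0+V)$. The one presentational point worth making explicit in a write-up is the bookkeeping you already flag: the passage between sums over ordered neighbor pairs (as in the definition of $h_0$) and sums over unoriented edges, and the $-2d$ shift identifying $h_0+2d$ with the graph Laplacian. Your proof is self-contained and could replace the citation to \cite{Nichols} for this statement.
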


\begin{theorem}[Schnol]\label{thm: Schnol}
If $u$ is a polynomially bounded generalized eigenfunction of $h_0+V$ corresponding to $E\in \mathbb{R}$, then $E\in \sigma(h_0+V)$.
\end{theorem}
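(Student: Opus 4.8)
\noindent\textbf{Proof proposal for Theorem~\ref{thm: Schnol}.}
The plan is to verify Weyl's criterion for the bounded self-adjoint operator $A:=h_0+V$: it suffices to produce a sequence $(\psi_k)\subset\ell^2(\mathbb{Z}^d)$ with $\|\psi_k\|=1$ and $\|(A-E)\psi_k\|\to 0$, and one obtains such a sequence by cutting the given generalized eigenfunction $u$ off to larger and larger boxes. If $u$ has finite support the statement is trivial, since then $u$ is already an $\ell^2$ eigenfunction and $E$ an eigenvalue; so we may assume $\tau_L:=\sum_{n\in Q_L}|u(n)|^2>0$ for all large $L$, where $Q_L:=[-L,L]^d\cap\mathbb{Z}^d$, and set $\psi_L:=\chi_{Q_L}u/\|\chi_{Q_L}u\|$ (with $\chi_{Q_L}$ the corresponding multiplication operator).

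The key computation is a commutator identity. Because $(A-E)u=0$ pointwise and $V$ acts diagonally, $(A-E)(\chi_{Q_L}u)=[h_0,\chi_{Q_L}]u$, and from the nearest-neighbor form of $h_0$ one reads off $([h_0,\chi_{Q_L}]u)(n)=-\sum_{|m-n|=1}(\chi_{Q_L}(m)-\chi_{Q_L}(n))u(m)$, which vanishes unless $n$ lies within distance $1$ of $\partial Q_L$. Hence, with $\sigma_L:=\sum_{n\in S_L}|u(n)|^2$ where $S_L$ is a shell of bounded thickness around $\partial Q_L$,
\[
\|(A-E)\psi_L\|^2=\frac{\|[h_0,\chi_{Q_L}]u\|^2}{\tau_L}\le\frac{C_d\,\sigma_L}{\tau_L},
\]
with $C_d$ depending only on $d$. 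Since $S_L\subset Q_{L+c}\setminus Q_{L-c}$ for a fixed $c=c(d)$, we also get $\sigma_L\le\tau_{L+c}-\tau_{L-c}$.

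It remains to exhibit a subsequence $L_k\to\infty$ along which $\sigma_{L_k}/\tau_{L_k}\to 0$, and this growth dichotomy is the step I expect to be the main (though standard) obstacle. If no such subsequence existed, there would be $\varepsilon>0$ and $L_0$ with $\sigma_L\ge\varepsilon\tau_L$ for all $L\ge L_0$; combining this with $\sigma_L\le\tau_{L+c}-\tau_{L-c}$ and the monotonicity of $L\mapsto\tau_L$ yields $\tau_{L+c}\ge(1+\varepsilon)\tau_{L-c}$, and iterating along $L=L_0+c,L_0+3c,\dots$ forces $\tau_{L_0+2ck}\ge(1+\varepsilon)^k\tau_{L_0}$, i.e.\ exponential growth of $\tau_L$ in $L$. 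This contradicts the polynomial bound $|u(n)|\le C(1+|n|)^p$, which gives $\tau_L\le C'(1+L)^{2p+d}$. Therefore $\liminf_{L}\sigma_L/\tau_L=0$; picking $L_k$ realizing this liminf and setting $\psi_k:=\psi_{L_k}$ verifies Weyl's criterion, so $E\in\sigma(A)$. The remaining ingredients — the commutator identity, the localization of $[h_0,\chi_{Q_L}]u$ to the shell $S_L$, and the estimate $\tau_L\lesssim L^{2p+d}$ — are routine consequences of the locality of $h_0$ together with the polynomial bound.
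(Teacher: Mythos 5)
Your argument is correct and complete: the commutator identity $(h_0+V-E)(\chi_{Q_L}u)=[h_0,\chi_{Q_L}]u$, the localization of that commutator to a shell of bounded width, and the exponential-versus-polynomial growth dichotomy that extracts a subsequence with $\sigma_{L_k}/\tau_{L_k}\to 0$ together give a valid Weyl sequence. The paper itself does not prove Theorem~\ref{thm: Schnol} but defers to \cite{Nichols}; your proof is the standard cutoff argument for the lattice version of Schnol's theorem, which is exactly what one expects that reference to contain.
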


Proofs of these facts, most likely also well known, are provided in \cite{Nichols}.

\begin{proof}[Proof of Theorem \ref{thm: omega star}] (a)  Note that $h_{\omega}$ restricted to $\Lambda+kM$ with Neumann boundary condition is unitarily equivalent (via translation) to $h_{0,\Lambda}^N+q_{\omega_k}$.  Thus, the splitting formula for the Neumann Laplacian from Proposition \ref{prop: properties} gives
      \begin{eqnarray}
        \inf \ \sigma(h_{\omega})&\geq&\inf \ \sigma\bigg(\bigoplus_{k\in \mathbb{Z}^d}(h_{\Lambda}^N+q_{\omega_k}) \bigg)\nonumber\\
        &\geq&\inf_{k\in \mathbb{Z}^d} \ \min \ \sigma(h_{\Lambda}^N+q_{\omega_k})\nonumber\\
        &\geq&\min \ \{E_0(a):a\in \Delta\}\nonumber\\
        &=&E_0(a_{min}),\label{eq: lower bound on omega}
      \end{eqnarray}
      with the corner position $a_{min}=(M_i-b_i)_{i=1}^d.$   Thus, $E_0(a_{min})$ is a lower bound on the infimum of the spectrum of any operator $h_{\omega}$ with $\omega \in \Omega$.

      In view of (\ref{eq: lower bound on omega}), $\omega^{\ast}$ is spectrally minimizing if $\inf \ \sigma(h_{\omega^{\ast}})=E_0(a_{min})$. Applying the Allegretto-Piepenbrink and Schnol Theorems,  it is enough to show the existence of a strictly positive bounded function, $\psi$, which satisfies $h_{\omega^{\ast}}\psi=E_0(a_{min})\psi$ in the sense of a solution of a finite difference equation.

Let $u$ denote the strictly positive ground state of $h_{0,\Lambda}^N+q_{a_{min}}$, so that
      \begin{equation}\label{eq: little u}
      (h_{0,\Lambda}^N+q_{a_{min}})u=E_0(a_{min})u.
      \end{equation}
      Let $U$ and $Q$ denote the extensions of $u$ and $q_{a_{min}}$ to
      \begin{displaymath}
      \Lambda_2=\prod_{i=1}^d[1,2M_i],
      \end{displaymath}
      which are reflection symmetric on $\Lambda_2$ in each of the $d$ coordinates.   Looking at (\ref{eq: little u}), we immediately have
      \begin{equation} \label{eq: big u}
      (h_{0,\Lambda_2}^N+Q)U=E_0(a_{min})U.
      \end{equation}
      If $\psi$ is the extension of $U$ to all of $\mathbb{Z}^d$ which is $2M_i$-periodic in the $i$th coordinate, then reflection symmetry of $U$ on $\Lambda_2$ and (\ref{eq: big u}) give
      \begin{displaymath}
      h_{\omega^{\ast}}\psi=E_0(a_{min})\psi.
      \end{displaymath}

      (b) To show $\omega^{\ast}$ is spectrally maximizing for the displacement model defined with single-site potential $q$, it suffices to show $\omega^{\ast}$ is spectrally minimizing for the model with single-site potential $-q$.  In fact, applying spectral mapping and the unitary involution
      \begin{equation}\label{eq: unitary involution}
      (U\phi)(n)=(-1)^{\sum_{i=1}^d |n_i|}\phi(n), \qquad n\in \mathbb{Z}^d, \ \phi\in \ell^2(\mathbb{Z}^d),
      \end{equation}
      we have $Uh_0U=-h_0$, and thus, for any $\omega \in \Omega$,
      \begin{eqnarray} \label{eq:unitaryequiv}
      \max \sigma(h_0+V_{\omega})&=&-\min \sigma(-(h_0+V_{\omega}))\nonumber\\
      &=&-\min \sigma(U(-h_0-V_{\omega})U)\nonumber\\
      &=&-\min \sigma(h_0-V_{\omega}).\label{eq: min implies max}
      \end{eqnarray}
      If $\omega^{\ast}$ is spectrally minimizing for $-q$, then (\ref{eq: min implies max}) gives
      \begin{eqnarray}
      \max \sigma(h_0+V_{\omega})&\leq&-\min \sigma(h_0-V_{\omega^{\ast}})\nonumber\\
      &=&\max \sigma(-h_0+V_{\omega^{\ast}})\nonumber\\
      &=&\max \sigma(h_0+V_{\omega^{\ast}}),\nonumber
      \end{eqnarray}
      for any $\omega \in \Omega$.  Therefore, $\omega^{\ast}$ is spectrally maximizing for $q$.

      That $\omega^{\ast}$ is spectrally minimizing for $-q$ follows immediately from (a):  If $d\geq2$, sign-definiteness of $q$ implies sign-definiteness of $-q$.  By (a), sign-definiteness of $-q$ guarantees that $\omega^{\ast}$ is spectrally minimizing for $-q$.  If $d=1$, the assumption $\min \sigma(h_{0,\Lambda}^N-q_{\widetilde{a}})\neq -2$ together with (a) implies $\omega^{\ast}$ is spectrally minimizing for $-q$.

\end{proof}

\begin{proof}[Proof of Corollary \ref{cor: upper and lower edges}]
Under the assumptions of Theorem~\ref{thm: omega star}(a) we have
\[\min \sigma(h_{\omega^{\ast}}) \leq \min \sigma(h_{\omega})\] for every $\omega \in \Omega$.  Thus $\min \sigma(H_{\omega^*}) \le \inf \Sigma$.
Since $\omega^{\ast}\in \mathcal{C}_{per}$, we have by Theorem~\ref{thm: deterministic spectrum} that $\sigma(h_{\omega^*}) \subset \Sigma$ and, in particular, $\min \sigma(h_{\omega^*}) \ge \inf \Sigma$. We conclude $\inf \Sigma = \min \sigma(h_{\omega^*})$.

Similar, under the assumptions of Theorem~\ref{thm: omega star}(b), it follows that \[\sup \Sigma = \max \sigma(h_{\omega^*}).\]
\end{proof}

\section{Non-uniqueness of the One-dimensional Minimizer} \label{sec:nonunique}

\subsection{The Discrete Periodic Laplacian}
  Suppose $S \subset \mathbb{Z}^d$ has the form
  \begin{equation}\label{lambdaform}
    S=\prod_{i=1}^d [a_i,b_i]
  \end{equation}
  for $a_i,b_i \in \mathbb{Z}$, with $a_i\leq b_i$ for all $1\leq i \leq d$.  For $u:S\rightarrow \mathbb{C}$, define the function $u_T:\mathbb{Z}^d\rightarrow \mathbb{C}$ by
  \begin{displaymath}
    u_T(n)=\sum_{k\in \mathbb{Z}^d}u(n-kL).
  \end{displaymath}
  Here $L=(L_i)_{i=1}^d$ with $L_i:=b_i-a_i+1$ and $kL:=(k_iL_i)_{i=1}^d$.  Thus, $u_T$ is the $L$-periodic extension of $u$ to all of $\mathbb{Z}^d$. One may view $u_T$ as $u$ defined on a $d$-dimensional torus.

    The discrete Laplacian on $S$ with periodic boundary condition, or the periodic Laplacian, denoted $h_{0,S}^P$, has domain $\ell^2(S)$ and acts on a function $u\in \ell^2(S)$ by
    \begin{displaymath}
      (h_{0,S}^Pu)(n)=(h_0u_T)(n)\qquad \mbox{for all}\; n\in S.
    \end{displaymath}

  Thus, if $u_T$ is viewed as $u$ on a $d$-dimensional torus, then $h_{0,S}^P$ is simply the discrete Laplacian on the torus.  We summarize some properties of the periodic Laplacian for which proofs can be found in \cite{Nichols}:

  \begin{proposition}[Properties of the Periodic Laplacian]\label{prop: properties of per}\ \ \ \ \ \ \ \ \ \
  \begin{itemize}
  \item[(i)] If $V:S\rightarrow \mathbb{R}$, then the ground state of $h_{0,S}^P+V$ is simple and the corresponding eigenspace is spanned by a strictly positive eigenfunction.
  \item[(ii)] $h_{0,S}^P\geq h_{0,S}^N$ in the form sense.
  \item[(iii)] If $V:\mathbb{Z}^d\rightarrow \mathbb{R}$ is $L$-periodic, then $\inf \sigma(h_0+V)=\inf \sigma(h_{0,C}^P+V|_C)$, where $C$ is any period cell of $V$.
  \item[(iv)] Let $u$ and $V$ be functions from $[a,b]$ (as an interval in $\mathbb{Z}$) into $\mathbb{R}$ and $E\in \mathbb{R}$.  If $u(a)=u(b)$, then $(h_{0,[a,b]}^N+V)u=Eu$ if and only if $(h_{0,[a,b]}^P+V)u=Eu$.  In particular, if $V$ is reflection symmetric, then the ground states of $h_{0,[a,b]}^N$ and $h_{0,[a,b]}^P$ coincide.
  \end{itemize}
  \end{proposition}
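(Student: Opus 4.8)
The plan is to take the four items in turn; items (i) and (ii) are standard facts about finite graph Laplacians, while (iii) reduces to the discrete Allegretto--Piepenbrink and Schnol theorems already available (Theorems~\ref{thm: Allegretto-Piepenbrink} and \ref{thm: Schnol}). For (i) I would run the Perron--Frobenius argument underlying Proposition~\ref{prop: properties}(iii): the matrix $h_{0,S}^{P}$ has nonpositive diagonal and off-diagonal entries (each off-diagonal entry is $-1$ times the number of torus edges joining the two sites), so for $c>\max_{n\in S}|V(n)|$ the matrix $cI-(h_{0,S}^{P}+V)$ has strictly positive diagonal and nonnegative off-diagonal entries, and it is irreducible because the periodic graph on $S$ (a discrete torus) is connected; Perron--Frobenius then gives a simple top eigenvalue with strictly positive eigenvector, which by spectral mapping is the (simple, positive) ground state of $h_{0,S}^{P}+V$. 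For (ii) I would compare Dirichlet forms: using $n_{S}(i)+\#\{j\in S:\|i-j\|_{1}=1\}=2d$ one has, for $u\in\ell^{2}(S)$,
\[
\langle h_{0,S}^{N}u,u\rangle=\sum_{\substack{\{i,j\}\subset S\\ \|i-j\|_{1}=1}}|u(i)-u(j)|^{2}-2d\sum_{i\in S}|u(i)|^{2},
\]
and the same computation for $h_{0,S}^{P}$, carried out on the torus (whose vertices again all have degree $2d$), produces the identical expression plus $\sum|u_{T}(i)-u_{T}(j)|^{2}$ over the ``wrap-around'' edges present in the torus but not in $S\subset\mathbb{Z}^{d}$; since that extra sum is nonnegative, $h_{0,S}^{P}\geq h_{0,S}^{N}$ in the form sense.

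For (iii), let $C$ be a period cell of $V$, let $\phi>0$ be the ground state of $h_{0,C}^{P}+V|_{C}$ provided by item (i), and set $E:=\min\sigma(h_{0,C}^{P}+V|_{C})$. By the definition of the periodic Laplacian, the $L$-periodic extension $\Phi$ of $\phi$ to $\mathbb{Z}^{d}$ satisfies $(h_{0}+V)\Phi=E\Phi$ pointwise, and $\Phi$ is bounded and strictly positive. Allegretto--Piepenbrink then gives $E\leq\inf\sigma(h_{0}+V)$, while Schnol (applicable since a bounded function is polynomially bounded) gives $E\in\sigma(h_{0}+V)$, hence $\inf\sigma(h_{0}+V)\leq E$; together these yield $\inf\sigma(h_{0}+V)=E$.

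For (iv), the equivalence of the two eigenvalue equations is a pointwise check: on $[a,b]\subset\mathbb{Z}$ the operators $h_{0,[a,b]}^{N}$ and $h_{0,[a,b]}^{P}$ act identically at every interior point, while at the endpoints $(h_{0,[a,b]}^{P}u)(a)=(h_{0,[a,b]}^{N}u)(a)+(u(a)-u(b))$ and $(h_{0,[a,b]}^{P}u)(b)=(h_{0,[a,b]}^{N}u)(b)+(u(b)-u(a))$, so $u(a)=u(b)$ forces $h_{0,[a,b]}^{N}u=h_{0,[a,b]}^{P}u$ as functions on $[a,b]$ and the ``if and only if'' follows at once. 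For the final assertion, which I read as a statement about $h_{0,[a,b]}^{N}+V$ and $h_{0,[a,b]}^{P}+V$, note that when $V$ is reflection symmetric the reflection unitary commutes with $h_{0,[a,b]}^{N}+V$ (the Neumann Laplacian on an interval is manifestly reflection symmetric), so its ground state $\phi$, which is simple and strictly positive by Proposition~\ref{prop: properties}(iii), is itself reflection symmetric and in particular has $\phi(a)=\phi(b)$; by the first part of (iv) it is then an eigenfunction of $h_{0,[a,b]}^{P}+V$, and being strictly positive it must be the ground state of $h_{0,[a,b]}^{P}+V$ as well, since by item (i) such an operator has, up to scalar multiples, exactly one positive eigenfunction, namely its ground state. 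The only genuinely non-elementary input is the pair of cited theorems used in (iii); the step most prone to slips is the edge-counting in (ii), where degenerate cells with some $L_{i}\in\{1,2\}$ (a torus with self-loops or double edges) must be accounted for, although they never affect the inequality.
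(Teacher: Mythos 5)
The paper does not actually prove Proposition~\ref{prop: properties of per}: all four items are deferred to \cite{Nichols}, so there is no in-paper argument to compare against. Your proof is correct and self-contained, and it uses exactly the toolkit the paper relies on elsewhere: Perron--Frobenius for simplicity and strict positivity of the ground state (parallel to Proposition~\ref{prop: properties}(iii)); a Dirichlet-form comparison for (ii), where the identity $n_S(i)+\#\{j\in S:\|i-j\|_1=1\}=2d$ and the constant torus degree reduce the difference of the two forms to a nonnegative sum over wrap-around edges; and, for (iii), the combination of Theorems~\ref{thm: Allegretto-Piepenbrink} and \ref{thm: Schnol} applied to the bounded, strictly positive periodic extension of the cell ground state --- precisely the device the authors use inside the proofs of Theorem~\ref{thm: omega star}(a) and Lemma~\ref{lemma: unequal at endpoints}. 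Two points you handle correctly but should keep explicit in a written-out version: the degenerate cells with $L_i\in\{1,2\}$ in (ii), where self-loops contribute $0$ and double edges an extra nonnegative term (you flag this); and, in the last clause of (iv), the step from ``strictly positive eigenfunction'' to ``ground state,'' which rests on the fact that two strictly positive vectors on a finite set cannot be orthogonal, so a positive eigenfunction must lie in the (simple) ground-state eigenspace. Your reading of that last clause as a statement about $h_{0,[a,b]}^{N}+V$ and $h_{0,[a,b]}^{P}+V$ is the only one that makes the hypothesis on $V$ relevant, and the symmetry argument giving $\phi(a)=\phi(b)$ is the intended one.
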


  \subsection{A Preliminary Result}

  We begin with a result that provides a necessary criterion for a periodic configuration $\omega$ to be spectrally minimizing for $h_{\omega}$. This result holds for arbitrary dimension $d\ge 1$.

   Suppose $\omega \in \Omega$ is an $L$-periodic configuration so that
  \begin{displaymath}
    \omega_{i+(n_1L_1,n_2L_2,\ldots,n_dL_d)}=\omega_{i+nL}=\omega_i
  \end{displaymath}
  for every $i\in \mathbb{Z}^d$ and every $n=(n_1,n_2,\ldots,n_d)\in \mathbb{Z}^d$.  It follows that the potential $V_{\omega}$ given by (\ref{eqn: random potential}) is $\widehat{L}$-periodic, where $\widehat{L}=LM=(L_iM_i)_{i=1}^d$ and a period cell for $V_{\omega}$ is
  \begin{equation}\label{eqn: period cell for V}
    \underline{\Lambda}=\prod_{i=1}^d[1,L_iM_i].
  \end{equation}
   The cell $\underline{\Lambda}$ can be written as a disjoint union of translates of the basic cell $\Lambda = \prod_{i=1}^d [1,M_i]$, by setting $\mathcal{K}=\prod_{i=1}^d[1,L_i]$
  and defining $\Lambda_k=\Lambda+(k-1)M$ for $k\in \mathcal{K}$.  Here $k-1$ is the vector with $i$th component $k_i-1$.

  Before proceeding with our preliminary result, we fix the following notations.  For $\bullet \in \{N,P\}$, let $h_{\omega,S}^{\bullet}$ denote the Neumann ($\bullet=N$) or periodic ($\bullet=P$) restriction of $h_{\omega}$ to the set $S\subset \mathbb{Z}^d$ and let $E_0(h_{\omega,S}^{\bullet})$ denote the corresponding ground state eigenvalue.  For $\omega_i\in \Delta$, let $E_0(\omega_i)$ denote the ground state eigenvalue of $h_{0,\Lambda}^N+q_{\omega_i}$.  Lastly, set as before $E_{min}=\inf_{\omega\in \Omega} \min \sigma(h_{\omega})$.

  The following result shows that in a spectrally minimizing periodic configuration $\omega$, every $\omega_i$ is a corner of $\Delta$, i.e.\ it lies in
  \[ {\mathcal C} :=  \{(a_1,\ldots,a_d)\in\Delta: a_i\in\{0,M_i-b_i\} \ \mbox{for all  $i=1,\ldots,d$}\}.\]

    \begin{lemma}\label{lemma: minimizers sit in corners}
    If $\omega\in \Omega$ is a periodic configuration with $\min \sigma(h_{\omega})=E_{min}$, then $\omega_k\in {\mathcal C}$ for every $k\in \mathbb{Z}^d$.  Moreover, in this case, $E_{min}=E_0(h_{\omega,\underline{\Lambda}}^N)=E_0(h_{\omega,\underline{\Lambda}}^P)$, where $\underline{\Lambda}$ is the period cell for $V_{\omega}$ given by (\ref{eqn: period cell for V}).  If $\psi_{\omega}$ is the positive ground state for $h_{\omega,\underline{\Lambda}}^N$ with ground state eigenvalue $E_{min}$, then, for every $k\in {\mathcal K}$, $\psi_{\omega}|_{\Lambda_k}$ is (up to normalization) the positive ground state for $h_{\omega,\Lambda_k}^N$ with ground state eigenvalue $E_0(\omega_k)=E_{min}$.
  \end{lemma}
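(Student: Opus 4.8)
The plan is to combine the two comparison estimates we have for restrictions of $h_\omega$ with the positivity of the Neumann ground state on the full period cell. Since $\omega$ is $L$-periodic, $V_\omega$ is $\widehat L$-periodic with period cell $\underline\Lambda=\prod_i[1,L_iM_i]$, so Proposition~\ref{prop: properties of per}(iii) gives $\min\sigma(h_\omega)=E_0(h^P_{\omega,\underline\Lambda})$. Using $h^P_{\omega,\underline\Lambda}\ge h^N_{\omega,\underline\Lambda}$ (Proposition~\ref{prop: properties of per}(ii)), and then the Neumann splitting formula (Proposition~\ref{prop: properties}(ii)) across the partition $\underline\Lambda=\bigsqcup_{k\in\mathcal K}\Lambda_k$ — which is legitimate since $V_\omega|_{\underline\Lambda}$ is a multiplication operator and hence splits along the partition — together with the fact that each $h^N_{\omega,\Lambda_k}$ is unitarily equivalent (via translation) to $h^N_{0,\Lambda}+q_{\omega_k}$, I would assemble the chain
\[
E_{min}=\min\sigma(h_\omega)=E_0(h^P_{\omega,\underline\Lambda})\ge E_0(h^N_{\omega,\underline\Lambda})\ge\min_{k\in\mathcal K}E_0(h^N_{\omega,\Lambda_k})=\min_{k\in\mathcal K}E_0(\omega_k)\ge E_0(a_{min})=E_{min},
\]
where the last inequality is trivial ($\omega_k\in\Delta$) and the final equality $E_{min}=E_0(a_{min})$ is exactly what the proof of Theorem~\ref{thm: omega star}(a) establishes (the bound $\min\sigma(h_\omega)\ge E_0(a_{min})$ holds for all $\omega\in\Omega$, and $\omega^*$ attains $E_0(a_{min})$). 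Every inequality in the chain is therefore an equality; in particular $E_{min}=E_0(h^N_{\omega,\underline\Lambda})=E_0(h^P_{\omega,\underline\Lambda})$, which is the second assertion of the lemma.

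Next I would pass to the ground state. Since $\underline\Lambda$ is a box, hence connected, Proposition~\ref{prop: properties}(iii) provides a simple, strictly positive ground state $\psi_\omega$ of $h^N_{\omega,\underline\Lambda}$ with eigenvalue $E_{min}$. Testing the form inequality $h^N_{\omega,\underline\Lambda}\ge\bigoplus_{k\in\mathcal K}h^N_{\omega,\Lambda_k}$ against $\psi_\omega$ and using the variational principle on each $\Lambda_k$ gives
\[
E_{min}\|\psi_\omega\|^2=\langle h^N_{\omega,\underline\Lambda}\psi_\omega,\psi_\omega\rangle\ge\sum_{k\in\mathcal K}\langle h^N_{\omega,\Lambda_k}(\psi_\omega|_{\Lambda_k}),\psi_\omega|_{\Lambda_k}\rangle\ge\sum_{k\in\mathcal K}E_0(\omega_k)\,\|\psi_\omega|_{\Lambda_k}\|^2\ge E_{min}\sum_{k\in\mathcal K}\|\psi_\omega|_{\Lambda_k}\|^2=E_{min}\|\psi_\omega\|^2,
\]
where the last inequality uses $E_0(\omega_k)\ge E_0(a_{min})=E_{min}$ and the identity $\sum_k\|\psi_\omega|_{\Lambda_k}\|^2=\|\psi_\omega\|^2$. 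Hence every inequality here is an equality. Since $\psi_\omega>0$ we have $\psi_\omega|_{\Lambda_k}\ne 0$ for each $k$, so equality in the third step forces $E_0(\omega_k)=E_{min}$ for all $k\in\mathcal K$ (and then all $k\in\mathbb Z^d$ by periodicity), while equality in the second step shows $\psi_\omega|_{\Lambda_k}$ realizes the ground state energy $E_0(h^N_{\omega,\Lambda_k})=E_0(\omega_k)=E_{min}$ and is therefore, by simplicity, up to normalization the positive ground state of $h^N_{\omega,\Lambda_k}$; this is the final assertion of the lemma.

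It then remains to turn $E_0(\omega_k)=E_0(a_{min})=\min_{a\in\Delta}E_0(a)$ into $\omega_k\in\mathcal C$, and here I invoke Theorem~\ref{thm: bubbles}: on the reduced box $\prod_{i=1}^d[r_i,M_i-b_i]$ the function $E_0(\cdot)$ is strictly decreasing in each coordinate separately, so its minimum there is attained only at $(M_i-b_i)_{i=1}^d=a_{min}$; folding an arbitrary $a\in\Delta$ into this box by the reflection symmetry of $E_0(\cdot)$, one checks that $E_0(a)=E_0(a_{min})$ holds precisely when $a_i\in\{0,M_i-b_i\}$ for every $i$, i.e.\ exactly when $a\in\mathcal C$. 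Applying this with $a=\omega_k$ finishes the proof.

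The \textbf{main obstacle} is the middle paragraph: the purely spectral comparison arguments of the first paragraph only force \emph{one} tile to carry a minimizing corner configuration, and the genuinely new ingredient is to observe that strict positivity of the \emph{global} Neumann ground state on $\underline\Lambda$, against which the splitting inequality is saturated, propagates the minimizing property to \emph{every} tile simultaneously. The remaining steps — the inequality chain, the input $E_{min}=E_0(a_{min})$, and the folding argument — are routine once Propositions~\ref{prop: properties}, \ref{prop: properties of per} and Theorems~\ref{thm: bubbles}, \ref{thm: omega star} are in hand.
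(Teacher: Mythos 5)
Your proposal is correct and follows essentially the same route as the paper: periodic-to-Neumann comparison on the period cell, the Neumann splitting formula tested against the strictly positive global ground state $\psi_\omega$, saturation of the resulting chain of inequalities, and Theorem~\ref{thm: bubbles} to force each $\omega_k$ into a corner. The only cosmetic difference is that you separate the energy identities from the Rayleigh-quotient argument into two chains, whereas the paper runs a single chain; the substance and the key observation (strict positivity of $\psi_\omega$ propagating minimality to every tile) are identical.
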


  \begin{proof}
    By part (iii) of Proposition \ref{prop: properties of per}, $E_{min}=\min \sigma(h_\omega)=\min \sigma(h_{\omega,\underline{\Lambda}}^P)$. Part (ii) of the same proposition along with the variational principle gives $E_0(h_{\omega,\underline{\Lambda}}^P)\geq E_0(h_{\omega,\underline{\Lambda}}^N)$.  Moreover, since $\psi_{\omega}$ minimizes the the quadratic form for $h_{\omega,\underline{\Lambda}}^N$,
    \begin{eqnarray}
      E_0(h_{\omega,\underline{\Lambda}}^N)&=&\frac{(\psi_{\omega},h_{\omega,\underline{\Lambda}}^N\psi_{\omega})}{(\psi_{\omega},\psi_{\omega})}\nonumber\\
      &\geq&\frac{(\psi_{\omega},(\bigoplus_{k\in\mathcal{K}}h_{\omega,\Lambda_k}^N)\psi_{\omega})}{(\psi_{\omega},\psi_{\omega})}\nonumber\\
      &=&\frac{\sum_{k\in\mathcal{K}}(\psi_{\omega}^k,h_{\omega,\Lambda_k}^N\psi_{\omega}^k)}{(\psi_{\omega},\psi_{\omega})}\nonumber\\
      &=&\sum_{k\in \mathcal{K}}\frac{(\psi_{\omega}^k,h_{\omega,\Lambda_k}^N\psi_{\omega}^k)}{(\psi_{\omega}^k,\psi_{\omega}^k)}\cdot\frac{(\psi_{\omega}^k,\psi_{\omega}^k)}{(\psi_{\omega},\psi_{\omega})}\nonumber\\
      &\geq&\sum_{k\in \mathcal{K}}E_0(\omega_k)\frac{(\psi_{\omega}^k,\psi_{\omega}^k)}{(\psi_{\omega},\psi_{\omega})}\label{firstinstring}\\
      &\geq&\sum_{k\in \mathcal{K}}E_{min}\frac{(\psi_{\omega}^k,\psi_{\omega}^k)}{(\psi_{\omega},\psi_{\omega})} \;\; = \;\; E_{min}  \label{secondinstring}
    \end{eqnarray}
    where $\psi_{\omega}^k:=\psi_{\omega}|_{\Lambda_k}$.  The second to last inequality follows from the variational principle and the fact that $E_0(\omega_k)$ is by definition the lowest eigenvalue of $h_{\omega,\Lambda_k}^N$.  The last inequality follows from the fact that $E_0(\omega_k)\geq E_{min}$ as $E_0(\omega_k)\geq E_0(c)=E_{min}$ by Theorem~\ref{thm: bubbles} for any $c\in {\mathcal C}$.  Division by the quantity $(\psi_{\omega}^k,\psi_{\omega}^k)$ is permitted since strict positivity of $\psi_{\omega}$ implies strict positivity of $\psi_{\omega}^k$ and thus, $(\psi_{\omega}^k,\psi_{\omega}^k) >0$.

    We conclude that for a periodic minimizing configuration $\omega$ all inequalities in the above chain are actually identities.

    If at least one $\omega_k$ does not belong to ${\mathcal C}$, then the last inequality, (\ref{secondinstring}), is strict in light of Theorem \ref{thm: bubbles}.  Thus, $\omega_k\in {\mathcal C}$ for all $k\in \mathcal{K}$.  The fact that
    \begin{equation}\label{littleinequality}
      \frac{(\psi_{\omega}^k,h_{\omega,\Lambda_k}^N\psi_{\omega}^k)}{(\psi_{\omega}^k,\psi_{\omega}^k)}\geq E_0(\omega_k) \qquad \mbox{for all $k\in \mathcal{K}$}
    \end{equation}
    was used to obtain (\ref{firstinstring}), and if at least one of the inequalities (\ref{littleinequality}) is strict, then (\ref{firstinstring}) is strict.  Thus, there must be equality in (\ref{littleinequality}) for every $k\in \mathcal{K}$.  Therefore, $\psi_{\omega}^k$ is the ground state for $h_{\omega,\Lambda_k}^N$ and $E_0(\omega_k) = E_{min}$ is the corresponding lowest eigenvalue.
  \end{proof}

     \subsection{Characterization of One-dimensional Spectrally Minimizing Configurations}

     Now, we specialize to the one-dimensional setting.  The proof of Theorem \ref{thm: all 1d minimizers} relies on the following basic fact about the ``shape'' of a one-dimensional ground state eigenfunction.

     \begin{lemma}\label{lemma: unequal at endpoints}
Suppose $a\in [0,M_1-b_1]\cap \mathbb{Z}$ and $a\neq \frac{M-b}{2}$.  If $\psi$ is the strictly positive ground state corresponding to $h_{0,[1,M_1]}^N+q_a$ with ground state eigenvalue $E$ and $E\neq -2$, then  $\psi(1)\neq \psi(M)$.
\end{lemma}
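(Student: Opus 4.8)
The plan is to argue by contradiction: assume $\psi(1)=\psi(M_1)$. The first step is to pass to the periodic operator. By Proposition~\ref{prop: properties of per}(iv), the equality $\psi(1)=\psi(M_1)$ forces $\psi$ to be, simultaneously, a ground state of $h_{0,[1,M_1]}^P+q_a$ with the same eigenvalue $E$; equivalently, the $M_1$-periodic extension $\psi_T$ of $\psi$ to $\Z$ is a strictly positive solution of $(h_0+V)\psi_T=E\psi_T$, where $V$ is the $M_1$-periodic extension of $q_a$. Next, since $\supp q_{\tilde a}\subset[\tilde a+1,\tilde a+b_1]\subset[1,M_1]$ for every $\tilde a\in[0,M_1-b_1]$, the potentials $q_{\tilde a}$ are non-wrapping cyclic translates of one another on the torus $\Z/M_1\Z$, so the operators $h_{0,[1,M_1]}^P+q_{\tilde a}$ are mutually unitarily equivalent and $\min\sigma(h_{0,[1,M_1]}^P+q_{\tilde a})$ does not depend on $\tilde a$. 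Combined with $h_{0,[1,M_1]}^P\ge h_{0,[1,M_1]}^N$ (Proposition~\ref{prop: properties of per}(ii)), this gives $E_0(a)=E=\min\sigma(h_{0,[1,M_1]}^P+q_a)\ge E_0(\tilde a)$ for every $\tilde a\in[0,M_1-b_1]$; that is, $a$ maximizes $E_0(\cdot)$.

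Now Theorem~\ref{thm: bubbles} enters. Since $E=E_0(a)\neq-2$, its part (ii) applies and $E_0(\cdot)$ is strictly decreasing on $[r_1,M_1-b_1]$; together with the symmetry $E_0(\tilde a)=E_0(M_1-b_1-\tilde a)$ this determines the maximizers of $E_0$ exactly. If $M_1-b_1$ is even, the unique maximizer is $\tfrac{M_1-b_1}{2}$, so $a=\tfrac{M-b}{2}$, contradicting the hypothesis. If $M_1-b_1$ is odd, the only maximizers are the two central positions $c_{\pm}:=\tfrac{M_1-b_1\pm 1}{2}$, so $a\in\{c_-,c_+\}$, and this case requires a separate argument.

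In that remaining case, after possibly replacing $\psi$ by its reflection $n\mapsto\psi(M_1+1-n)$ --- which, by reflection symmetry of $q$ about the center of its support (hypothesis $(\mathbf{H1})$) and of $h_{0,[1,M_1]}^N$ about the center of $[1,M_1]$, is the positive ground state of $h_{0,[1,M_1]}^N+q_{M_1-b_1-a}$ with the same eigenvalue $E$ and still satisfies the endpoint equality --- I may assume $a=c_-$ and set $a':=c_+=c_-+1$. The key observation is that the right shift $R\psi_T$, $(R\psi_T)(n)=\psi_T(n-1)$, is a strictly positive $M_1$-periodic solution of $(h_0+V')\,\cdot\,=E\,\cdot\,$, where $V'$ is the $M_1$-periodic extension of $q_{a'}$ (a short check, using $c_-+1\le M_1-b_1$ and that $q_a$ vanishes at $M_1$); restricted to $[1,M_1]$ it is therefore, by simplicity of the periodic ground state (Proposition~\ref{prop: properties of per}(i)), a positive multiple of the ground state of $h_{0,[1,M_1]}^P+q_{a'}$. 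On the other hand $\psi^R(n):=\psi(M_1+1-n)$ is the positive ground state of $h_{0,[1,M_1]}^N+q_{a'}$ and satisfies $\psi^R(1)=\psi(M_1)=\psi(1)=\psi^R(M_1)$, so by Proposition~\ref{prop: properties of per}(iv),(i) it too is a positive multiple of that same periodic ground state. Equating the two expressions, and fixing the proportionality constant to $1$ by applying the resulting identity a second time, yields the rigid symmetry $\psi_T(M_1+1-n)=\psi_T(n-1)$ for all $n\in\Z$; taking $n=2$ gives $\psi(M_1-1)=\psi(1)$. But the Neumann eigenvalue equation for $h_{0,[1,M_1]}^N+q_a$ at the site $M_1$, where $q_a$ vanishes, reads $-\psi(M_1)-\psi(M_1-1)=E\psi(M_1)$, i.e.\ $\psi(M_1-1)=-(1+E)\psi(M_1)=-(1+E)\psi(1)$. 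Since $\psi(1)>0$, these force $E=-2$, the desired contradiction.

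I expect the odd case treated above to be the main obstacle; the rest is a fairly mechanical combination of Theorem~\ref{thm: bubbles} with the stated properties of the Neumann and periodic Laplacians. The two delicate points are verifying that the shifted extension $R\psi_T$ genuinely corresponds to the displacement $a'$ (rather than to a configuration wrapping around the torus), and using simplicity of the periodic ground state to identify $R\psi_T$ with the reflected Neumann ground state $\psi^R$ --- it is this identification that upgrades the bare fact that $\psi$ is the periodic ground state for $q_a$ into the rigid relation $\psi_T(M_1+1-n)=\psi_T(n-1)$ that delivers the contradiction.
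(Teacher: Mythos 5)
Your proof is correct, but it takes a genuinely different route from the paper's. Both arguments begin identically: $\psi(1)=\psi(M_1)$ promotes $\psi$ to the periodic ground state via Proposition~\ref{prop: properties of per}(iv),(i), and both then exploit the $M_1$-periodic extension $\Psi$ and its restrictions to shifted period cells. From there the paper takes a single uniform path: it picks the particular period cell $C=[b_1+2a+1-M_1,\,b_1+2a]$, identifies $\Psi|_C$ with the translate of the reflected ground state $\psi_{ref}$, and deduces that $\psi$ takes equal values at two adjacent sites strictly below the support of $q_a$; the contradiction then comes from the strict monotonicity of the ground state below the support of the potential (quoted from \cite[Lemma 3.9]{Nichols}, and equivalent to $E\neq -2$). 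You instead use translation invariance of $h_{0,[1,M_1]}^P$ on the torus together with $h^P\ge h^N$ to conclude that $a$ must maximize $E_0(\cdot)$, invoke Theorem~\ref{thm: bubbles}(ii) to identify the maximizers (the exact center in the even case, which is excluded by hypothesis; the two near-central positions in the odd case), and finish the odd case with the shift-by-one/reflection rigidity $\psi_T(n-1)=\psi_T(M_1+1-n)$ and the Neumann equation at the site $M_1$. I checked the delicate points: the cyclic translates do not wrap since $\supp q_{\tilde a}\subset[1,M_1]$ for all $\tilde a\in[0,M_1-b_1]$; the normalization $c=1$ follows from $c^2=1$ with $c>0$; $q_{c_-}(M_1)=0$ since $c_-+b_1=(M_1+b_1-1)/2<M_1$; and there is no circularity, as Theorem~\ref{thm: bubbles} is proved independently of this lemma. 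The trade-off is that your argument replaces the unproved-in-this-paper monotonicity lemma with Theorem~\ref{thm: bubbles} plus a case distinction, whereas the paper's argument is shorter and uniform in $a$ but leans on the external reference for the monotonicity of the ground state away from the potential.
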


\begin{proof}
By symmetry of the potential $q$ in $[1,b_1]$ and uniqueness of the ground state, it is enough to consider $a\in [r,M_1-b_1]$, where $r$ is the least integer strictly greater than $(M_1-b_1)/2$.  We may take $\psi$ to be normalized.

Let $q_{a,ref}$ denote the reflection of $q_a$ at the center of $[1,M_1]$, i.e.
\begin{displaymath}
q_{a,ref}(n)=q_a(M_1+1-n), \qquad n\in [1,M_1].
\end{displaymath}
By symmetry of $q$ on $[1,b_1]$, the normalized strictly positive ground state eigenfunction of $h_{0,[1,M_1]}^N+q_{a,ref}$ is $\psi_{ref}$, the reflection of $\psi$ on $[1,M_1]$, and the corresponding ground state eigenvalue is $E$.

If $\psi(1)=\psi(M_1)$, then it follows from part (iv) of Proposition \ref{prop: properties of per} that the ground state eigenvalue of $h_{0,[1,M_1]}^P+q_a$ is $E$ and $\psi$ is the corresponding eigenfunction:
\begin{equation}\label{eq: per restriction}
(h_{0,[1,M_1]}^P+q_a)\psi=E\psi.
\end{equation}
Let $Q$ and $\Psi$ denote the $M_1$-periodic extensions of $q_a$ and $\psi$, respectively, to all of $\mathbb{Z}$.  Using the definition of the periodic Laplacian and (\ref{eq: per restriction}) gives
\begin{displaymath}
(h_0+Q)\Psi=E\Psi
\end{displaymath}
which implies by Theorems~\ref{thm: Allegretto-Piepenbrink} and \ref{thm: Schnol} that $E_{min}=\inf \sigma(h_0+Q)$, since $\Psi$ is strictly positive and bounded (thus, polynomially bounded).  Therefore, by part (iii) of Proposition \ref{prop: properties of per}, $E$ is the lowest eigenvalue of the periodic restriction of $h_0+Q$ to any period cell, and $\Psi$ restricted to the same period cell is the corresponding ground state eigenfunction.

The trick is to choose a period cell $C$ for which $Q|_{C}$ is, up to translation, identical to $q_{a,ref}$.  The appropriate period cell is $$C=[b_1+2a+1-M_1,b_1+2a].$$ Note that the assumption $a\geq r$ implies that $M_1+1\in C$.  We have
\begin{eqnarray}
(h_{0,C}^P+Q|_{C})\Psi|_{C}&=&E \Psi|_{C}\nonumber\\
\Psi|_{C}(M_1)=\psi(M_1)&=&\psi(1)=\Psi|_{C}(M_1+1).
\end{eqnarray}
Since $C$ is a period cell for $\Psi$, we also have $\|\Psi|_{C}\|=\|\psi\|=\|\psi_{ref}\|$.

By construction, $\Psi|_{C}$ must be a constant multiple of the translate of $\psi_{ref}$ to $C$.  Normalization requires this constant multiple to be one, thus
\begin{displaymath}
\Psi|_{C}(n)=\psi_{ref}(n-(b_1+2a-M_1)), \qquad n\in C.
\end{displaymath}
Therefore, we have that
\begin{eqnarray}
\Psi|_{C}(M_1)&=&\psi_{ref}(2M_1-b_1-2a)\nonumber\\
\Psi|_{C}(M_1+1)&=&\psi_{ref}(2M_1-b_1-2a+1),
\end{eqnarray}
and it follows that
\begin{equation}\label{eq: contradiction}
\psi(b_1+2a-M_1+1)=\psi(b_1+2a-M_1).
\end{equation}
Note that $b_1+2a-M_1<1+a$ and $1+a$ is the minimum of the support of $q_a$.  Therefore, (\ref{eq: contradiction}) is a contradiction to the basic fact that the ground state $\psi$ is either strictly increasing (if $E<-2$) or strictly decreasing (if $E>-2$) below the support of $q_a$, \cite[Lemma 3.9]{Nichols}.

\end{proof}

Now we have everything we need for our proof of Theorem \ref{thm: all 1d minimizers}.

\begin{proof}[Proof of Theorem \ref{thm: all 1d minimizers}]
It follows from (\ref{eq:unitaryequiv}) that $\omega$ is spectrally maximizing for $h_0+V_{\omega}$ if and only if it is spectrally minimizing for $h_0-V_{\omega}$. Thus the claim made in Theorem~\ref{thm: all 1d minimizers} for spectral maximizers follows from the result on minimizers. It therefore remains to prove the latter.

   Let $\omega$ be an $L$-periodic minimizing configuration, so that $E_{min}=\min \sigma(h_{\omega})$.  By Lemma \ref{lemma: minimizers sit in corners}, it must be that $\omega_i\in \{0,M_1-b_1\}$ for all $i\in \mathbb{Z}$ and that
    \begin{equation}\label{eqofgs}
      E_{min}=\min \sigma(h_{\omega,\underline{\Lambda}}^P)=\min \sigma(h_{\omega,\underline{\Lambda}}^N),
    \end{equation}
    where $\underline{\Lambda}$ is the period cell $[1,LM_1]$.  Therefore, $\omega\in S_L$ and $n^0(\omega)+n^1(\omega)=L$.

    Let $\psi$ denote the strictly positive ground state of the periodic operator $h_{\omega,\underline{\Lambda}}^P$ corresponding to $E_{min}$.  Applying part (ii) of Proposition \ref{prop: properties of per}, along with the variational principle and (\ref{eqofgs}), gives
    \begin{displaymath}
      E_{min}=\frac{(\psi,h_{\omega,\underline{\Lambda}}^P\psi)}{(\psi,\psi)}\geq \frac{(\psi,h_{\omega,\underline{\Lambda}}^N\psi)}{(\psi,\psi)}\geq \inf_{\phi\neq 0}\frac{(\phi,h_{\omega,\underline{\Lambda}}^N\phi)}{(\phi,\phi)}=E_{min}.
    \end{displaymath}
    Thus, both inequalities above are actually equalities and $\psi$ is the ground state of $h_{\omega, \underline{\Lambda}}^N$ corresponding to $E_{min}$.  From the ground state equations $h_{\omega,\underline{\Lambda}}^{\bullet}\psi=E_{min}\psi$ for $\bullet \in \{N,P\}$, one gets
    \begin{eqnarray}
      -\psi(1)-\psi(2)+V_{\omega}(1)\psi(1)&=&E_{min}\psi(1), \nonumber\\
      -\psi(LM_1)-\psi(2)+V_{\omega}(1)\psi(1)&=&E_{min}\psi(1).\nonumber
    \end{eqnarray}
    The two together imply
    \begin{equation}\label{eqofendpoints}
      \psi(1)=\psi(LM_1).
    \end{equation}

   Set $\Lambda_k=\Lambda+(k-1)M_1$ with $k\in \{0,\ldots,L\}$.  Let $\phi_1$ denote the positive ground state of $h_{0,\Lambda}^N+q_{M_1-b_1}$ normalized so that $\phi_1(1)=1$, and let $\phi_{-1}$ denote the positive ground state of $h_{0,\Lambda}^N+q$ normalized so that $\phi_{-1}(1)=1$.

   Since the potential $q$ is reflection symmetric, $\phi$ given by $\phi(1+n)=\phi_1(M_1-n)$ for $n\in [0,M_1-1]$ is also a positive ground state for $h_{0,\Lambda}^N+q$.  Uniqueness of the positive ground state provides that $\phi=\phi_1(M_1)\phi_{-1}$.  Thus,
   \begin{displaymath}
   1=\phi_1(1)=\phi(M_1)=\phi_1(M_1)\phi_{-1}(M_1).
   \end{displaymath}
   Therefore, if $\alpha_i(\omega)= 1$ (respectively, $\alpha_i(\omega)=-1$) when $\omega_i= M_1-b_1$ (respectively, $\omega_i=0$), then $\phi_{\alpha_i(\omega)}(M_1)=\phi_1(M_1)^{\alpha_i(\omega)}$.  Lemma \ref{lemma: unequal at endpoints} guarantees that $\phi_1(M_1)\neq1$.  For simplicity of notation, set $\rho:=\phi_1(M_1)$.

   Using the reflection property of the Neumann Laplacian, the function $\Psi_{\omega}$, constructed by concatenating rescaled copies of $\phi_{\pm 1}$, and defined piecewise on each $\Lambda_{k}$, $k\in \{1,\ldots, L\}$ by

   \begin{displaymath}
   \Psi_{\omega}(n)=\rho^{\sum_{i<k}\alpha_i(\omega)}\phi_{\alpha_k(\omega)}(n-kM_1), \qquad n\in \Lambda_{k}
   \end{displaymath}
   (with the convention that an empty sum is zero) satisfies $h_{\omega,\underline{\Lambda}}^N \Psi_{\omega}=E_{min}\Psi_{\omega}$.  Therefore, we have recovered, up to a constant multiple, the ground state $\psi$.  Moreover,
   \begin{displaymath}
   \Psi_{\omega}(kM_1)=\rho^{\sum_{i=1}^k \alpha_i(\omega)}.
   \end{displaymath}
   This is seen as follows:
   \begin{displaymath}
   \Psi_{\omega}(kM_1)=\rho^{\sum_{i<k}\alpha_i(\omega)}\phi_{\alpha_k(\omega)}(M_1)=\rho^{\sum_{i<k}\alpha_i(\omega)}\rho^{\alpha_k(\omega)}=\rho^{\sum_{i=1}^{k}\alpha_i(\omega)}.
   \end{displaymath}
   Since $\psi$ coincides at the endpoints of $[1,LM_1]$, (\ref{eqofendpoints}), it must be that
   \begin{displaymath}
   1=\Psi_{\omega}(1)=\Psi_{\omega}(LM_1)=\rho^{\sum_{i=1}^{L}\alpha_i(\omega)}=\rho^{n^0(\omega)-n^1(\omega)}.
   \end{displaymath}
   Since $\rho\neq 1$, we conclude $n^0(\omega)=n^1(\omega)$.  Using $n^1(\omega)+n^0(\omega)=L$ gives $2n^1(\omega)=L$, so $L$ is even and $n^0(\omega)=n^1(\omega)=L/2$.
\end{proof}

\section{The Bernoulli Displacement Model} \label{sec:Bernoulli}

In this section we carry out the remaining proofs of Theorems~\ref{thm: spectral gap}, \ref{thm: IDS symmetry} and \ref{thm: DOS blowup}, which establish properties of the Bernoulli displacement model.

\subsection{Almost Sure Spectrum} \label{sec:asspectrum}

\begin{proof}[Proof of Theorem \ref{thm: spectral gap}]
By spectral mapping, it suffices to show that
\begin{displaymath}
H_{\omega,\lambda}:=\bigg(h_{\omega,\lambda}-\frac{\lambda}{2} \bigg)^2-\bigg(2+ \frac{\lambda^2}{4} \bigg)\geq -\sqrt{4+\lambda^2}\qquad \mbox{for all  $\omega \in \Omega$}.
\end{displaymath}
Fix $\omega \in \Omega$.  The operator $H_{\omega,\lambda}$ is a five diagonal operator
  \begin{displaymath}
  H_{\omega,\lambda}=
    \left(\begin{array}{ccccccccc}
    \ddots&\ddots&\ddots&\ddots&\ddots& & &\\
    &1&s_{\omega}(-2)&0&s_{\omega}(-1)&1&&\\
    &&1&s_{\omega}(-1)&0&s_{\omega}(0)&1&\\
    &&&1&s_{\omega}(0)&0&s_{\omega}(1)&1\\
    &&&&\ddots&\ddots&\ddots&\ddots&\ddots
    \end{array}\right),
  \end{displaymath}
  where
  \begin{displaymath}
  s_{\omega}(n)=-\lambda - V_{\omega}(n) - V_{\omega}(n+1).
  \end{displaymath}
  Since $V_{\omega}$ assumes both $0$ and $\lambda$ on each cell $[2k-1,2k]$, $k\in \mathbb{Z}$, it must be the case that $s_{\omega}(2k-1)=0$, i.e.\ $s_{\omega}$ vanishes at all odd lattice sites. On the other hand, at even sites an inspection of cases shows that
  \[ s_{\omega}(2k) = \lambda (\omega_{k+1} - \omega_k).\]
  Consequently, $H_{\omega,\lambda}$ is an infinite tridiagonal block matrix $H_{\omega,\lambda}=[B_{jk}]_{j,k \in \mathbb{Z}}$ where each block $B_{jk}$ is a $2\times 2$ matrix with $B_{jk}=\mathbb{I}$ if $|j-k|=1$ and $-1\leq j,k \leq 1$ and
  \begin{displaymath}
  B_{kk}=\left(\begin{array}{cc}
  0 & s_{\omega}(2k)\\
  s_{\omega}(2k) & 0
  \end{array}\right).
  \end{displaymath}

  Because of the $2\times 2$ block structure, it is instructive to view $H_{\omega,\lambda}$ as a Jacobi-type operator on $\ell^2(\mathbb{Z},\mathbb{C}^2)$.  That is, with $u\in \ell^2(\mathbb{Z},\mathbb{C}^2)$, $u(k)=(u_1(k),u_2(k))^{\intercal}$, we write
  \begin{displaymath}
  (H_{\omega,\lambda}u)(k)=u(k-1)+u(k+1)+B_{kk}u(k).
  \end{displaymath}
  Each $B_{kk}$ may be diagonalized via the transformation
  \begin{displaymath}
      W=\frac{1}{\sqrt{2}}\left(\begin{array}{cc}
      1&1\\
      1&-1
    \end{array}\right)
  \end{displaymath}
  which induces a unitary involution $U$ on $\ell^2(\mathbb{Z},\mathbb{C}^2)$ given by
  \begin{displaymath}
  (Uu)(k)=Wu(k) \qquad \mbox{for all  $k\in \mathbb{Z}$}.
  \end{displaymath}
  $H_{\omega,\lambda}$ is unitarily equivalent, via $U$, to another infinite tridiagonal block matrix $M_{\omega,\lambda}=[M_{jk}]_{j,k \in \mathbb{Z}}$ with each $M_{jk}$ a $2 \times 2$ matrix;  $M_{jk}=\mathbb{I}$ if $|j-k|=1$ and $-1\leq j, k \leq 1$ and $M_{kk}=\textrm{diag}(s_{\omega}(2k),-s_{\omega}(2k))$.

 The fact that $\mathbb{I}$ and $M_{kk}$ are diagonal means that $M_{\omega,\lambda}$ decouples into a direct sum of ``even'' and ``odd'' parts.  That is, if $u\in \ell^2(\mathbb{Z})$ is expressed as the direct sum of its even and odd parts, corresponding to even and odd components, $u=u_{\textrm{odd}}\oplus u_{\textrm{even}}$, then
  \begin{displaymath}
  M_{\omega,\lambda}u=M_{\omega,\lambda}^+u_{\textrm{odd}}\oplus M_{\omega,\lambda}^-u_{\textrm{even}},
  \end{displaymath}
  where $M_{\omega,\lambda}^{\pm}$ are both discrete Schr\"{o}dinger operators
  \begin{displaymath}
  M_{\omega,\lambda}^{\pm}=-h_0\pm q_{\omega}\cong h_0\pm q_{\omega},
  \end{displaymath}
  and the potential term $q_{\omega}$ is defined in terms of $s_{\omega}$ by
  \begin{displaymath}
  q_{\omega}(k)=s_{\omega}(2k) = \lambda (\omega_{k+1}-\omega_k).
  \end{displaymath}

  In light of the fact that
  \begin{displaymath}
  M_{\omega,\lambda}=M_{\omega,\lambda}^+\oplus M_{\omega,\lambda}^-,
  \end{displaymath}
  the proof of Theorem \ref{thm: spectral gap} reduces to showing
  \begin{equation}\label{eq: lb for q omega}
  h_0\pm q_{\omega}\geq -\sqrt{4+\lambda^2} \quad \mbox{for all $\omega \in \Omega = \{0,1\}^{\Z}$}.
  \end{equation}

  If $T:\Omega \to \Omega$ denotes the bijection defined by
  \begin{equation} \label{eq:flipmap}
  (T \omega)_k = -\omega_k +1,
  \end{equation}
  i.e.\ the $0$-$1$-flip map, then
  \begin{equation} \label{eq:flippot}
  q_{T\omega} = - q_{\omega}.
  \end{equation}
  Thus it suffices to establish (\ref{eq: lb for q omega}) for $h_0+q_{\omega}$. We do this by showing the existence of a positive function $\psi_{\omega}$, not necessarily belonging to $\ell^2(\mathbb{Z})$, which satisfies the difference equation
  \begin{equation}\label{eq: difference equation}
  (h_0+q_{\omega})\psi_{\omega}=-\sqrt{4+\lambda^2}\psi_{\omega}.
  \end{equation}
  This implies (\ref{eq: lb for q omega}) by Theorem~\ref{thm: Allegretto-Piepenbrink}.

Let
\[ z_{\pm}(\lambda) := \frac{\sqrt{4+\lambda^2} \pm \lambda}{2} \]
be the two distinct solutions of $z^2-z\sqrt{4+\lambda^2} +1=0$. Note that both are positive and $z_+(\lambda) z_-(\lambda)=1$. We define $\psi_{\omega}$ explicitly by
\begin{equation} \label{eq:defpsi}
\psi_{\omega}(k) = \left\{ \begin{array}{ll} z_+(\lambda)^{\sum_{j=1}^k (2\omega_j-1)}, & \mbox{if $k>0$}, \\ 1, & \mbox{if $k=0$}, \\ z_+(\lambda) ^{-\sum_{j=k+1}^0 (2\omega_j -1)}, & \mbox{if $k<0$}. \end{array} \right.
\end{equation}
Clearly, for all $k\in \Z$, $\psi_{\omega}(k)>0$ and
\[ \psi_{\omega}(k+1) = \psi_{\omega}(k) z_+(\lambda)^{2\omega_{k+1}-1}, \quad \psi_{\omega}(k-1) = \psi_{\omega}(k) z_+(\lambda)^{-(2\omega_k-1)}.\]
Thus
\begin{eqnarray*}
\lefteqn{ -\psi_{\omega}(k-1) + q_{\omega}(k) \psi_{\omega}(k) - \psi_{\omega}(k+1)} \\
& = & \psi_{\omega}(k) \left( -z_+(\lambda)^{2\omega_{k+1}-1} + \lambda (\omega_{k+1}-\omega_k) - z_+(\lambda)^{-(2\omega_k-1)} \right) \\
& = & -\sqrt{4+\lambda^2} \psi_{\omega}(k),
\end{eqnarray*}
where the latter is easily verified separately for the four cases $\omega_k \in \{0,1\}$, $\omega_{k+1} \in \{0,1\}$. This completes the proof of Theorem~\ref{thm: spectral gap}.

\end{proof}

\subsection{Density of States}

 \begin{proof}[Proof of Theorem \ref{thm: IDS symmetry}]
 Throughout the proof, we make use of the discrete Dirichlet Laplacian which on a set $\Lambda \subset \mathbb{Z}^d$ is defined by
 \begin{displaymath}
 h_{0,\Lambda}^D=h_{0,\Lambda}+N_{\Lambda},
 \end{displaymath}
 where $N_{\Lambda}$ is the edge counting operator defined in (\ref{eq: edge counting operator}).

 If $t\leq 0$, then (\ref{eq: symmetry of IDS}) is trivial.  Thus, let $t>0$.  For any displacement configuration $\omega$,  using the unitary involution (\ref{eq: unitary involution}) we have the unitary equivalence
 \begin{displaymath}
h_{\Lambda(L)}^D(\omega):= h_{0,\Lambda(L)}^D+\lambda V_{\omega}|_{\Lambda(L)} \cong -h_{0,\Lambda(L)}^N+\lambda V_{\omega}|_{\Lambda(L)}=:-H_{\Lambda(L)}^N(\omega).
 \end{displaymath}

 Thus
 \begin{eqnarray}
 \#\{k\in \mathbb{N}:E_k(h_{\Lambda(L)}^D(\omega))<E_-(\lambda)+t \}&=&\# \{k\in \mathbb{Z}:E_k(H_{\Lambda(L)}^N(\omega))> -E_-(\lambda)-t \}\nonumber\\
 &=&|\Lambda(L)|-n(H_{\Lambda(L)}^N(\omega),-E_-(\lambda)-t)\nonumber\\
 &=&|\Lambda(L)|-n(H_{\Lambda(L)}^N(\omega)+\lambda \mathbb{I}|_{\Lambda(L)},{}\nonumber\\
 &&{}{}-E_-(\lambda)-t+\lambda)\nonumber\\
 &=&|\Lambda(L)|-n(H_{\Lambda(L)}^N(\omega)+\lambda \mathbb{I}|_{\Lambda(L)}),E_+(\lambda)-t)\nonumber\\
 &=&|\Lambda(L)|-n(h_{\Lambda(L)}^N(T\omega),E_+(\lambda)-t),\label{eq: app IDS}
 \end{eqnarray}
 where, as before, in the last line $h^N=h_0^N+\lambda V$. Also, $n(A,E)$ denotes the number of eigenvalues of an operator $A$ which are less than or equal to $E$ and $T$ is the bijection on $\Omega$ defined by (\ref{eq:flipmap}).

If $p=1/2$, then $T$ is measure preserving on $\Omega$,
 \begin{equation}\label{eq: measure preserving}
\mathbb{P}(T^{-1}A)=\mathbb{P}(A)
\end{equation}
for all measurable sets $A\subset \Omega$, as it is induced by the measure preserving mapping $a\mapsto -a+1$ on $\{0,1\}$ with symmetric Bernoulli measure.

Dividing by $|\Lambda(L)|$ and taking expectations in (\ref{eq: app IDS}) gives
\begin{equation}\label{eq: introduce g}
\frac{1}{|\Lambda(L)|} \mathbb{E}(|\{k\in \mathbb{N}:E_k(h_{\Lambda(L)}^D(\omega))<E_-(\lambda)+t \}|)=1-\frac{1}{|\Lambda(L)|}\mathbb{E}(g(T\omega)),
\end{equation}
where $g(\omega)=n(h_{\Lambda(L)}^N(\omega),E_+(\lambda)-t)$.  Since $T$ is measure preserving, we have $\mathbb{E}(g(T\omega))=\mathbb{E}(g(\omega))$.  Now (\ref{eq: introduce g}) immediately gives
\begin{equation}\nonumber
\lim_{L\rightarrow \infty} \frac{1}{|\Lambda(L)|} \mathbb{E}(|\{k\in \mathbb{N}:E_k(h_{\Lambda(L)}^D(\omega))<E_-(\lambda)+t \}|)=1-\lim_{L\rightarrow \infty} \frac{1}{|\Lambda(L)|}\mathbb{E}(g(\omega)).
\end{equation}
This gives (\ref{eq: symmetry of IDS}) due to the independence of the IDS on the boundary condition.
 \end{proof}

 \begin{proof}[Proof of Theorem \ref{thm: DOS blowup}]
 We give full details on how to do the proof at $E_-(\lambda)$ and only give an outline of the modifications for proving the result at $G_+(\lambda)$.

\textit{Lower Bound at $E_-(\lambda)$}:  Here we closely follow an argument developed in a similar context for the {\it continuum} random displacement model in \cite{Baker} and \cite{BLS2}.

Let $h_{0,\Lambda(L)}^{\bullet}(\omega):=h_{0,\Lambda(L)}^{\bullet}+\lambda V_{\omega}|_{\Lambda(L)}$ for $\bullet\in \{D,N,P\}$.  We will use the standard a priori bound
\begin{equation}
  N_{\lambda}(E) \geq \frac{1}{|\Lambda(L)|}\mathbb{P}\bigg(E_1(h_{\Lambda(L)}^D(\omega))<E\bigg), \label{eq:a priori lb}
\end{equation}
see, for example, \cite[(6.15)]{Kirsch}. We will show that $E_1(h_{\Lambda(L)}^D(\omega))<E$ by constructing a test function $\Psi_{\omega}$ with Rayleigh quotient $\langle \Psi_{\omega}, h_{\Lambda(L)}^D \Psi_{\omega} \rangle/ \|\Psi_{\omega}\|^2 <E$.

To this end, let $\phi_1$ denote the strictly positive ground state of $h_{\Lambda_1}^N+\lambda \delta_2$ normalized so that $\phi_1(1)=1$, and let $\phi_{-1}$ denote the strictly positive ground state of $h_{\Lambda_1}^N+\lambda \delta_1$ normalized so that $\phi_{-1}(1)=1$.  Using uniqueness of the positive ground state, we have $\phi_{-1}(2)=\phi_1(2)^{-1}$.  In fact, $\phi_{-1}^f$ given by $\phi_{-1}^f(n)=\phi_1(2-n+1)$ is a positive ground state of $h_{\Lambda_1}^N+\lambda \delta_1$.  For ease of notation, we denote $r:=\phi_1(2)$. We know by Lemma~\ref{lemma: unequal at endpoints} that $r\not = 1$ and will now assume that $r>1$. If $0<r<1$, we can do the following construction from ``right to left'', choosing the test function $\Psi_{\omega}$ such that $\Psi_{\omega}(2L)=1$.

Given $\omega$, we construct $\Psi_{\omega}$ by concatenating rescaled versions of $\phi_{-1}$ and $\phi_1$.  The test function $\Psi_{\omega}$ is defined piecewise on $\Lambda_k$, $1\leq k \leq L$, by
\begin{equation}\label{eq:test function}
  \Psi_{\omega}(n)=r^{\sum_{i=1}^{k-1}\alpha_i(\omega)}\phi_{\alpha_k(\omega)}(n-2(k-1)) \qquad n\in \Lambda_k,
\end{equation}
where $\alpha_k(\omega)=-1$ if $\omega_k=0$ and $\alpha_k(\omega)=1$ if $\omega_k=1$. This choice of rescaling guarantees that $\Psi_{\omega}(2k)= \Psi_{\omega}(2k+1)$ for all $k$ and thus, by the properties of Neumann boundary conditions on all $\Lambda_k$, that $h_{\Lambda(L)}^N(\omega)\Psi_{\omega}=E_0(1)\Psi_{\omega}$.  However, as we have shown in the proof of Theorem~\ref{thm: omega star}, $E_0(1) = E_- = \min \sigma(h_0 +V_{\omega^*})$. Thus we have $h_{0,\Lambda(L)}^N\Psi_{\omega}=E_-\Psi_{\omega}$ and we may write
\begin{equation}
  \frac{\bra \Psi_{\omega},h_{\Lambda(L)}^D(\omega)\Psi_{\omega}\ket}{\bra \Psi_{\omega},\Psi_{\omega}\ket}-E_- =  \frac{\bra\Psi_{\omega}, h_{\Lambda(L)}^D(\omega)\Psi_{\omega}-h_{0,\Lambda(L)}^N(\omega)\Psi_{\omega}\ket}{\bra \Psi_{\omega},\Psi_{\omega}\ket}.\label{eq:numerator}
\end{equation}

The Dirichlet and Neumann operators only differ at the endpoints of $\Lambda(L)$, so most terms in the numerator of (\ref{eq:numerator}) cancel; what rests is

\begin{equation}
  \frac{\bra \Psi_{\omega},h_{\Lambda(L)}^D(\omega)\Psi_{\omega}\ket}{\bra \Psi_{\omega},\Psi_{\omega}\ket}-E_-=\frac{2(1+\Psi_{\omega}(2L)^2)}{\bra \Psi_{\omega},\Psi_{\omega}\ket}.
\end{equation}

With (\ref{eq:a priori lb}), we have
\begin{eqnarray*}
N_{\lambda}(E) & \geq & \frac{1}{2L} \mathbb{P} \left( \frac{\langle \Psi_{\omega}, h_{\Lambda(L)}^D(\omega) \Psi_{\omega} \rangle}{\langle \Psi_{\omega}, \Psi_{\omega} \rangle} < E \right) \\
& = &  \frac{1}{2L}\mathbb{P}\bigg(\frac{2(1+\Psi_{\omega}(2L)^2)}{\bra \Psi_{\omega},\Psi_{\omega}\ket}<E-E_- \bigg) \\
& \geq & \frac{1}{2L}\mathbb{P}\bigg(\frac{2(1+\Psi_{\omega}(2L)^2)}{\sum_{k=1}^L\Psi_{\omega}(2k)^2}<E-E_- \bigg).
\end{eqnarray*}

It follows from the definition of $\Psi_{\omega}$ that
\begin{displaymath}
\alpha_j(\omega)=\frac{\log\Psi_{\omega}(2j)/\Psi_{\omega}(2j-1)}{\log r}=\frac{\log \Psi_{\omega}(2j+1)/\Psi_{\omega}(2j-1)}{\log r}
\end{displaymath}
(where we use that $\Psi_{\omega}(2j)=\Psi_{\omega}(2j+1)$) and
\begin{equation}\label{eq:value at endpoint}
\Psi_{\omega}(2j)^2=e^{2S_j\log r},
\end{equation}
where $S_j:=\alpha_1+\alpha_2+\cdots+\alpha_j$.  As $\mathbb{P}(\alpha_k(\omega)=1)=\mathbb{P}(\omega_k=1)=1/2$ and $\mathbb{P}(\alpha_k(\omega)=-1)=\mathbb{P}(\omega_k=0)=1/2$, the $\alpha_j(\omega)$ are independent symmetric Bernoulli random variables with values in $\{\pm1\}$.  Therefore, the process $S_j$, $j=1,2,\ldots$, is a simple, symmetric random walk.  If $Y:=\max_{i=1,\ldots,L}S_i$, then it is a consequence of the reflection principle for symmetric random walks (\cite{Feller}) that
\begin{equation}\label{eqofprobabilities}
\mathbb{P}(Y\geq \sqrt{L} | S_L\leq0)=\mathbb{P}(S_L\geq 2\sqrt{L}).
\end{equation}
The latter converges to
\begin{equation}\label{integrallimit}
\pi^{-1/2}\int_{2}^{\infty} e^{-y^2/2}\,dy>0
\end{equation}
as $L\rightarrow \infty$ by the central limit theorem.

Denote by $A_L$ the event $\{\omega|Y\geq \sqrt{L} \ \textrm{and} \ S_L\leq 0\}$.  If $Y\geq \sqrt{L}$, then $\sum_{j=1}^L \Psi_{\omega}(2j)^2\geq e^{2\sqrt{L}\log r}$.  The condition $S_L\leq 0$ means that equal or more single site potentials sit at the left than sit at the right on $\Lambda(L)$.  As $r>1$, it is clear from (\ref{eq:value at endpoint}) that $\Psi_{\omega}(2L)^2\leq1$ and we have

\begin{eqnarray}
  \mathbb{P}\bigg(E_1(h_{\Lambda(L)}^D(\omega))<E \bigg)&\geq&\mathbb{P}\bigg(\frac{2(1+\Psi_{\omega}(2L)^2)}{\sum_{j=1}^L\Psi_{\omega}(2j)^2}<E-E_-|A_L \bigg)\mathbb{P}(A_L)\nonumber\\
  &\geq&\mathbb{P}\bigg(4e^{-2\sqrt{L}\log r}<E-E_-|A_L \bigg)\mathbb{P}(A_L)\nonumber\\
  &=&\mathbb{P}(A_L)\nonumber\\
  &=&\underbrace{\mathbb{P}(Y\geq \sqrt{L}|S_L\leq0)}_{\geq C_0}\cdot \underbrace{\mathbb{P}(S_L\leq 0)}_{\geq \frac{1}{2}}\geq \frac{1}{2}C_0>0\label{eq:prob lb}
\end{eqnarray}
if $4e^{-2\sqrt{L}\log r}<E-E_-$ and $L>L_0$ (where we have used (\ref{eqofprobabilities}) and (\ref{integrallimit})).  If $E$ is so close to $E_-$ that
\begin{displaymath}
 E-E_-<4e^{-2\sqrt{L_0}\log r},
\end{displaymath}
then we can choose $L\geq L_0$ for which
\begin{equation}
4e^{-2\sqrt{L-1}\log r}\geq E-E_-\geq 4e^{-2\sqrt{L}\log r}.
\end{equation}
Therefore there are constants $C_1,C_2>0$ such that
\begin{equation}\label{c1c2}
C_1L\leq\bigg[\frac{\log \frac{1}{4}(E-E_-)}{\log r} \bigg]^2 \leq C_2L.
\end{equation}

From (\ref{eq:prob lb}) we get $N_{\lambda}(E)\geq \frac{C_0}{2}\frac{1}{L}$, so that
\begin{displaymath}
N_{\lambda}(E)\geq\frac{C_0}{2C_1}\bigg(\frac{\log r}{\log \frac{1}{4}(E-E_-)} \bigg)^2\geq \frac{C}{\log^2 (E-E_-)}
\end{displaymath}
for $E-E_-$ sufficiently small.

\textit{Discussion of Modifications for Lower Bound at $G_+(\lambda)$}:  We want to consider a lower bound for $N_{\lambda}(G_+(\lambda)+\epsilon)$, with $\epsilon>0$.  The idea is to express the integrated density of states $N_{\lambda}$ at the gap edge $G_+(\lambda)$ in terms of the integrated density of states of the operators $h_0\pm q_{\omega}$ from the proof of Theorem~\ref{thm: spectral gap}, call them $N_{\pm}$, at $-\sqrt{4+\lambda^2}$.  One then estimates $N_{\pm}$ from below.

Let $L\in \mathbb{N}$ and define
\begin{equation}\label{eq: H restricted}
H_{\omega,\lambda}^L:=\bigg(h_{[1,2L]}^P(\omega)-\frac{\lambda}{2} \bigg)^2-\bigg(2+\frac{\lambda^2}{4} \bigg).
\end{equation}

It follows from Theorems~\ref{thm: deterministic spectrum} and \ref{thm: spectral gap} that $h_{[1,2L]}^P(\omega)$ has no eigenvalues in the interval $(G_-(\lambda),G_+(\lambda))$.  Thus, $H_{\omega,\lambda}^L\geq -\sqrt{4+\lambda^2}$.  Moreover, via the same construction as in Section~\ref{sec:asspectrum}, $H_{\omega,\lambda}^L$ is unitarily equivalent to the direct sum
\begin{displaymath}
H_{\omega,\lambda}^L\cong \left( h_{0,[1,2L]}^P-q_{\omega}|_{[1,2L]} \right) \oplus \left( h_{0,[1,2L]}^P+q_{\omega}|_{[1,2L]} \right)=:J_L^-\oplus J_L^+.
\end{displaymath}

Therefore, setting $\eta(\epsilon) := 2\epsilon (G_+(\lambda)-\lambda/2)+\epsilon^2$,
\begin{eqnarray*}
N_{\lambda}(G_+(\lambda)+\epsilon)-\frac{1}{2}&=&\frac{1}{2}\bigg(N_{\lambda}(G_+(\lambda)+\epsilon)-N_{\lambda}(G_-(\lambda)-\epsilon) \bigg) \\
&=&\lim_{L\rightarrow \infty}\frac{1}{4L}\mathbb{E}\bigg(\#\big\{k:E_k(h_{[1,2L]}^P(\omega))\in [G_-(\lambda)-\epsilon,G_+(\lambda)+\epsilon]\big\} \bigg) \\
&=&\lim_{L\rightarrow \infty}\frac{1}{4L}\mathbb{E}\bigg(\#\big\{k:E_k\big((h_{[1,2L]}^P(\omega)-\frac{\lambda}{2})^2\big)\leq \big(G_+(\lambda)-\frac{\lambda}{2}+\epsilon\big)^2\big\} \bigg) \\
&=&\lim_{L\rightarrow \infty}\frac{1}{4L}\mathbb{E}\bigg(\#\big\{k:E_k(J_L^-\oplus J_L^+)\leq -\sqrt{4+\lambda^2}+\eta(\epsilon)\big\} \bigg) \\
&=&\lim_{L\rightarrow \infty}\frac{1}{4L}\mathbb{E}\bigg(\#\big\{k:E_k(J_L^- )\leq -\sqrt{4+\lambda^2}+\eta(\epsilon)\big\}+\nonumber\\
&& +\#\big\{k:E_k(J_L^+ )\leq -\sqrt{4+\lambda^2}+\eta(\epsilon)\big\} \bigg) \\
& = & \frac{1}{2} N_-(-\sqrt{4+\lambda^2}+\eta(\epsilon)).
\end{eqnarray*}
The last equality uses that $N_-=N_+$, which follows from the fact that $T$ defined by (\ref{eq:flipmap}) is measure-preserving on $\Omega$ and satisfies $q_{T\omega}=-q_{\omega}$.
 In particular,
\begin{equation} \label{eq:IDSlowerbound}
N_{\lambda}(G_+(\lambda)+\epsilon) - \frac{1}{2} \ge \frac{1}{2} N_-\left(-\sqrt{4+\lambda^2} + 2\epsilon (G_+(\lambda) - \frac{\lambda}{2})\right).
\end{equation}

One can prove that $N_-$ has a $1/\log^2$-singularity near $-\sqrt{4+\lambda^2}$ with essentially the same techniques used for $N_{\lambda}$ near $E_-(\lambda)$.  For $N_-$ one chooses a different trial function $\Psi_{\omega}$. The appropriate choice is $\Psi_{\omega}:=\psi_{\omega}|_{[1,L]}$, where $\psi_{\omega}$ is the positive function defined by (\ref{eq:defpsi}) in the proof of Theorem \ref{thm: spectral gap}.  With this choice, $\Psi_{\omega}$ is, up to an error term, an eigenfunction of $h^D_{0,\Lambda(L)}-q_{\omega}|_{\Lambda(L)}$ with eigenvalue $-\sqrt{4+\lambda^2}$.  More precisely,
\begin{eqnarray*}
\lefteqn{\langle \Psi_{\omega}, (h_{0,[1,L]}^D - q_{\omega}|_{[1,L]}) \Psi_{\omega} \rangle} \nonumber \\ & = & -\sqrt{4+\lambda^2} \,\|\Psi_{\omega}\|^2 + \Psi_{\omega}(1)^2 - \Psi_{\omega}(1) \Psi_{\omega}(2) - \Psi_{\omega}(L) \Psi_{\omega}(L-1) + \Psi_{\omega}(L)^2.
\end{eqnarray*}

Thus we have the lower bound
\begin{eqnarray*}
\lefteqn{N_-(E) \:\geq \: \frac{1}{L}\mathbb{P}\big(E_1(h_{0,[1,L]}^D-q_{\omega}|_{[1,L]})<E \big)} \\
&\geq&\frac{1}{L}\mathbb{P} \left( \frac{ \langle \Psi_{\omega}, (h^D_{0,[1,L]} - q_{\omega}|_{[1,L]}) \Psi_{\omega} \rangle}{\|\Psi_{\omega}\|^2} + \sqrt{4+\lambda^2} < E +\sqrt{4+\lambda^2} \right) \\
& \ge & \frac{1}{L} \mathbb{P} \left( \frac{\Psi_{\omega}(1)^2- \Psi_{\omega}(1) \Psi_{\omega}(2) - \Psi_{\omega}(L) \Psi_{\omega}(L-1) + \Psi_{\omega}(L)^2}{\|\Psi_{\omega}\|^2} < E+\sqrt{4+\lambda^2} \right).
\end{eqnarray*}

From here on the proof is completed in a similar fashion to what was done above for $E_-(\lambda)$. Key is the simple multiplicative structure of $\psi_{\omega}$ as defined in (\ref{eq:defpsi}), which leads to considering the symmetric random walk defined by the Bernoulli variables
\[ \alpha_k := \frac{\log \Psi_{\omega}(k+1)/ \Psi_{\omega}(k)}{\log z_+(\lambda)} = 2\omega_{k+1}-1.\]
If $\lambda>0$, then $z_+(\lambda)>1$ and the proof goes through with the same argument as above. If $\lambda<0$, then $0<z_+(\lambda)<1$ and one can work from ``right to left'', similar to what was indicated for the case $0<r<1$ above.

 \end{proof}

\section{Concluding Remarks} \label{sec:discussion} \label{sec:conclusion}

(i) Our results for $d>1$ fall short of what was proven in \cite{BLS1} and \cite{BLS2} for the {\it continuum} displacement model in several respects:

\begin{itemize}

\item In part (i) of Theorem~\ref{thm: bubbles} we have required sign-definiteness of $q$. The corresponding result for the continuum from \cite{BLS1} only requires that $E_0(a)$ does not vanish identically in $a$. In other words, in the continuum the multi-dimensional analogue of part (ii) of Theorem~\ref{thm: bubbles} holds (noting that $0$ is the spectral minimum of the continuum Neumann Laplacian, while $-2$ is the spectral minimum of the discrete one-dimensional Neumann Laplacian).

\item We have yet to prove any uniqueness results for the set of periodic configurations $\omega$ with $\min \sigma(h_{\omega})= E_{min}$ in the case of $d\geq 2$, but we conjecture the following analogue of a result which was shown to hold for the continuum case in \cite{BLS2}:

\begin{conjecture}
If $d\geq2$ and $q\neq0$ is sign-definite, then $\omega^{\ast}$ is (up to translation) the unique periodic configuration with $\min \sigma(h_{\omega^*}) = E_{min}$ and $\max \sigma(h_{\omega^*}) = E_{max}$.
\end{conjecture}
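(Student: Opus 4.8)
The plan is to follow the strategy of the continuum argument in \cite{BLS2}, isolating the single place where the lattice genuinely obstructs it. First, by the unitary involution (\ref{eq: unitary involution}) and the identity (\ref{eq:unitaryequiv}), a periodic configuration $\omega$ is spectrally maximizing for $q$ if and only if it is spectrally minimizing for $-q$, and $-q$ is again sign-definite; so it suffices to show that, for $d\geq 2$ and $q\neq 0$ sign-definite, $\omega^{\ast}$ is, up to translation, the unique periodic spectrally minimizing configuration. Let $\omega$ be such a configuration. By Theorem~\ref{thm: bubbles} and the reflection symmetry of $E_0(\cdot)$, the value of $E_0$ is the same at all $2^d$ corners of $\Delta$ and equals $E_{min}$; and Lemma~\ref{lemma: minimizers sit in corners} gives that every $\omega_k$ lies in $\mathcal{C}$, that the Neumann ground state $\psi$ on the period cell $\underline{\Lambda}$ restricts on each tile $\Lambda_k$ to a positive multiple $c_k u_{\omega_k}$ of the positive Neumann ground state $u_{\omega_k}$ of $h_{0,\Lambda}^N+q_{\omega_k}$ at the common eigenvalue $E_{min}$, and (the Neumann and periodic ground state energies on $\underline{\Lambda}$ being equal) that the periodic extension of $\psi$ solves $h_{\omega}\psi=E_{min}\psi$ on $\Z^d$.

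The next step is to convert ``$\psi$ solves the ground-state equation globally'' into a compatibility condition between neighbouring tiles. Writing that equation at a site lying on a face between $\Lambda_k$ and the adjacent tile $\Lambda_{k+e_i}$ ($e_i$ the $i$-th coordinate unit vector) and subtracting the Neumann equation for $\Lambda_k$ forces $\psi$ to take equal values at the two endpoints of every edge crossing that face (with some extra bookkeeping at the lower-dimensional strata of the face). After the tile-by-tile rescaling this says: for every $k$ and every direction $i$, the trace of $u_{\omega_k}$ on $\{M_i\}\times\prod_{j\neq i}[1,M_j]$ is proportional to the trace of $u_{\omega_{k+e_i}}$ on $\{1\}\times\prod_{j\neq i}[1,M_j]$. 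On the other hand, hypothesis $(\mathbf{H1})$ together with simplicity of the positive ground state (Proposition~\ref{prop: properties}) yields the reflection identity $u_{\sigma_i(c)}=u_c\circ\sigma_i$, where $\sigma_i$ reflects the $i$-th coordinate and $\sigma_i(c)$ flips the $i$-th component of a corner $c$ between $0$ and $M_i-b_i$; in particular the trace of $u_c$ on its $+e_i$ face coincides with the trace of $u_{\sigma_i(c)}$ on its $-e_i$ face. Hence $\omega_{k+e_i}=\sigma_i(\omega_k)$ always makes the two traces agree --- which is exactly the clustering pattern defining $\omega^{\ast}$ --- and the proof would be completed by the following rigidity statement: \emph{if $c,c'$ are corners of $\Lambda$ whose positive Neumann ground states have proportional traces on a common face, then $c'=\sigma_i(c)$}. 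Granting this, $\omega_{k+e_i}=\sigma_i(\omega_k)$ for every $k$ and $i$, so $\omega$ is determined by $\omega_0$ alone and is a translate of $\omega^{\ast}$, the $2^d$ choices of $\omega_0\in\mathcal{C}$ corresponding to the $2^d$ translates. In $d=1$ the rigidity is vacuous, since ``faces'' are single points and proportionality is automatic --- in accordance with the non-uniqueness of Theorem~\ref{thm: all 1d minimizers}.

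The rigidity statement is the main obstacle, and it is precisely an instance of the lattice failure of unique continuation that pervades this paper. In the continuum \cite{BLS2} the analogue is obtained from a unique-continuation argument: matching boundary traces, supplemented by the normal data encoded in the Neumann condition, force the two solutions --- hence the two single-site potentials --- to coincide. On $\Z^d$ the second-order difference equation in the direction $e_i$ needs Cauchy data on \emph{two} consecutive layers rather than one layer together with a Neumann-type condition, so this propagation is unavailable, and it must be replaced by an argument using the rectangular product structure of $\Lambda$ together with the sign-definiteness and positivity of $q$. The route I would pursue is a quantitative sharpening of the variational comparisons in the proofs of Theorem~\ref{thm: bubbles} and Lemma~\ref{lemma: minimizers sit in corners}: assuming the traces match but $c'\neq\sigma_i(c)$, one concatenates rescaled copies of $u_{\omega_k}$ and $u_{\omega_{k+e_i}}$ on the doubled box $\Lambda_k\cup\Lambda_{k+e_i}$ --- as in the proof of Theorem~\ref{thm: omega star}, but now for a \emph{non-clustered} pair --- and shows that, after a suitable symmetrization or shift, the resulting Rayleigh quotient lies strictly below $E_{min}$, contradicting minimality. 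Making this precise requires controlling the transversal profile of the ground state near a corner and appears to be the real work; the discrete maximum principle, strict positivity of the ground state, and the strict monotonicity of $E_0(\cdot)$ from Theorem~\ref{thm: bubbles} are the tools I would try to bring to bear.
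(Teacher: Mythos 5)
The statement you have been asked to prove is not proved anywhere in the paper: it is stated as an open conjecture in the concluding section, and the authors explicitly attribute its openness to ``the lack of unique continuation properties of the discrete Schr\"odinger equation'' --- which is exactly the obstruction your own write-up runs into. The preparatory part of your proposal is consistent with the machinery the paper does develop: the reduction from maximizers to minimizers via (\ref{eq:unitaryequiv}); the use of Lemma~\ref{lemma: minimizers sit in corners} to force every $\omega_k$ into $\mathcal{C}$ and to identify $\psi|_{\Lambda_k}$ with a positive multiple of the corner ground state $u_{\omega_k}$; the matching condition $\psi(i)=\psi(j)$ across every edge crossing a tile boundary, obtained by saturating the Neumann splitting inequality of Proposition~\ref{prop: properties}(ii); and the reflection identity $u_{\sigma_i(c)}=u_c\circ\sigma_i$ from $(\mathbf{H1})$ and simplicity of the positive ground state. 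All of that is a correct reduction.

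The gap is that the entire content of the conjecture is concentrated in your italicized ``rigidity statement'' --- that proportionality of the boundary traces of two corner ground states on a common face forces $c'=\sigma_i(c)$ --- and for this you offer only a research program (``a quantitative sharpening of the variational comparisons\dots appears to be the real work''), not an argument. This is precisely the step where the continuum proof in \cite{BLS2} invokes unique continuation: there, a boundary trace together with the Neumann condition constitutes full Cauchy data and propagates into the interior. On $\Z^d$ a face trace is a single layer of data for a second-order difference equation that needs two consecutive layers, so it does not determine the solution, and none of the tools you list (discrete maximum principle, positivity, strict monotonicity of $E_0$ from Theorem~\ref{thm: bubbles}) is known to substitute for it; if one of them did, the authors would not have left the statement as a conjecture. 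Your proposal is therefore an honest reduction of the conjecture to a sharper, still open, rigidity problem --- a useful observation, but not a proof.
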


\end{itemize}

Both of these shortcomings of our results are due to the lack of unique continuation properties of the discrete Schr\"odinger equation, which were used in this context in the continuum in \cite{BLS1, BLS2}.

(ii) It has recently been shown in \cite{KLNS} that the continuum multi-dimensional random displacement model, under symmetry assumptions similar to the ones used here, exhibits Anderson localization at energies near the bottom of the almost sure spectrum. Proving this also in the discrete case remains a challenging problem. A uniqueness result as conjectured in (i) above would allow to prove Lifshitz tails for the IDS with a strategy developed in \cite{KN} (and used in \cite{KLNS}). However, due to the ``discrete'' nature of the randomness we can not expect a Wegner estimate to hold for the discrete displacement model. The new type of multiscale analysis recently developed in \cite{BK} to circumvent this problem for the continuum Bernoulli-Anderson model does not work on the lattice, again due to the lack of unique continuation properties.

(iii) For the one-dimensional Bernoulli displacement model we have no result like Corollary~\ref{cor: lambda less than 2} if $|\lambda|>2$, but our conjecture is

\begin{conjecture}\label{conj: as spec for all lambda}
For any $\lambda \neq 0$,
\begin{equation} \label{eq:conj2}
\Sigma_{\lambda}=\sigma(h_{\omega^{\ast},\lambda})\cup \sigma(h_{\omega^1,\lambda}),
\end{equation}
where $\omega^1$ is the displacement configuration with components $\omega^1_k=1$ for all $k\in \mathbb{Z}$.
\end{conjecture}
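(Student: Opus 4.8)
The plan is to reduce (\ref{eq:conj2}), via the unitary reduction used in the proof of Theorem~\ref{thm: spectral gap}, to a spectral‑exclusion statement for an auxiliary family of one‑dimensional discrete Schr\"odinger operators, and then to attack that statement through transfer matrices. The inclusion ``$\supseteq$'' is immediate: $\omega^{\ast}$ and $\omega^{1}$ are periodic with all components in $\{0,1\}=\supp\mu$, so Theorem~\ref{thm: deterministic spectrum} (using $(\mathbf{H2})$) gives $\sigma(h_{\omega^{\ast},\lambda})\cup\sigma(h_{\omega^{1},\lambda})\subseteq\Sigma_{\lambda}$, a closed set. For ``$\subseteq$'' it suffices, by (\ref{eqn: deterministic spectrum}), to prove $\sigma(h_{\omega,\lambda})\subseteq\sigma(h_{\omega^{\ast},\lambda})\cup\sigma(h_{\omega^{1},\lambda})$ for each periodic $\omega$. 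From the proof of Theorem~\ref{thm: spectral gap}, for $g(E)=(E-\tfrac{\lambda}{2})^{2}-(2+\tfrac{\lambda^{2}}{4})$ the operator $g(h_{\omega,\lambda})$ is unitarily equivalent to $(h_{0}+q_{\omega})\oplus(h_{0}-q_{\omega})$ with $q_{\omega}(k)=\lambda(\omega_{k+1}-\omega_{k})$; all spectra in sight are symmetric about $\lambda/2$ (being preimages under $g$), so by spectral mapping the inclusion for $h_{\omega,\lambda}$ is equivalent to the corresponding one for $g(h_{\omega,\lambda})$. Using $q_{T\omega}=-q_{\omega}$ (see (\ref{eq:flippot})), the bijectivity of the flip map $T$ of (\ref{eq:flipmap}) on periodic configurations, and the Floquet identities $\sigma(h_{0}+q_{\omega^{1}})=\sigma(h_{0})=[-2,2]$ and $\sigma(h_{0}\pm q_{\omega^{\ast}})=\{E\in\R:\lambda^{2}\le E^{2}\le\lambda^{2}+4\}$ (the $2$‑periodic potential $(\lambda,-\lambda)$ has one‑period transfer matrix of trace $E^{2}-\lambda^{2}-2$), (\ref{eq:conj2}) is equivalent to
\[
\sigma(h_{0}+q_{\omega})\subseteq[-2,2]\cup\{E\in\R:\lambda^{2}\le E^{2}\le\lambda^{2}+4\}\qquad\text{for every periodic }\omega .
\]

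The outer containment $\sigma(h_{0}+q_{\omega})\subseteq[-\sqrt{4+\lambda^{2}},\sqrt{4+\lambda^{2}}]$ is Corollary~\ref{cor: upper and lower edges} read through this reduction (equivalently, $h_{0}\pm q_{\omega}\ge-\sqrt{4+\lambda^{2}}$ from the proof of Theorem~\ref{thm: spectral gap}, plus conjugation by the staggering unitary $U$ of (\ref{eq: unitary involution}), which gives $\sigma(h_{0}+q_{\omega})=-\sigma(h_{0}+q_{T\omega})$). When $|\lambda|\le2$ this interval is already the whole right‑hand side above, recovering Corollary~\ref{cor: lambda less than 2}. When $|\lambda|>2$ it remains to exclude spectrum from the two side gaps $(-|\lambda|,-2)$ and $(2,|\lambda|)$; by $\sigma(h_{0}+q_{\omega})=-\sigma(h_{0}+q_{T\omega})$ one of them suffices, say $(2,|\lambda|)$, and after $\lambda\mapsto-\lambda$ we may take $\lambda>2$. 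Since the right‑hand side is closed, it is enough to establish, for each fixed periodic $\omega\in\{0,1\}^{\Z}$, that $\sigma(h_{0}+q_{\omega})\cap(2,\lambda)=\emptyset$. This last statement is the whole content of the conjecture that is not already contained in the theorems above.

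For a periodic $\omega$ this amounts to $|\operatorname{tr}M_{\omega}(E)|>2$ for all $E\in(2,\lambda)$, where $M_{\omega}(E)$ is the one‑period transfer matrix of $h_{0}+q_{\omega}$: a product of the elementary matrices $A_{v}(E)=\bigl(\begin{smallmatrix}v-E&-1\\1&0\end{smallmatrix}\bigr)$, $v\in\{-\lambda,0,\lambda\}$, with the symbol string $(q_{\omega}(k))_{k}$ lying in the shift‑invariant set $X\subseteq\{-\lambda,0,\lambda\}^{\Z}$ of sequences whose nonzero entries strictly alternate in sign and are separated by arbitrarily long (possibly empty) runs of $0$. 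The natural approach is to prove the stronger uniform statement that for each $E\in(2,\lambda)$ the $\mathrm{SL}(2,\R)$‑cocycle generated by the $A_{v}(E)$ over $X$ is uniformly hyperbolic — equivalently, by the standard cone criterion, to exhibit a continuous, uniformly contracted invariant cone field over $X$ — which gives the trace bound for all periodic $\omega$ simultaneously. For $E\in(2,\lambda)$ one has $A_{0}(E)$ hyperbolic with two negative eigenvalues ($|\operatorname{tr}|=E>2$) and $A_{-\lambda}(E)$ \emph{strongly} hyperbolic ($|\operatorname{tr}|=\lambda+E>4$), but $A_{\lambda}(E)$ \emph{elliptic} whenever $\lambda-2<E<\lambda$ (then $|\operatorname{tr}|=\lambda-E<2$), so it preserves no cone on its own. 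One would anchor the cone field at the $-\lambda$ sites — immediately after $A_{-\lambda}(E)$ the state lies in a thin, robust cone about that matrix's expanding direction — and propagate it through a block $A_{0}(E)^{g'}A_{\lambda}(E)A_{0}(E)^{g}$ to the next $-\lambda$ site, using that $A_{\lambda}(E)$ is a bounded elliptic rotation while $A_{-\lambda}(E)$ re‑expands very strongly.

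The main obstacle is exactly the elliptic blocks $A_{\lambda}(E)$. Uniform hyperbolicity fails precisely when some admissible word $A_{0}^{g'}A_{\lambda}A_{0}^{g}$ (or a longer concatenation) rotates the incoming expanding direction onto the contracting direction of the next block, and one must rule this out for all $g,g'\ge0$ and all admissible concatenations at once, with no monotonicity or comparison principle to fall back on. The offending feature — a single‑site well of depth $\lambda$ sitting in free vacuum at an energy $E$ lying between the free band top $2$ and the barrier height $\lambda$ — is exactly what in the continuum would be controlled by unique continuation, which is not available on the lattice; this is the same obstruction responsible for the shortcomings of our $d\ge2$ results discussed above, and is why (\ref{eq:conj2}) is only a conjecture. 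For the same reason the positivity of the Lyapunov exponent away from finitely many critical energies proven for random word models in \cite{DSS}, being weaker than uniform hyperbolicity over the whole hull, does not by itself yield (\ref{eq:conj2}). We nonetheless expect that a sufficiently careful estimate on the products $A_{0}^{g'}A_{\lambda}A_{0}^{g}$, treating the finitely many degenerate values of $E$ separately, would complete the argument.
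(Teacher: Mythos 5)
First, a point of record: the statement you are addressing is Conjecture~\ref{conj: as spec for all lambda}. The paper does not prove it; it proves only the case $|\lambda|\le 2$ (Corollary~\ref{cor: lambda less than 2}) and offers numerical evidence for $|\lambda|>2$. So there is no proof in the paper to compare yours against, and your submission must stand on its own. It does not stand as a proof. Your reduction is sound and genuinely useful: the inclusion $\supseteq$ via Theorem~\ref{thm: deterministic spectrum}, the passage through $g(E)=(E-\lambda/2)^2-(2+\lambda^2/4)$ to the operators $h_0\pm q_{\omega}$, the Floquet identifications $\sigma(h_0+q_{\omega^1})=[-2,2]$ and $\sigma(h_0\pm q_{\omega^{\ast}})=\{E:\lambda^2\le E^2\le\lambda^2+4\}$, and the isolation of the residual task as $\sigma(h_0+q_{\omega})\cap(2,\lambda)=\emptyset$ for every periodic $\omega$ when $\lambda>2$. (One imprecision: an individual $\sigma(h_{\omega,\lambda})$ need not be symmetric about $\lambda/2$; conjugation by the involution (\ref{eq: unitary involution}) together with $V_{T\omega}=\lambda-V_{\omega}$ gives $\sigma(h_{\omega,\lambda})=\lambda-\sigma(h_{T\omega,\lambda})$, so only the union $\sigma(h_{\omega,\lambda})\cup\sigma(h_{T\omega,\lambda})$ is symmetric. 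Since $q_{T\omega}=-q_{\omega}$ by (\ref{eq:flippot}), that union is the full $g$-preimage of $\sigma(h_0+q_{\omega})\cup\sigma(h_0-q_{\omega})$, and your reduction survives after this fix.)

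The genuine gap is that the residual task is the entire content of the conjecture beyond Corollary~\ref{cor: lambda less than 2}, and you do not carry it out. The cone-field/uniform-hyperbolicity argument is only a plan: you correctly identify that $A_{\lambda}(E)$ is elliptic for $E\in(\lambda-2,\lambda)$ and therefore preserves no cone, and you never supply the estimate on the words $A_0^{g'}A_{\lambda}A_0^{g}$ that would show the expanding direction is never rotated onto a contracting one, uniformly over all $g,g'\ge 0$ and all admissible concatenations. Your closing sentence concedes exactly this. Until that estimate (equivalently, uniform hyperbolicity of the cocycle over the hull for every $E$ in the side gaps, or at least the trace bound $|\operatorname{tr}M_{\omega}(E)|>2$ for every periodic $\omega$) is established, the statement remains a conjecture. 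What you have produced is a clean and correct equivalent reformulation in transfer-matrix terms together with an accurate diagnosis of why it is hard -- valuable as a research note, but not a proof.
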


If Conjecture \ref{conj: as spec for all lambda} is true, it would mean that $\Sigma_{\lambda}$ consists of exactly six bands separated by five non-vanishing spectral gaps.  Other than the fact that this would be a natural generalization of what we can prove for $|\lambda|\le 2$ (where Corollary~\ref{cor: lambda less than 2} followed as a consequence of (\ref{eq:conj2})), we have numerical evidence for this conjecture which was provided to us by Jeff Baker. Figure~\ref{conj: as spec for all lambda} shows numerical plots of the density of states (through histograms of the eigenvalues of finite volume restrictions averaged over multiple sample configurations) as well as the IDS for $\lambda=3$. The same pattern arises numerically for other values of $\lambda>2$.

    \begin{figure}[h]
\centering
  \includegraphics[width=\textwidth]{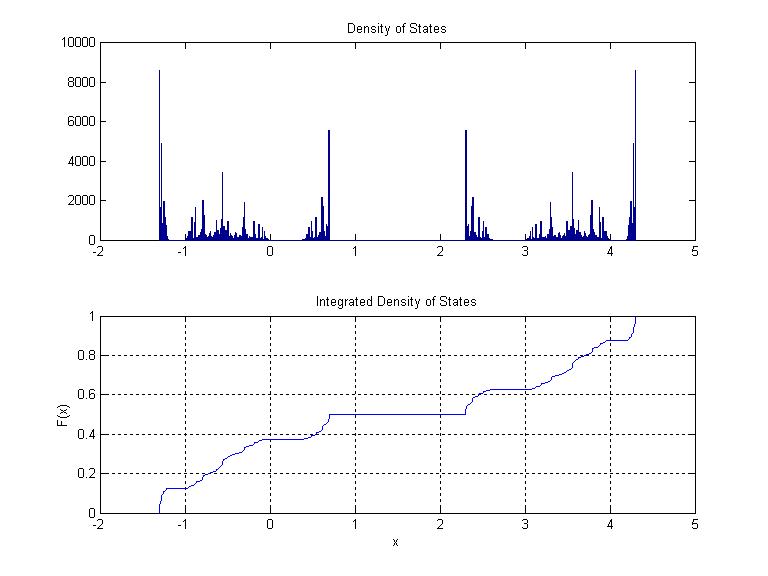}
  \caption{The density of states and integrated density of states for the one-dimensional BDM with $\lambda=3$ and $p=1/2$.}
  \label{conj: as spec for all lambda}
\end{figure}

(iv) Note that Figure~\ref{conj: as spec for all lambda} also illustrates the $1/\log^2$-lower bounds for the growth of the IDS found at the band edges $E_{\pm}(\lambda)$ and $G_{\pm}(\lambda)$ in Theorem \ref{thm: DOS blowup}.  One of the reasons which make this bound interesting is that it was proven in \cite{CraigSimon} that for general ergodic Jacobi matrices the IDS satisfies an upper bound of the form $|N(E)-N(E')| \le C/ |\log|E-E'||$ locally at all energies, i.e.\ that $N(E)$ is {\it log-H\"older continuous}.
It would be interesting to find out if our lower bound is sharp, i.e.\ if there is also an upper bound of type $1/\log^2$.

Known examples showing the optimality of the $1/|\log|$ upper bound were provided in \cite{CraigSimon} and, more recently, in \cite{GanKrueger}. These examples are in terms of quasi-periodic and limit-periodic potentials. As opposed to our example, these are non-random in the sense of having long-range spatial correlations.

(v) The eight additional band edges which would exist for $|\lambda|>2$ if Conjecture~\ref{conj: as spec for all lambda} were to hold show different numerical properties in Figure~\ref{conj: as spec for all lambda}. This is most drastically apparent in the density of states plot, where delta-peaks appear at the edges addressed by Theorem \ref{thm: DOS blowup}, but not at the other eight edges. Based on the numerics we state

\begin{conjecture}\label{conj: lifshits tails}
If $|\lambda|>2$, then the IDS has thin tails (e.g.\ of Lifshits type) at the eight other band edges of $\Sigma_{\lambda}$.
\end{conjecture}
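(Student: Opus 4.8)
A natural route to Conjecture~\ref{conj: lifshits tails} would be to prove, for each of the eight edges $E_0$, a stretched-exponential (Lifshitz-type) bound; treating a lower band edge (the upper ones then follow from the symmetry in Theorem~\ref{thm: IDS symmetry}),
\[
N_\lambda(E_0+\epsilon)-N_\lambda(E_0)\le C\exp\!\big(-c\,\epsilon^{-1/2}\big),\qquad 0<\epsilon<\epsilon_0.
\]
The reason ``thin tails'' should appear here, in contrast with the mere $1/\log^2$ lower bound of Theorem~\ref{thm: DOS blowup} at $E_\pm(\lambda)$ and $G_\pm(\lambda)$, is that at those four edges \emph{every} configuration realizes the extremal Neumann cell-energy at no ``cost'' --- exactly what the random-walk test function there exploited --- whereas each of the eight remaining edges is a genuine fluctuation boundary: it is attained, among periodic configurations, by a single periodic pattern, and a length-$\ell$ deviation from that pattern should cost an energy of order $\ell^{-2}$.

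The first step is to transport the problem to the auxiliary random operators $h_0\pm q_\omega$, $q_\omega(k)=\lambda(\omega_{k+1}-\omega_k)$, from the proof of Theorem~\ref{thm: spectral gap}. With $g(E)=(E-\lambda/2)^2-(2+\lambda^2/4)$ one has $\sigma(h_{\omega,\lambda})=g^{-1}\!\big(\sigma(h_0-q_\omega)\cup\sigma(h_0+q_\omega)\big)$, and $g$ is a local diffeomorphism on $g^{-1}\big(\bigcup_\omega\sigma(h_0\pm q_\omega)\big)$, its only critical point $E=\lambda/2$ being mapped strictly below that set (since $2+\lambda^2/4>\sqrt{4+\lambda^2}$). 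Consequently a Lifshitz-type bound for the common IDS $N_-=N_+$ of $\{h_0\pm q_\omega\}$ at a band edge implies the corresponding bound for $N_\lambda$ at its $g$-preimages. Granting the relevant part of Conjecture~\ref{conj: as spec for all lambda}, the set $\bigcup_\omega\sigma(h_0\pm q_\omega)$ has, for $|\lambda|>2$, exactly three bands; its extreme edges $\pm\sqrt{4+\lambda^2}$ pull back to $G_\pm(\lambda)$ and $E_\pm(\lambda)$, already treated in Theorem~\ref{thm: DOS blowup}, while its four internal edges pull back to the eight edges in question. Two of these internal edges are attained precisely by the constant configurations $\omega\equiv0,1$ (for which $q_\omega\equiv0$, so $h_0+q_\omega=h_0$), the other two precisely by $\omega^*$ (for which $q_{\omega^*}$ is the $2$-periodic pattern $(\dots,-\lambda,\lambda,-\lambda,\lambda,\dots)$). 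So it remains to prove Lifshitz-type upper bounds for $N_\pm$ at those internal band edges of $\{h_0\pm q_\omega\}$ that are attained only by the constant pattern, resp.\ only by the alternating pattern.

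Granting a quantitative version of this uniqueness --- a \emph{defect penalty}: if, inside a block of length $\ell$, every maximal sub-run on which $\omega$ is constant (resp.\ alternating) has length $<m$, then the corresponding truncated operator on that block has no spectrum within $c\,m^{-2}$ of $E_0$ on its band side --- the remainder is the standard Lifshitz-tail argument for an internal band edge. Dirichlet--Neumann bracketing on $[1,L]$ together with subadditivity of eigenvalue counting bounds $\mathbb{E}\big(\#\{k:E_k\in[E_0,E_0+\epsilon]\}\big)$, for the finite-volume restrictions of $h_0\pm q_\omega$ to $[1,L]$, by $O(1)$ times the expected number of constant (resp.\ alternating) runs of length $\ge m\sim\epsilon^{-1/2}$ in $[1,L]$ --- the exponent $1/2$ being forced by the generic quadratic vanishing of the Floquet band at $E_0$. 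In the symmetric Bernoulli model that expected number is at most $2L\cdot2^{-m}$. Dividing by $L$ and letting $L\to\infty$ gives $N_\pm(E_0+\epsilon)-N_\pm(E_0)\le C\exp(-c\,\epsilon^{-1/2})$, and transferring back through $g$ yields the desired bound for $N_\lambda$ at all eight edges.

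The main obstacle is the defect-penalty step. At the ground-state edge the analogous uniqueness rested on positivity of the ground state together with the Allegretto--Piepenbrink and Schnol theorems, as used in Lemma~\ref{lemma: minimizers sit in corners} and Theorem~\ref{thm: all 1d minimizers}; at an internal band edge the relevant Floquet eigenfunction changes sign and those tools are no longer available. I would instead attempt a direct transfer-matrix / monodromy-discriminant analysis: realize each internal edge of $\{h_0\pm q_\omega\}$ as an extremum over periodic configurations of the appropriate branch of the discriminant, show the extremum is attained only at the constant (resp.\ alternating) configuration, and quantify the shift caused by a localized defect --- for instance using a Combes--Thomas estimate to localize a near-edge eigenfunction to a pattern-matching region. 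Since merely singling out $\omega^*$ and the constant configurations as the relevant extremizers is essentially the content of the still-open Conjecture~\ref{conj: as spec for all lambda}, I expect a complete proof of Conjecture~\ref{conj: lifshits tails} to proceed in tandem with a proof of that conjecture; a secondary difficulty is that, these being genuinely two-sided internal edges, the bracketing step needs the refined form for internal edges rather than the naive Neumann/Dirichlet comparison available at the bottom of the spectrum.
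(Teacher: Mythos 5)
The statement you were asked to prove is Conjecture~\ref{conj: lifshits tails}: the paper offers no proof of it, only numerical evidence (the absence of delta-peaks in the density-of-states histogram at the eight edges in question), so there is no argument of the authors' to compare yours against. Judged on its own terms, your proposal is a sensible and internally coherent research program --- in particular, the reduction via $g(E)=(E-\lambda/2)^2-(2+\lambda^2/4)$ to the auxiliary operators $h_0\pm q_\omega$ is exactly the mechanism the paper itself uses at $G_\pm(\lambda)$, and your identification of the four internal edges of $\bigcup_\omega\sigma(h_0\pm q_\omega)$ as $\pm|\lambda|$ and $\pm 2$ (attained by the alternating pattern $q_{\omega^*}$ and by $q_\omega\equiv 0$, respectively) is consistent with Corollary~\ref{cor: lambda less than 2} and Conjecture~\ref{conj: as spec for all lambda}. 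But it is not a proof, and you say so yourself: the two load-bearing steps are explicitly assumed rather than established.

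Concretely, the gaps are these. First, even locating the eight edges depends on Conjecture~\ref{conj: as spec for all lambda}, which is open; without it you do not know that $-\lambda,-2,2,\lambda$ are actually edges of $\bigcup_\omega\sigma(h_0\pm q_\omega)$ rather than interior points. Second, and more seriously, the ``defect penalty'' --- that any configuration deviating from the unique extremal pattern on scale $\ell$ pushes the spectrum away from the edge by $c\,\ell^{-2}$ --- is the entire mathematical content of a Lifshitz-tail bound, and you only hypothesize it. The tools that deliver such statements at the bottom of the spectrum in this paper (positivity of the ground state, Allegretto--Piepenbrink, the multiplicative structure of $\Psi_\omega$ in Lemma~\ref{lemma: minimizers sit in corners} and Theorem~\ref{thm: all 1d minimizers}) all fail at a sign-changing internal band edge, as you correctly note; the transfer-matrix route you gesture at is plausible but entirely unexecuted. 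Third, the bracketing/counting step at a two-sided internal edge is itself delicate (naive Neumann decoupling can create spurious eigenvalues inside the gap on the wrong side of $E_0$), and your one-line appeal to ``the refined form for internal edges'' does not resolve it. Also, a small technical slip: $\sigma(h_{\omega,\lambda})=g^{-1}\bigl(\sigma(h_0-q_\omega)\cup\sigma(h_0+q_\omega)\bigr)$ holds only for configurations whose spectrum is symmetric about $\lambda/2$; for a fixed $\omega$ one only gets containment, and the correct bookkeeping is the eigenvalue-counting identity used in the paper's proof at $G_+(\lambda)$. In short: a reasonable plan, faithful to the paper's machinery, but the conjecture remains open after it.
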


\end{document}